%% file: main.tex
\PassOptionsToPackage{dvipsnames}{xcolor}

\pdfoutput=1
\documentclass[a4paper,UKenglish,cleveref,autoref,thm-restate]{lipics-v2021}
\hideLIPIcs

\input{packages}
\input{macros}

\title{An MSO Framework for Weak-Memory Verification and Robustness}

\author{Giovanna Kobus Conrado}{Aarhus University, Denmark}{kobusgiovanna@gmail.com}{https://orcid.org/0000-0001-9474-6505}{}

\author{Andreas Pavlogiannis}{Aarhus University, Denmark}{pavlogiannis@cs.au.dk}{https://orcid.org/0000-0002-8943-0722}{}

\authorrunning{G. K. Conrado and A. Pavlogiannis}

\ccsdesc[500]{Theory of computation~Concurrency}
\ccsdesc[300]{Theory of computation~Formal languages and automata theory}
\ccsdesc[300]{Software and its engineering~Formal software verification}

\keywords{treewidth, monadic second order logic, reads-from robustness}

\begin{document}
\nolinenumbers
\maketitle

\input{abstract}

\input{introduction}
\input{preliminaries}

\input{tw_mm}
\input{mso_definability}

\input{rf_robustness}

\input{related_work}

\input{conclusion}


\bibliographystyle{plainurl}
\bibliography{bibliography}

\pagebreak

\appendix
\input{app_tw_mm}
\input{app_weak}
\input{app_experiments}
\input{app_mso}
\input{app_robustness}
\input{app_rc20}
\end{document}

%% file: packages.tex
\usepackage[utf8]{inputenc} 
\usepackage[dvipsnames]{xcolor}
\usepackage{amsmath, amsthm}
\usepackage{thmtools}
\usepackage{mathtools}
\usepackage{graphicx}
\usepackage{makecell} 
\usepackage{parskip}
\usepackage{array}
\usepackage{tabularx}
\usepackage{tikz} 
\usepackage{lineno}
\usepackage{csquotes}
\usepackage{thm-restate}
\usepackage{stackengine}
\usepackage{bm}
\usepackage{diagbox}
\usepackage{verbatim}
\usepackage{multirow}
\usepackage{stmaryrd}
\usepackage[inline]{enumitem}
\usepackage{mdframed}
\usepackage{listings}
\usepackage{subcaption}
\usepackage{hyperref}
\usepackage{wrapfig}

\usepackage{pgfplots}

\usepackage[ruled,linesnumbered]{algorithm2e}

\usepackage[capitalise]{cleveref}
\usepackage{todonotes}
\usepackage{booktabs}
\usepackage{pgfplotstable}
\usepackage{colortbl}
\usepackage[group-separator={,}, group-minimum-digits=4, text-series-to-math, propagate-math-font]{siunitx}
\usetikzlibrary {arrows.meta}
\usetikzlibrary{shapes,snakes}

\pgfdeclarelayer{background layer}
\pgfdeclarelayer{foreground layer}
\pgfsetlayers{background layer,main,foreground layer}

%% file: macros.tex
\newcommand\numberthis{\addtocounter{equation}{1}\tag{\theequation}}

\usetikzlibrary{arrows,positioning} 
\tikzset{
>=stealth',
punkt/.style={
       rectangle,
       rounded corners,
       draw=black, very thick,
       text width=6.5em,
       minimum height=2em,
       text centered},
pil/.style={
       ->,
       thick,
       shorten <=2pt,
       shorten >=2pt,},
main/.style={
    draw, rectangle,inner sep=4pt,
    }
}

\usetikzlibrary{automata,positioning,arrows}

\graphicspath{ {./} }

\renewcommand*\descriptionlabel[1]{\hspace\labelsep
  \normalfont{#1}}

\definecolor{mGreen}{rgb}{0,0.6,0}
\definecolor{mGray}{rgb}{0.5,0.5,0.5}
\definecolor{mPurple}{rgb}{0.58,0,0.82}
\definecolor{backgroundColour}{rgb}{0.95,0.95,0.92}
\definecolor{mybluecolor}{rgb}{0.2549019607843137, 0.2117647058823529, 0.9823529411764706}

\lstdefinestyle{CStyle}{
tabsize = 2, 
showstringspaces = false, 
backgroundcolor=\color{backgroundColour},   
numbers = left, 
commentstyle = \color{green}, 
keywordstyle = \color{blue}, 
stringstyle = \color{red}, 
rulecolor = \color{black}, 
basicstyle = \small \ttfamily , 
breaklines = true, 
numberstyle = \tiny,
language=Java
}

\newlist{compactitem}{itemize}{3} 
\setlist[compactitem]{label={\Large \textbullet},nosep,leftmargin=*}
\crefname{compactitemi}{Item}{Items}

\newlist{compactenum}{enumerate}{3} 
\setlist[compactenum]{label=(\arabic*), nosep,leftmargin=*}
\crefname{compactenumi}{Item}{Items}

\makeatletter
\newcommand{\customlabel}[2]{%
   \protected@write \@auxout {}{\string \newlabel {#1}{{#2}{\thepage}{#2}{#1}{}} }%
   \hypertarget{#1}{#2}
}
\makeatother

\newcommand{\mso}{\mathsf{MSO}}
\newcommand{\MSOSemantics}[1]{\llbracket{#1}\rrbracket}

\newcommand{\Prog}{\mathsf{Pseq}}
\newcommand{\ProgAug}{\mathsf{Pseq}^*}
\newcommand{\ConcProg}{\mathsf{P}}
\newcommand{\ConcProgAug}{\mathsf{P}^*}

\newcommand{\var}{\mathsf{v}}

\newcommand{\Events}{E}

\newcommand{\ModDom}{\mathsf{Mod}}
\newcommand{\FenceDom}{\mathsf{F}}
\newcommand{\LabelDom}{\mathsf{Lab}}
\newcommand{\ValueDom}{\mathsf{Val}}
\newcommand{\LocationDom}{\mathsf{Reg}}
\newcommand{\ThreadDom}{\mathsf{Tid}}
\newcommand{\EventDom}{\mathsf{E}}
\newcommand{\ReadDom}{\mathsf{R}}
\newcommand{\UnusedReadDom}{\mathsf{UnR}}
\newcommand{\UnusedReadDomX}{\mathsf{UnR}_x}
\newcommand{\WriteDom}{\mathsf{W}}

\newcommand{\RMWDom}{\mathsf{RMW}}

\newcommand{\mra}{\mathsf{acq}\text{-}\mathsf{rel}}
\newcommand{\mrel}{\mathsf{rel}}
\newcommand{\macq}{\mathsf{acq}}
\newcommand{\mrlx}{\mathsf{rlx}}

\newcommand{\List}{\mathcal{L}}

\newcommand{\Reduces}{\rhd}

\newcommand{\tw}{\mathsf{tw}}

\newcommand{\grid}[1]{\boxplus_{#1}}

\newcommand{\Vertices}{V}
\newcommand{\Edges}{L}
\newcommand{\Bag}{\mathcal{X}}
\newcommand{\Tree}{\mathsf{T}}

\newcommand{\True}{\mathsf{true}}

\newcommand{\LTS}{\mathcal{A}}

\newcommand{\States}{\mathcal{Q}}
\newcommand{\Alphabet}{\Sigma}
\newcommand{\Trans}{\Delta}
\newcommand{\InitState}{\mathsf{init}}

\newcommand{\tuple}[1]{\langle #1 \rangle}
\newcommand{\LTo}[1]{\xrightarrow{\raisebox{-0.5ex}[0ex][0ex]{\footnotesize $#1$} } }
\newcommand{\LTransTo}[1]{\xhookrightarrow{\raisebox{-0.5ex}[0ex][0ex]{\footnotesize $#1$} } }

\newcommand{\Paragraph}[1]{\smallskip\noindent{\bf #1}}
\newcommand{\SubParagraph}[1]{\smallskip\noindent{\em #1}}
\newcommand{\Nats}{\mathbb{N}}

\newcommand{\IntSet}[1]{[#1]}

\newcommand{\op}{\mathsf{op}}
\newcommand{\tid}{\mathsf{tid}}
\newcommand{\event}{e}
\newcommand{\id}{\mathsf{id}}
\newcommand{\llab}{\mathsf{lab}}
\newcommand{\acc}{\mathsf{acc}}
\newcommand{\lloc}{\mathsf{loc}}

\newcommand{\val}{\mathsf{val}}
\newcommand{\rd}{\mathtt{r}}

\newcommand{\wt}{\mathtt{w}}
\newcommand{\fc}{\mathtt{f}}
\newcommand{\rmw}{\mathtt{rmw}}
\newcommand{\ck}{\mathtt{ck}}
\newcommand{\tm}{\mathtt{error}}

\colorlet{colorPO}{brown}
\colorlet{colorSB}{darkgray}
\colorlet{colorRF}{blue}
\colorlet{colorCO}{red!80!black}
\colorlet{colorMO}{red!80!black}
\colorlet{colorOB}{orange}
\colorlet{colorFR}{Brown}
\colorlet{colorECO}{orange}
\colorlet{colorCOM}{magenta}
\colorlet{colorSW}{teal}
\colorlet{colorHB}{green!40!black}
\colorlet{colorPPO}{magenta}
\colorlet{colorRSEQ}{green!40!black}
\colorlet{colorSC}{violet}
\colorlet{colorGW}{brown}
\colorlet{colorPSC}{violet}
\colorlet{colorREL}{olive}
\colorlet{colorSO}{violet}
\colorlet{colorWB}{olive}
\colorlet{colorDOB}{violet}
\colorlet{colorRMW}{brown}
\colorlet{colorVARa}{Fuchsia}
\colorlet{colorVARb}{RedOrange}
\colorlet{colorVARc}{WildStrawberry}
\colorlet{colorDP}{Fuchsia}

\newcommand{\ProjectGraph}[3]{{#1}{\downharpoonright}_{#2}^{{#3}} }

\newcommand{\Lang}[1]{\mathcal{L}(#1)}

\newcommand{\Langrf}[1]{\mathcal{L}^\rf(#1)}

\newcommand{\ExecutionGraphsOf}[2]{\llbracket{#1}\rrbracket_{#2}}
\newcommand{\RFGraphsOf}[2]{\llbracket{#1}\rrbracket_{#2}^{\rf}}
\newcommand{\RedGraphsOf}[2]{\llbracket{#1}\rrbracket_{#2}^{\rf, \text{red}}}
\newcommand{\RFGraphsOfTw}[3]{\llbracket{#1}\rrbracket_{#2}^{\rf, #3}}
\newcommand{\RFGraphsMM}[1]{\llbracket{#1}\rrbracket^{\rf}}
\newcommand{\GraphsMM}[1]{\llbracket{#1}\rrbracket}

\newcommand{\identity}[1]{[#1]}
\newcommand{\seqbef}{\mathsf{\color{colorSB}sb}}
\newcommand{\po}{\mathsf{\color{colorPO}po}}
\newcommand{\ppo}{\mathsf{\color{colorPPO}ppo}}
\newcommand{\pppo}{\mathsf{\color{colorPPO}pppo}}

\newcommand{\rf}{\mathsf{\color{colorRF}rf}}

\newcommand{\mo}{\mathsf{\color{colorMO}mo}}

\newcommand{\moe}{\mathsf{\color{colorMO}moe}}

\newcommand{\fr}{\mathsf{\color{colorFR}fr}}

\newcommand{\hb}{\mathsf{\color{colorHB}hb}}
\newcommand{\hbrc}{\mathsf{\color{colorHB}hb_{\mathsf{RC}}}}

\newcommand{\rfe}{\mathsf{\color{colorRF}rfe}}

\newcommand{\fre}{\mathsf{\color{colorFR}fre}}

\newcommand{\sw}{\mathsf{\color{colorSW}sw}}

\newcommand\locx[1]{{{#1}_{x}}}
\newcommand{\acyclic}{\mathsf{acy}}
\newcommand{\irreflexive}{\mathsf{irr}}

\newcommand{\mm}{\mathsf{MM}}
\newcommand{\scmm}{\mathsf{SC}}
\newcommand{\wramm}{\mathsf{WRA}}
\newcommand{\ramm}{\mathsf{RA}}
\newcommand{\rlxmm}{\mathsf{Relaxed}}

\newcommand{\sramm}{\mathsf{SRA}}

\newcommand{\rcmm}{\mathsf{RC20}}
\newcommand{\tsomm}{\mathsf{TSO}}
\newcommand{\psomm}{\mathsf{PSO}}

\newcommand{\mmorder}{\preccurlyeq}
\newcommand{\rmwfree}[1]{#1^{\setminus\rmw}}

\newcommand{\msosw}{\Phi_{\sw}}
\newcommand{\msoprogsb}{\Phi_{\ConcProg}}
\newcommand{\msosb}{\Phi_{\seqbef}}
\newcommand{\msorf}{\Phi_{\rf}}
\newcommand{\msorfaug}{\Phi_{\rf}^*}
\newcommand{\msohb}{\varphi_{\hb}}
\newcommand{\msohbrc}{\varphi_{\hbrc}}
\newcommand{\msohbirr}{\Phi_{\ref{eq:hbirr}}}
\newcommand{\msohbirrx}{\Phi_{\ref{eq:hbirr}}^x}
\newcommand{\msopmosra}{\varphi_{\mo}^{\sramm}}
\newcommand{\msopmora}{\varphi_{\mo}^{\ramm}}
\newcommand{\msopmorc}{\varphi_{\mo}^{\rcmm}}
\newcommand{\msotriplet}{\varphi_{\text{conf}}}

\newcommand{\msorcwcra}{\Phi_{\text{(coherence)}}^{\ramm}}
\newcommand{\msorcwcrc}{\Phi_{\text{(coherence)}}^{\rcmm}}
\newcommand{\msorcwcrlx}{\Phi_{\text{(coherence)}}^{\rlxmm}}
\newcommand{\msorcswcra}{\Phi_{\text{(coherence)}}^{\rmwfree{\sramm}}}

\newcommand{\msora}{\Phi_{\ramm}}
\newcommand{\msorc}{\Phi_{\rcmm}}
\newcommand{\msowra}{\Phi_{\wramm}}

\newcommand{\msowrc}{\Phi_{\ref{eq:wrc}}}

\newcommand{\msowat}{\Phi_{\ref{eq:wat}}}
\newcommand{\msosranormw}{\Phi_{\rmwfree{\sramm}}}

\newcommand{\msorlx}{\Phi_{\rlxmm}}

\newcommand{\OVInstance}{\mathcal{I}_{\text{OV}}}

\tikzset{
   every path/.style={>=stealth},
   po/.style={->,color=colorPO, thick},
   seqbef/.style={->,color=colorSB, thick},
   ppo/.style={->,color=colorPPO, thick},
   sw/.style={->,color=colorSW, thick},
   rf/.style={->,color=colorRF,dashed, thick},
   mrf/.style={->,color=colorRF,dashed, thick},   
   urf/.style={->,color=colorRF, thick},
   fr/.style={->,color=colorFR,dashed, thick},
   hb/.style={->,color=colorHB,thick, thick},
   mo/.style={->,color=colorMO,dotted, very thick},
   dob/.style={->,color=colorDOB, very thick},
   ob/.style={->,color=colorOB, very thick},
   rmw/.style={->,color=colorRMW,thick, thick},
   rseq/.style={->,color=colorRSEQ,thick,dotted, thick},
   com/.style={->,color=colorCOM,thick, thick},
}

\tikzset{
    ncbar angle/.initial=90,
    ncbar/.style={
        to path=(\tikztostart)
        -- ($(\tikztostart)!#1!\pgfkeysvalueof{/tikz/ncbar angle}:(\tikztotarget)$)
        -- ($(\tikztotarget)!($(\tikztostart)!#1!\pgfkeysvalueof{/tikz/ncbar angle}:(\tikztotarget)$)!\pgfkeysvalueof{/tikz/ncbar angle}:(\tikztostart)$)
        -- (\tikztotarget)
    },
    ncbar/.default=0.5cm,
}

\tikzset{round left paren/.style={ncbar=0.5cm,out=110,in=-110}}
\tikzset{round right paren/.style={ncbar=0.5cm,out=70,in=-70}}

%% file: abstract.tex
\begin{abstract}

Memory models are formal specifications of concurrent-program executions, accounting for weak behaviors introduced by compiler and architectural optimizations.
The increase of their number and complexity has spawned efforts for uniform verification across whole classes of models, by axiomatizing the models in an adequate metatheory that admits a uniform treatment.
In this work, we formally study Monadic Second-Order logic (MSO) as a metatheory
for weak memory, by proving results on the treewidth and MSO-expressibility of various popular weak-memory models, as this combination allows us to uniformly tackle several verification problems.
In summary, our results are as follows.

First, we prove that executions under Sequential Consistency ($\scmm$) have bounded treewidth, while already those under Total Store Order ($\tsomm$) do not.
Second, we prove that a broad range of models, including Release/Acquire and the full RC20, are MSO-axiomatizable, while others, such as Strong Release/Acquire and $\tsomm$, are not, unless the Orthogonal Vectors problem -- which requires quadratic time under SETH -- can be solved in linear time.
Finally, we introduce the notion of \emph{reads-from robustness}, as an extension to recent work on coarse robustness criteria.
We show that our treewidth bounds (both upper and lower) have far-reaching algorithmic implications for any of our MSO-axiomatizable models $\mm$:~there is an algorithm that, for every program $\ConcProg$, either verifies $\ConcProg$ under $\mm$ or reports that $\ConcProg$ is not reads-from robust against $\mm$.
Overall, our results establish a rich and versatile theoretical framework for weak-memory verification and robustness.
\end{abstract}

%% file: introduction.tex
\section{Introduction}\label{SEC:INTRO}

The traditional sequential view ($\scmm$) of concurrent systems, introduced by Lamport in~\cite{Lamport1979}, does not reflect how programs execute on modern software and hardware platforms.
Compiler optimizations, cache hierarchies, instruction prefetching and speculative execution are only some of the means that a concurrent program may deviate from its sequential behaviors, introducing weak data consistency between the executing threads.
\emph{Weak memory models} are formal specifications of all subtle behavior that a program may exhibit in such settings, and are becoming a standard approach to rigorous concurrent programming, both for software~\cite{Batty2011,Vafeiadis2015,Lahav2016,Lahav2017,Kang2017,Dolan2018,Alglave2018,Lahav2022,Moiseenko2025} and hardware~\cite{Sewell2010,Alglave2012,Pulte2017,Podkopaev2019}.

Since weak behaviors can be tricky to predict, or even understand for non-experts,
there has been considerable work on developing program-analysis methods to support debugging, verification and testing of concurrent programs under weak memory, along various directions.
These include automated verification~\cite{Atig2010,Abdulla19,Abdulla2021,Lahav2022},
various notions of program robustness (e.g., data-race freedom, execution-graph robustness)~\cite{Gharachorloo1992,Owens2010,Bouajjani2011,Lahav2019},
stateless model checking~\cite{Abdulla2015,Abdulla2018,Bui2021,Kokologiannakis21},
testing~\cite{Luo2021,Gao2023,Tunc2023,Chakraborty2024,Margalit2025}, and
program logics~\cite{Vafeiadis2013,Lahav2015,Kaiser2017,Hammond2024}.
Since the behavior of a concurrent program is generally memory-model dependent,
virtually each such method is specific to a memory model.

The abundance and complexity of memory models has spawned an interest for unified theories that treat whole classes of memory models in a uniform way.
Recent developments enable bounded model checking and memory-model comparison for a range of memory models~\cite{Wickerson2017,Haas2022,Kokologiannakis2022,Kokologiannakis2023}.
Although promising, the full potential of unified theories for weak memory is still under exploration.
For example, there is no unified approach to reasoning about unbounded executions, performing program differentiation, or deciding program robustness against a range of memory models.

In the setting of message-passing concurrency, a unified theory based on Monadic Second Order Logic (MSO)~\cite{Courcelle1997} has been highly effective.
Indeed, most common message-passing protocols have been proven expressible in MSO~\cite{DiGiusto2023}, paving the road for utilizing Courcelle's celebrated metatheorem~\cite{Courcelle1990} for a variety of tasks, such as under-approximate verification over unbounded executions (by bounding some width parameter instead)~\cite{Cyriac2012,Cyriac2014} and deciding synchronizability~\cite{Bollig2021a}.
Perhaps surprisingly, however, the connections of MSO to weak memory have thus far remained unexplored.
\emph{Which memory models are axiomatizable in MSO and what common verification tasks can this be used for?}
\emph{What is the treewidth of weak-memory executions?}
\emph{Are there MSO-definable robustness criteria that are less sensitive than existing ones based on execution-graphs?}
In this work, we make the first steps to utilizing MSO as a unifying theory for weak memory, proving powerful theoretical capabilities, but also limitations.

\input{contributions}

%% file: contributions.tex
\subsection{Our Contributions}\label{SUBSEC:CONTRIBUTIONS}

Throughout this work, we represent program executions using \emph{reads-from ($\rf$-) graphs} of the form  $G=\tuple{\Events, \seqbef, \rf}$, which are analogous to Message Sequence Charts (MSCs) in message-passing systems~\cite{DiGiusto2023}.
Here, the events in $\Events$ act as vertices of $G$, while the \emph{sequenced-before} relation $\seqbef$ and the reads-from relation $\rf$ are binary relations over $\Events$.
In contrast to the more common \emph{execution graphs} $\tuple{\Events, \seqbef, \rf, \mo}$, $\rf$-graphs represent only program-observable behavior ---in particular, how each thread executes $(\seqbef)$ and which write a read obtains its value from $(\rf)$--- and abstract away the low-level architectural detail captured by the \emph{modification order} $\mo$, i.e., the order in which writes appear in the shared memory.

We make the following contributions.

\Paragraph{1.~The treewidth of weak memory.}
First, we study the treewidth of executions under various memory models.
We prove that executions under Sequential Consistency ($\scmm$), i.e., those that do not exhibit weak behavior, have bounded treewidth, but executions under any model as weak as Total Store Order ($\tsomm$) have unbounded treewidth.
We show that weak executions may still enjoy bounded treewidth, and present experiments that indicate that they tend to stay within the treewidth they would have under $\scmm$.

\Paragraph{2.~The MSO of weak memory.}
We then ask the question:~\emph{which popular weak-memory models can be axiomatized in MSO?}
We prove that this is the case for
Release/Acquire ($\ramm$)~\cite{Batty2011},
$\rlxmm$~\cite{Batty2011},
Weak Release/Acquire ($\wramm$, aka Causal Consistency~\cite{Bouajjani2017})~\cite{Lahav2022}, as well as 
the Read-Modify-Write (RMW)-free fragment of Strong Release/Acquire ($\sramm$, aka Causal Convergence~\cite{Bouajjani2017})~\cite{Lahav2016}.
We further remark that our MSO axiomatization extends to the full $\rcmm$ memory model for C/C++~\cite{Margalit2021}, although we defer its treatment to \cref{SEC:APP_RC20} due to space constraints.
The MSO axiomatization of each of the aforementioned memory models enables us to equip Courcelle's theorem~\cite{Courcelle1990} to tackle a range of fundamental algorithmic questions, in a \emph{unified way}.
We also prove that $\scmm$, $\tsomm$, Partial Store Order ($\psomm$), and the full $\sramm$ (with RMWs) are \emph{not axiomatizable in MSO}, unless the Orthogonal Vectors problem -- which requires quadratic time under SETH -- can be solved in linear time.

\Paragraph{3.~Reads-from robustness.}
We introduce the notion of \emph{reads-from ($\rf$-) robustness}.
This is a relaxation of standard execution-graph robustness~\cite{Bouajjani2011,Lahav2019}, and a fitting next step in the recent line of more permissive robustness criteria~\cite{Lahav2019,Margalit2021,Nagar2024}.
A program $\ConcProg$ is $\rf$-robust against a memory model $\mm$ if the set of $\rf$-graphs of $\ConcProg$ is the same under $\mm$ and $\scmm$.
We show a fruitful algorithmic interplay between $\rf$-robustness and verification:~for every MSO-definable memory model $\mm$ that satisfies a natural contiguity property (satisfied by all models in (2) except $\rlxmm$), there is an algorithm that either solves verification for $\ConcProg$ under $\mm$, or reports that $\ConcProg$ is not $\rf$-robust against $\mm$.
For memory models which have undecidable reachability, like $\ramm$, this result implies that the hard instances can be algorithmically classified as non-$\rf$-robust.
To our knowledge, $\rf$-robustness is the first computationally useful robustness notion that reasons purely about program-observable behaviors, disentangling them from lower-level architectural details involving the order in which writes appear on the shared memory.
We further extend the above to \emph{observational $\rf$-robustness}, which is a relaxation in a similar spirit to observational execution-graph robustness~\cite{Margalit2021}, that effectively filters out robustness violations due to read operations that are unused by the program.

%% file: preliminaries.tex
\section{Concurrent Programs and Memory Models}\label{SEC:PRELIM}

In this section we develop general notation, introduce the memory models we consider in this work, and define program semantics in an automata-theoretic way.

\Paragraph{General notation.} 
Given an integer $i$, we define $\IntSet{i}=\{1,\cdots, i\}$. 
For a binary relation $R$, we denote the reflexive, transitive, reflexive-transitive closures and inverse relations of $R$ as $R^?$, $R^+$, $R^*$, and $R^{-1}$, respectively. 
We occasionally call $(\event_1, \event_2)\in R$ an edge, and write it as $\event_1\LTo{R}\event_2$.
The composition of $R_1$ and $R_2$ is denoted by $R_1;R_2$. 
We write $\acyclic(R)$ and $\irreflexive(R)$ to denote that a relation is acyclic and irreflexive, respectively.
Given a set $S$, we write $\identity{S}$ for the identity relation on $S$.
We use $x \triangleq y$ to denote that $x$ is defined to be equal to $y$.

\Paragraph{Program domains.}
We consider concurrent programs consisting of threads over a finite domain $\ThreadDom$,
communicating over a finite set of shared registers $\LocationDom$, 
which store values over a finite value domain $\ValueDom$.

\input{executions}
\input{memory_models}

\input{programs}

%% file: executions.tex
\subsection{Executions}\label{SUBSEC:PRELIM_EG}

Here we set up standard notation for representing concurrent executions.

\Paragraph{Labels and events.}
A \emph{(event) label}\footnote{In related literature, events sometimes also carry an \emph{access mode}, but this will be clear from the context in our setting.} is either
a \emph{read label} $\rd(t,  x, v)$,
a \emph{write label} $\wt(t, x, v)$, or
a \emph{read-modify-write (RMW) label} $\rmw(t, x, v_{\rd}, v_{\wt})$, where 
$t\in \ThreadDom$ is a thread id,
$x\in\LocationDom$ is a shared register, and
$v, v_{\rd}, v_{\wt}\in \ValueDom$ are values.
We let $\LabelDom$ be the domain of labels.
An \emph{event} $e$ represents an execution step of the program, and is defined as $\event=\tuple{\id, \llab}$, where $\id\in \Nats$ is an identifier and $\llab\in \LabelDom$ is a label.
We write $\llab(e)$ for the label of $e$, $\tid(\event)$ and $\lloc(\event)$ for the thread id and register of $e$, while $\op(e)\in \{\rd, \wt, \rmw\}$ returns the operation of $e$.
If $\op(\event)\in \{ \rd, \rmw \}$, we write $\val_{\rd}(\event)$ for the value read by $\event$, and if $\op(\event)\in \{ \wt, \rmw \}$, we write $\val_{\wt}(\event)$ for the value written by $\event$.
We often identify events by their label when their id is not important, e.g. we may refer to events $\wt(t,x,v)$.
We occasionally ignore the value of an event when it is clear from the context or not important, e.g., we may refer to events $\wt(t,x)$, or even $\wt(x)$.
We let $\EventDom$ be the domain of events, and further distinguish the sets of 
read events $\ReadDom=\{ \event\in \EventDom\colon \op(\event)=\rd \}$, 
write events $\WriteDom=\{ \event\in \EventDom\colon \op(\event)=\wt \}$, and
RMW events $\RMWDom=\{ \event\in \EventDom\colon \op(\event)=\rmw \}$.
Given a set of events $X$ and a register $x\in \LocationDom$, we let $\locx{X}=\{e \in X\colon \lloc(\event)=x\}$, e.g., $\locx{\WriteDom}$ denotes all writes on register $x$.
Similarly, given a thread $t\in \ThreadDom$, we let $X^t=\{\event\in X\colon \tid(\event)=t \}$, e.g., $\ReadDom^t$ denotes all reads of thread $t$.
We extend the sub/super script notation to binary relations $R$ over $\EventDom$, i.e., $\locx{R}=[\locx{\EventDom}]; R ; [\locx{\EventDom}]$ and $R^t= [\EventDom^t]; R; [\EventDom^t]$.

\Paragraph{Execution graphs.}
In the context of weak memory, (concrete) program executions are often represented as execution graphs~\cite{Lahav2017} (aka candidate executions~\cite{Batty2011}).
An \emph{execution graph} is a tuple $G=\tuple{\Events, \seqbef, \rf, \mo}$, where 
$\Events\subseteq \EventDom$ is a finite set of (distinct) events,
and $\seqbef$, $\rf$, and $\mo$ are binary relations over $\Events$, satisfying the following conditions.
\begin{compactitem}
\item The \emph{sequenced-before order} $\seqbef$ records the order of events executed in each thread.
The $\seqbef$ relation is \emph{not} transitive, but instead relates each event to its immediate successor.
In particular, letting the \emph{program order} $\po \triangleq \seqbef^+$, 
we require that
\begin{enumerate*}[label=(\roman*)]
\item $\po^t$ is a total order for each $t\in \ThreadDom$, and
\item $\seqbef$ coincides with the (unique) transitive reduction of $\po$.
\end{enumerate*}
\item The \emph{reads-from relation} $\rf \subseteq \bigcup_{x\in \LocationDom}(\locx{\WriteDom} \cup \locx{\RMWDom}) \times (\locx{\ReadDom} \cup \locx{\RMWDom})$ relates a write/RMW event to a read/RMW event, and indicates that the latter reads its value from the former.
The values of the related events must match, i.e., for each 
$\event_1\LTo{\rf}\event_2$, we have $\val_{\wt}(e_1)=\val_{\rd}(e_2)$, 
and $\rf^{-1}$ must be a function, i.e.,
\begin{enumerate*}[label=(\roman*)]
\item if $\event_1\LTo{\rf}\event$ and $\event_2\LTo{\rf}\event$ then $e_1=e_2$, and
\item for all $e\in \Events\cap (\ReadDom\cup \RMWDom)$, there exists some $e'\in \Events$ such that $\event'\LTo{\rf}\event$.
\end{enumerate*}
\item The \emph{modification order} $\mo\subseteq \bigcup_{x\in \LocationDom} (\locx{\WriteDom} \cup \locx{\RMWDom}) \times (\locx{\WriteDom} \cup \locx{\RMWDom})$ intuitively captures the order in which the write events on each register take effect, i.e., $\mo_x$ is total for each $x\in \LocationDom$.
\end{compactitem}
We often write $G.\Events$, $G.\seqbef$, $G.\rf$ and $G.\mo$ for the respective components of $G$, or simply $\Events$, $\seqbef$, $\rf$ and $\mo$ when $G$ is clear from the context.
The requirement that every read event reads from some write event in $\Events$ reflects the assumption that registers hold no initial value.

\Paragraph{Reads-from and sequenced-before graphs.}
In our setting, we primarily represent program executions using abstractions of execution graphs.
In particular, a \emph{reads-from graph} (aka a partial execution~\cite{Tunc2023}) is a tuple $G=\tuple{\Events, \seqbef, \rf}$, i.e., it lacks the $\mo$ component of an execution graph.
Going one step further, a \emph{sequenced-before graph} (aka an abstract execution~\cite{Chakraborty2024}) is a tuple $G=\tuple{\Events, \seqbef}$, i.e., it lacks both the $\mo$ and $\rf$ components of an execution graph.

%% file: memory_models.tex
\subsection{Memory Models}\label{SUBSEC:PRELIM_MM}

As per standard in literature, we define memory models using a declarative approach. 
Each memory model is defined as a collection of axioms that the underlying execution graph $G$ must satisfy.
These axioms are phrased in terms of the relations of $G$, as well as some derived relations, defined below.

\Paragraph{Derived relations.}
Besides $\po$, the various consistency axioms use the following derived relations.
{
\footnotesize
\begin{center}
$\begin{array}{c}
\fr \triangleq \rf^{-1};\mo \qquad \hb \triangleq (\po \cup \rf)^{+} \qquad \rfe \triangleq \rf \setminus \po \qquad \moe \triangleq \mo \setminus \po\\[4pt]
\fre \triangleq \fr \setminus \po \qquad \ppo \triangleq \po \setminus (\WriteDom \times \ReadDom) \qquad \pppo \triangleq \ppo \setminus  (\WriteDom_x \times \WriteDom_y)_{x\neq y}
\end{array}$
\end{center}
}
The \emph{from-reads} relation $\fr$ orders a read/RMW before all the writes/RMWs that executed after its own writer.
The \emph{happens-before} relation $\hb$ captures causality chains in program execution.
The \emph{external relations} $\rfe$, $\moe$ and $\fre$ exclude same-thread events from the respective relations.
Finally the \emph{preserved program order} $\ppo$ and \emph{partially preserved program order} $\pppo$ exclude write-read and write-read/write-write orderings from $\po$.
For simplicity, we disregard RMWs in $\ppo$ and $\pppo$, as we only reason about the RMW-free fragments of the models they appear in ($\tsomm$ and $\psomm$).

\input{tables/axioms}
\input{tables/models}

\Paragraph{Memory models.}
In the scope of this work, a memory model $\mm$ is defined declaratively as a collection of consistency axioms, the full set of which is shown in \cref{tab:axioms}.
We use the notation $\mm^{\setminus\rmw}$ to refer to the RMW-free fragment of $\mm$.
At a high level, each axiom declares that a certain relation is acyclic or irreflexive, except for \ref{eq:wat} which requires that no two RMWs read from the same event.
\cref{tab:main_models} defines the memory models considered in this work.
$\scmm$ is the standard model of Sequential Consistency~\cite{Lamport1979},
while $\tsomm$ is the Total Store Order model, made popular by the x86 and SPARC architectures~\cite{Sewell2010},
and $\psomm$ is a further weakening known as Partial Store Order.
We only define their RMW-free fragments here as we only establish hardness results for them that already hold without RMWs.
$\ramm$ and $\rlxmm$ refer to the fragments of C11 using release/acquire and relaxed atomics, respectively~\cite{Lahav2019},
while $\sramm$ (Strong RA) and $\wramm$ (Weak RA) are stronger and weaker variants of $\ramm$, respectively~\cite{Lahav2022}, aka Causal Convergence and Causal Consistency~\cite{Bouajjani2017}.

\input{figures/mm_examples}

\Paragraph{Consistency.}
Given a memory model $\mm$ and an execution graph $G=\tuple{\Events, \seqbef, \rf, \mo}$, we say that $G$ is \emph{consistent in $\mm$}, denoted $G\models \mm$, if $G$ satisfies the consistency axioms of $\mm$.
This notion naturally extends to $\rf$-graphs, i.e., for an $\rf$-graph $G=\tuple{\Events, \seqbef, \rf}$, we have $G\models \mm$ if there exists a modification order $\mo$ such that resulting execution graph $G'=\tuple{\Events, \seqbef, \rf, \mo}$ yields $G'\models \mm$.
We let $\GraphsMM{\mm}=\{ G=\tuple{\Events, \seqbef, \rf, \mo}\colon G$ is an execution graph and  $G\models \mm \}$ and $\RFGraphsMM{\mm}=\{ G=\tuple{\Events, \seqbef, \rf}\colon  G \text{ is an $\rf$-graph and } G\models \mm \}$ be the sets of $\mm$-consistent execution graphs and $\rf$-graphs, respectively.
Note that, although $\ThreadDom$, $\LocationDom$ and $\ValueDom$ are fixed, $\RFGraphsMM{\mm}$ is infinite in general.
Given two memory models $\mm_1$, $\mm_2$, we say that $\mm_1$ is \emph{stronger than} $\mm_2$, written $\mm_1\mmorder\mm_2$, if $\GraphsMM{\mm_1}\subseteq\GraphsMM{\mm_2}$.
For example, $\scmm \mmorder \tsomm \mmorder \sramm \mmorder \ramm \mmorder \{\rlxmm, \wramm \}$.
\cref{fig:mm_examples} shows examples of execution graphs consistent in different memory models.

%% file: tables/axioms.tex
\begin{table}
\caption{\label{tab:axioms}
Consistency axioms.
}
\footnotesize
\begin{tabular}{lr|lr}
$\acyclic(\po \cup \rf \cup \mo \cup \fr)$ & \customlabel{eq:sc}{(sc)}  & 
$\irreflexive((\hb\cup \mo)^+)$ & \customlabel{eq:swc}{(strong-write-coherence)} \\
$\acyclic(\po_x \cup \rf_x \cup \mo_x \cup \fr_x)$ & \customlabel{eq:scloc}{(sc-per-loc)} & 
$\irreflexive(\hb)$ & \customlabel{eq:hbirr}{(irr-hb)} \\
$\acyclic(\ppo \cup \rfe \cup \moe \cup \fre)$ & \customlabel{eq:sce}{(external-hb)} & 
$\irreflexive(\mo;\hb)$ &  \customlabel{eq:wc}{(write-coherence)}\\
$\acyclic(\pppo \cup \rfe \cup \moe \cup \fre)$ &  \customlabel{eq:psce}{(pso-external-hb)} &
$\irreflexive(\mo;\hb;\rf^{-1})$ & \customlabel{eq:rc}{(read-coherence)}\\
$\irreflexive(\hb_l;[\WriteDom];\hb;\rf^{-1})$ & \customlabel{eq:wrc}{(weak-read-coherence)} &
$\irreflexive(\mo;\mo;\rf^{-1})$ & \customlabel{eq:at}{(atomicity)}\\
\multicolumn{4}{c}{
$\identity{\RMWDom};\rf^{-1};\rf;\identity{\RMWDom} = \identity{\RMWDom} $ \qquad \customlabel{eq:wat}{(weak-atomicity)}
}\\
\end{tabular}
\end{table}

%% file: tables/models.tex
\begin{table}
\caption{\label{tab:main_models}
Weak-memory models we consider in this work.
$\rmwfree{\mm}$ denotes the RMW-free fragment of $\mm$.
}
\centering
\footnotesize
\setlength\tabcolsep{10pt}
\begin{tabular}{lr|lr}
$\scmm$ & \ref{eq:sc} & $\wramm$ & \makecell{\ref{eq:hbirr}, \ref{eq:wrc},\\ \ref{eq:wat}} \\
\hline
$\rmwfree{\tsomm}$ & \ref{eq:scloc}, \ref{eq:sce}  & $\ramm$ & \makecell{\ref{eq:hbirr}, \ref{eq:wc},\\ \ref{eq:rc}, \ref{eq:at}}\\
\hline
$\rmwfree{\psomm}$ & \ref{eq:scloc}, \ref{eq:psce} & $\sramm$ & \makecell{\ref{eq:swc},\\ \ref{eq:rc}, \ref{eq:at}} \\
\hline
$\rlxmm$ & \ref{eq:scloc} && \\
\end{tabular}
\end{table}

%% file: figures/mm_examples.tex
\begin{figure}

\def\xstep{0.4}
\def\ystep{0.4}
\def\scaleboxvalue{0.9}
\begin{subfigure}[b]{0.15\textwidth}
\scalebox{\scaleboxvalue}{
\begin{tikzpicture}[ line width=1pt, node distance = 1cm and 1.2cm] 

            \node (wy1) {\textcolor{colorVARb}{$\wt(y)$}};
            \node[below of = wy1] (rx0){\textcolor{colorVARa}{$\rd(x)$}};

            \node[right of = wy1] (wx1) {\textcolor{colorVARa}{$\wt(x)$}};
            \node[above of = wx1] (wx0) {\textcolor{colorVARa}{$\wt(x)$}};
            \node[below of = wx1] (ry0) {\textcolor{colorVARb}{$\rd(y)$}};

            \draw [seqbef] (wx0) to (wx1);
            \draw [seqbef] (wy1) to node [left=0pt]{$\seqbef$} (rx0);
            \draw [seqbef] (wx1) to (ry0);
            \draw [rf] (wx1) to (rx0);
            \draw [rf] (wy1) to node [near end, above right=-1pt]{$\rf$}(ry0);
            \draw [mo] (wx0) to[bend left] node [right=0pt]{$\mo$} (wx1);
\end{tikzpicture} 
}  
\caption{$\scmm$}
\label{subfig:aa}
\end{subfigure}
\hfill
\begin{subfigure}[b]{0.15\textwidth}

\scalebox{\scaleboxvalue}{
\begin{tikzpicture}[ line width=1pt, node distance = 1cm and 1.2cm] 
            \node[] (wy0) at (1,1.25){\textcolor{colorVARa}{$\wt(x)$}};
            \node[right of = wy0] (wx0) {\textcolor{colorVARb}{$\wt(y)$}}; 

            \node[below of = wx0] (wx1) {\textcolor{colorVARb}{$\wt(y)$}};
            \node[below of = wx1] (ry0) {\textcolor{colorVARa}{$\rd(x)$}}; 

            \node[below of = wy0] (wy1) {\textcolor{colorVARa}{$\wt(x)$}};
            \node[below of = wy1] (rx0){\textcolor{colorVARb}{$\rd(y)$}};

            \draw [seqbef] (wy0) to (wy1);
            \draw [seqbef] (wx0) to (wx1);
            \draw [seqbef] (wy1) to (rx0);
            \draw [seqbef] (wx1) to (ry0);
            \draw [rf] (wy0) to (ry0);
            \draw [rf] (wx0) to (rx0);
            \draw [mo] (wy0) to[bend right] (wy1);
            \draw [mo] (wx0) to[bend left] (wx1);

\end{tikzpicture} 
}  
\caption{$\tsomm$}
\label{subfig:bb}
\end{subfigure}
\hfill
\begin{subfigure}[b]{0.3\textwidth}

\scalebox{\scaleboxvalue}{
\begin{tikzpicture}[ line width=1pt, node distance = 1.2cm and 0.2cm] 

            \node(wx0) {\textcolor{colorVARa}{$\wt(x)$}};
            \node[below of = wx0] (wx1) {\textcolor{colorVARa}{$\wt(x)$}};

            \node(rx1)[right = of wx0] {\textcolor{colorVARa}{$\rd(x)$}};
            \node[below of = rx1] (ry0) {\textcolor{colorVARb}{$\rd(y)$}};

            \node(ry1)[right = of rx1] {\textcolor{colorVARb}{$\rd(y)$}};
            \node[below of = ry1] (rx0) {\textcolor{colorVARa}{$\rd(x)$}};

            \node(wy0)[right = of ry1] {\textcolor{colorVARb}{$\wt(y)$}};
            \node[below of = wy0] (wy1) {\textcolor{colorVARb}{$\wt(y)$}};
            
            \draw [seqbef] (wx0) to (wx1);
            \draw [seqbef] (rx1) to (ry0);
            \draw [seqbef] (ry1) to (rx0);
            \draw [seqbef] (wy0) to (wy1);
            \draw [rf] (wx1) to (rx1.180);
            \draw [rf] (wx0) to (rx0.130);
            \draw [rf] (wy1) to (ry1.0);
            \draw [rf] (wy0) to (ry0.50);
            \draw [mo] (wx0) to[bend right] (wx1);
            \draw [mo] (wy0) to[bend left] (wy1);
\end{tikzpicture}
}
\caption{$\sramm$}
\label{subfig:cc}
\end{subfigure}
\hfill
\begin{subfigure}[b]{0.15\textwidth}

\scalebox{\scaleboxvalue}{
\begin{tikzpicture}[ line width=1pt, node distance = 1cm and 1.2cm] 

            \node(wy1) {\textcolor{colorVARb}{$\wt(y)$}};
            \node[below of = wy1] (wx2) {\textcolor{colorVARa}{$\wt(x)$}};
            \node[below of = wx2] (rx0){\textcolor{colorVARa}{$\rd(x)$}};

            \node[right of = wy1] (wx1) {\textcolor{colorVARa}{$\wt(x)$}};
            \node[below of = wx1] (wy2) {\textcolor{colorVARb}{$\wt(y)$}};
            \node[below of = wy2] (ry0) {\textcolor{colorVARb}{$\rd(y)$}}; 

            \draw [seqbef] (wy1) to (wx2);
            \draw [seqbef] (wx2) to (rx0);
            \draw [seqbef] (wx1) to (wy2);
            \draw [seqbef] (wy2) to (ry0);
            \draw [rf] (wx1) to (rx0);
            \draw [rf] (wy1) to (ry0);
            \draw [mo] (wx2) to (wx1);
            \draw [mo] (wy2) to (wy1);
\end{tikzpicture} 
} 
\caption{$\ramm$}
\label{subfig:dd}
\end{subfigure}
\hfill
\begin{subfigure}[b]{0.15\textwidth}

\scalebox{\scaleboxvalue}{
\begin{tikzpicture}[ line width=1pt, node distance = 1.2cm] 

            \node (wy1) {\textcolor{colorVARb}{$\rd(y)$}};
            \node[below of = wy1] (rx0){\textcolor{colorVARa}{$\wt(x)$}};

            \node[right of = wy1] (wx1) {\textcolor{colorVARa}{$\rd(x)$}};
            \node[below of = wx1] (ry0) {\textcolor{colorVARb}{$\wt(y)$}}; 

            \draw [seqbef] (wy1) to (rx0);
            \draw [seqbef] (wx1) to (ry0);
            \draw [rf] (ry0) to (wy1);
            \draw [rf] (rx0) to (wx1);

\end{tikzpicture}    
}
\caption{$\rlxmm$}
\label{subfig:ee}
\end{subfigure}
\caption{
A sequence of increasingly weak execution graphs.
Each graph is consistent in its respective memory model and the models to its right, but inconsistent in the models to its left.
}
\label{fig:mm_examples}
\end{figure}

%% file: programs.tex
\subsection{A Model for Programs}\label{SUBSEC:PRELIM_PROG}

\Paragraph{Labeled Transition Systems.}
A \emph{Labeled Transition System} (LTS) is a tuple $\LTS=\tuple{\States, \Alphabet, \Trans, \InitState}$, where
\begin{enumerate*}[label=(\roman*)]
\item $\States$ is a finite set of states,
\item $\Alphabet$ is a finite alphabet,
\item $\Trans$ is a (labeled) transition relation over $\States\times (\Alphabet \ \cup \ \{\epsilon\}) \times \States$, and 
\item $\InitState\in \States$ is an initial state.
\end{enumerate*}
We often write $p\LTransTo{\alpha}q$ to denote $(p,\alpha, q)\in \Trans$.
The language of $\LTS$ is $\Lang{\LTS}=\{\alpha_1\alpha_2\dots\alpha_n\in \Alphabet^*\colon \InitState\LTransTo{\alpha_1}p_1\LTransTo{\alpha_2}\dots\LTransTo{\alpha_n}p_{n} \} $.
For clarity, we may represent a word as a sequence of comma-separated elements of $\Alphabet$, e.g, $\alpha_1,\alpha_2,\dots,\alpha_n\in \Lang{\LTS}$.

\Paragraph{Sequential programs.}
Given some thread $t\in \ThreadDom$, the \emph{sequential program} of $t$ is a transition system $\Prog_{t}=\tuple{\States_{t}, \Alphabet_{t}, \Trans_{t}, \InitState_{t}}$, where $\Alphabet_{t}=\LabelDom^{t}$.
Intuitively, each transition of $\Prog_{t}$ emits the label of an event executed by that thread.

\input{figures/prog_example}

\Paragraph{Concurrent programs and their semantics.}
A \emph{concurrent program} is a collection of sequential programs $\ConcProg=\tuple{\Prog_{t_1},\dots, \Prog_{t_{|\ThreadDom|}}}$.
Given an $\seqbef$-graph $G$ and a thread $t\in \ThreadDom$, let $\ProjectGraph{G}{\seqbef}{t}=\llab(\event_1), \llab(\event_2),\dots, \llab(\event_n)$, where $\event_1,\event_2,\dots, \event_n$ is the linear sequence of $G.\Events^t$ according to $\seqbef$.
The \emph{language} of $\ConcProg$ is defined as $\Lang{\ConcProg}=\{G=\tuple{\Events, \seqbef}\colon \text{ $G$ is an $\seqbef$-graph and } \forall t\in\ThreadDom.~ \ProjectGraph{G}{\seqbef}{t}\in \Lang{\Prog_{t}}\}$.
In words, $\Lang{\ConcProg}$ contains all $\seqbef$-graphs in which each thread $t$ executes according to its own sequential program $\Prog_{t}$.
Given a memory model $\mm$, we define 
the \emph{$\rf$-semantics} of $\ConcProg$ as $\RFGraphsOf{\ConcProg}{\mm}=\{ G=\tuple{\Events, \seqbef,\rf} \colon \tuple{\Events, \seqbef}\in\Lang{\ConcProg} \text{ and } G\models \mm  \}$, and
the \emph{execution semantics} of $\ConcProg$ as $\ExecutionGraphsOf{\ConcProg}{\mm}=\{ G=\tuple{\Events, \seqbef,\rf,\mo} \colon \tuple{\Events, \seqbef}\in\Lang{\ConcProg} \text{ and } G\models \mm  \}$.
See \cref{FIG:PROGEXAMPLE} for an example.
Note that $\RFGraphsOf{\ConcProg}{\mm}$ and $\ExecutionGraphsOf{\ConcProg}{\mm}$ are infinite in general, even though $\ConcProg$ is finite.

%% file: figures/prog_example.tex
\begin{figure}
\begin{subfigure}[b]{.5\textwidth}
    \begin{tikzpicture}[ line width=1pt, main/.style = {draw, circle, inner sep = 2pt}]

            \node[main, fill=white, draw=black, text=black, initial, initial text={}, initial distance=1em, initial where = above] (t11) at (0,0){$s_{1}$};
            \node[main, fill=white, draw=black, text=black] (t12) at (1.5,0){$s_{2}$};
            \node[main, fill=white, draw=black, text=black] (t13) at (3,0){$s_{3}$};

            \draw [->] (t11) tonode [below=2pt,colorVARa]{\footnotesize\textcolor{colorVARa}{\footnotesize$\rd(x,1)$}} (t12);
            \draw [->] (t12) to[bend left] node [above=2pt]{\footnotesize\textcolor{colorVARb}{\footnotesize$\wt(y,1)$}} (t13);
            \draw [->] (t13) to[bend left] node [below=2pt]{\footnotesize\textcolor{colorVARb}{\footnotesize$\wt(y,0)$}} (t12);

            \node[main, fill=white, draw=black, text=black, initial, initial text={}, initial distance=1em, initial where = above] (t21) at (4,0){$s_1$};
            \node[main, fill=white, draw=black, text=black] (t22) at (5.5,0){$s_2$};
            
            \draw [->] (t21) to[bend left] node [above=0pt]{\footnotesize\textcolor{colorVARa}{\footnotesize$\wt(x,1)$}} (t22);
            \draw [->] (t22) to[bend left] node [below=0pt]{\footnotesize\textcolor{colorVARb}{\footnotesize$\rd(y,1)$}} (t21);
        
        \end{tikzpicture}
\caption{$\Prog_{t_1}$ and $\Prog_{t_2}$.}
\label{SUBFIG:PROGEXAMPLE1}
\end{subfigure}%
\begin{subfigure}[b]{.23\textwidth}
\begin{tikzpicture}[ line width=1pt, main/.style = {draw, white, rectangle, inner sep = 2pt}]

            \node[main] (rx1) at (9.5,-1.35){\textcolor{colorVARa}{\footnotesize$\rd(x,1)$}};
            \node[main] (wy1) at (9.5,-2){\textcolor{colorVARb}{\footnotesize$\wt(y,1)$}};
            \node[main] (wy0) at (9.5,-2.65){\textcolor{colorVARb}{\footnotesize$\wt(y,0)$}};

            \node[main] (wx1) at (11,-1.35){\textcolor{colorVARa}{\footnotesize$\wt(x,1)$}};
            \node[main] (ry1) at (11,-2){\textcolor{colorVARb}{\footnotesize$\rd(y,1)$}}; 
            \node[main] (wx12) at (11,-2.65){\textcolor{colorVARa}{\footnotesize$\wt(x,1)$}};

            \draw [seqbef] (rx1) to (wy1);
            \draw [seqbef] (wx1) to (ry1);
            \draw [seqbef] (ry1) to (wx12);
            \draw [seqbef] (wy1) to (wy0);
        
        \end{tikzpicture}
\caption{$G\in \Lang{\ConcProg}$.}
\label{SUBFIG:PROGEXAMPLE2}
\end{subfigure}%
\begin{subfigure}[b]{.23\textwidth}
\begin{tikzpicture}[ line width=1pt, main/.style = {draw, white, rectangle, inner sep = 2pt}]

            \node[main] (rx1) at (9.5,-1.35){\textcolor{colorVARa}{\footnotesize$\rd(x,1)$}};
            \node[main] (wy1) at (9.5,-2){\textcolor{colorVARb}{\footnotesize$\wt(y,1)$}};
            \node[main] (wy0) at (9.5,-2.65){\textcolor{colorVARb}{\footnotesize$\wt(y,0)$}};

            \node[main] (wx1) at (11,-1.35){\textcolor{colorVARa}{\footnotesize$\wt(x,1)$}};
            \node[main] (ry1) at (11,-2){\textcolor{colorVARb}{\footnotesize$\rd(y,1)$}}; 
            \node[main] (wx12) at (11,-2.65){\textcolor{colorVARa}{\footnotesize$\wt(x,1)$}};

            \draw [seqbef] (rx1) to (wy1);
            \draw [seqbef] (wx1) to (ry1);
            \draw [seqbef] (ry1) to (wx12);
            \draw [seqbef] (wy1) to (wy0);

            \draw [rf] (wx1) to (rx1);
            \draw [rf] (wy1) to (ry1);
        
        \end{tikzpicture}
\caption{$G\in \RFGraphsOf{\ConcProg}{\scmm}$.}
\label{SUBFIG:PROGEXAMPLE3}
\end{subfigure}%
\caption{A concurrent program $\ConcProg=\tuple{\Prog_{t_1}, \Prog_{t_2}}$ (\subref{SUBFIG:PROGEXAMPLE1}), an $\seqbef$-graph in $ \Lang{\ConcProg}$ (\subref{SUBFIG:PROGEXAMPLE2}), and an $\rf$-graph in $\RFGraphsOf{\ConcProg}{\scmm}$(\subref{SUBFIG:PROGEXAMPLE3}).}
\label{FIG:PROGEXAMPLE}
\end{figure}

%% file: tw_mm.tex
\section{The Treewidth of Memory Models}\label{SEC:TWMM}

In this section we investigate the treewidth of consistent $\rf$-graphs under different memory models. 
We prove that $\scmm$ enjoys bounded-treewidth $\rf$-graphs but $\tsomm$ already does not.
Later, we use treewidth for the verification of concurrent programs under weak memory.

\input{treewidth}

\input{tw_sc_tso}

\input{tw_weak_behaviors}
\input{experiments}

%% file: treewidth.tex
\subsection{Treewidth}\label{SUBSEC:PRELIM_TW}

Treewidth is an elegant graph-theoretic notion that measures how similar a graph is to a tree~\cite{Robertson1986b}. Below, we introduce this notion formally.

\Paragraph{Tree decompositions and treewidth.}
Given a (di)graph $G=(\Vertices,\Edges)$, where $\Vertices$ is a set of vertices and $\Edges$ is a set of edges,
a \emph{tree decomposition} of $G$ is a tuple $\tuple{\{ \Bag_i\colon i\in I  \}, \Tree=(I, F)}$, where $\Tree$ is a tree and each $\Bag_i\subseteq \Vertices$ is called a \emph{bag},
such that the following conditions are met.
\begin{compactitem}
\item Every vertex appears in some bag, i.e., $\bigcup_{i \in I} \Bag_i = \Vertices$.
\item Every edge appears in some bag, i.e., for all $(u,v)\in \Edges$ there exists some $i\in I$ with $u,v\in \Bag_i$.
\item Every vertex appears in a connected subgraph of $\Tree$, i.e., for all $i,j,k\in I$, if $j$ is on the path from $i$ to $k$ in $\Tree$, we have $\Bag_i\cap \Bag_k\subseteq \Bag_j$.
\end{compactitem}
The \emph{width} of a tree decomposition is $\max_{i\in I} |\Bag_i|-1$.
The \emph{treewidth} of $G$, written $\tw(G)$, is the smallest width among the tree decompositions of $G$.
A tree decomposition of $G$ is \emph{optimal} if its width equals $\tw(G)$.
See \cref{FIG:DECEXAMPLE} in \cref{SEC:APP_TWMM} for an example. 

\Paragraph{The treewidth of $\rf$-graphs.}
The notion of tree decompositions and treewidth naturally extends to $\rf$-graphs $G=\tuple{\Events, \seqbef, \rf}$, where the vertex set is $\Events$ and the edge set is $\seqbef\cup \rf$.
We will thus be using the above notation for (plain) graphs and $\rf$-graphs interchangeably, and we will write $\tw(G)$ for the treewidth of an $\rf$-graph $G$.
Given a memory model $\mm$, we say that $\mm$ \emph{has bounded treewidth} if there exists some $k\in \Nats^+$ such that for every $G\in \RFGraphsMM{\mm}$, we have $\tw(G)\leq k$.

%% file: tw_sc_tso.tex
\subsection{On The Treewidth of Memory Models}\label{SUBSEC:TW_SC_TSO}

We now study the treewidth of different memory models.
Succinctly, our main question is \emph{what is the strongest memory model that gives rise to unbounded treewidth?}

\Paragraph{The treewidth of $\scmm$.}
We first show that $\scmm$ has bounded treewidth.

\smallskip
\begin{restatable}{lemma}{lemsctw}\label{lem:sc-tw}
For every $\rf$-graph $G$, if $G\models \scmm$ then $\tw(G)\leq |\ThreadDom|+|\LocationDom|$.
\end{restatable}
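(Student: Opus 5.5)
The plan is to turn an $\scmm$-consistent $\rf$-graph into a \emph{path} decomposition along a linearization of the execution. Each bag will remember only a constant-size "frontier": the last event of every thread and the last write to every register.

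\textbf{Step 1 (linearization).} Since $G\models\scmm$, fix a modification order $\mo$ with $\acyclic(\po\cup\rf\cup\mo\cup\fr)$. Take a topological sort $e_1,e_2,\dots,e_n$ of $\Events$ with respect to this union.

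\textbf{Step 2 (key property).} I would show that $\rf$ always points to the most recent write. Suppose $w\LTo{\rf}r$ with $\lloc(r)=x$. Then $w$ is the last event among $\locx{\WriteDom}\cup\locx{\RMWDom}$ that precedes $r$ in the sort. Indeed, suppose some other write $w'$ on $x$ lies strictly between $w$ and $r$. Since $\mo_x$ is total, there are two cases.
\begin{itemize}
\item If $w'\LTo{\mo}w$, then $w'$ must precede $w$ in the sort, a contradiction.
\item If $w\LTo{\mo}w'$, then $r\LTo{\fr}w'$, so $r$ must precede $w'$, again a contradiction.
\end{itemize}
RMW events are treated as both reads and writes, and the same argument applies to them. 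I expect this step to be the only real content of the proof; the rest is bookkeeping.

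\textbf{Step 3 (bags).} For each position $i$, define $\Bag_i$ to contain $e_i$, together with, for each $t\in\ThreadDom$, the last event of thread $t$ strictly before position $i$ (if any), and, for each $x\in\LocationDom$, the last write/RMW on $x$ strictly before position $i$ (if any). Thus $|\Bag_i|\le 1+|\ThreadDom|+|\LocationDom|$. Arrange the bags on a path $1-2-\dots-n$, which is a tree.

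\textbf{Step 4 (the three conditions).} Every vertex $e_i$ lies in $\Bag_i$, so every vertex is covered.

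For an $\seqbef$ edge $e\LTo{\seqbef}e_j$: the sort respects $\po$ and $\seqbef$ relates immediate successors, so $e$ is the last event of $\tid(e_j)$ before $j$. Hence both endpoints are in $\Bag_j$.

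For an $\rf$ edge $w\LTo{\rf}e_j$: by Step 2, $w$ is the last write on $\lloc(e_j)$ before $j$, so $w\in\Bag_j$.

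For connectivity, fix $e_i$. It belongs to $\Bag_i$. It remains in later bags $\Bag_j$ ($j>i$) exactly while it is still the last event of its thread, or (if it is a write) still the last write on its register. Each of these conditions holds on an interval starting at $i+1$: once a later event of the same thread, or a later write on the same register, appears, $e_i$ is superseded forever. So the positions containing $e_i$ form the union of intervals all starting at $i$ or $i+1$, which is itself an interval.

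Therefore the decomposition is valid, its width is at most $|\ThreadDom|+|\LocationDom|$, and $\tw(G)\le|\ThreadDom|+|\LocationDom|$.
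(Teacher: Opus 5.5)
Your proposal is correct and follows essentially the same route as the paper: a path decomposition along a total extension of $\po\cup\rf\cup\mo\cup\fr$, where each bag holds the current event, the last event of each thread, and the last write/RMW on each register, giving width at most $|\ThreadDom|+|\LocationDom|$. Your Step~2 also proves explicitly, via totality of $\mo_x$ and the induced $\fr$ edge, the ``$\rf$ reads the most recent write'' fact that the paper only asserts in its edge-coverage step.
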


We remark that this bound is tight up to an additive term $+1$, in the sense that for all values of $|\ThreadDom|$ and $|\LocationDom|$, there exists an $\scmm$-consistent $\rf$-graph of treewidth $\geq |\ThreadDom|+|\LocationDom|-1$ (see \cref{lem:sc_tw_lowerbound} in \cref{SEC:APP_TWMM}).

\Paragraph{The treewidth of weak-memory.}
The $|\ThreadDom|+|\LocationDom|$ bound on the treewidth of $\scmm$ makes it natural to seek extensions for weak memory models.
Clearly, weaker models must admit executions of larger treewidth, that may become unbounded when the model becomes very weak.
It turns out that unbounded treewidth arises already in $\tsomm$, which might be somewhat surprising given how conceptually similar $\tsomm$ is to $\scmm$~\cite{Lahav2016a}.

\smallskip
\begin{restatable}{lemma}{lemtsotw}\label{lem:tso-tw}
For every $n>0$, there exists an $\rf$-graph $G_n\in \RFGraphsMM{\tsomm}$ of $O(n^2)$ events, $2$ threads and $2$ registers, such that $\tw(G_n)\geq n$.
\end{restatable}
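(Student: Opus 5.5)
The plan is to build $G_n$ so that the $\seqbef\cup\rf$ edges contain an $n\times n$ grid as a minor. Since treewidth does not increase under taking minors and $\tw(\grid{n})=n$, this gives $\tw(G_n)\geq n$. The mechanism is the store buffer of $\tsomm$. Each thread can delay its writes past its own later reads, so a read of thread $t_2$ can observe a write of thread $t_1$ that lies far away in $\po$ from the point where the reader is. This lets $\rf$-edges cross between the two thread chains in a "shifted" pattern that SC forbids.

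\textbf{Construction.} Thread $t_1$ executes $n$ blocks. Block $i$ consists of $n$ writes $\wt(x)$ followed by $n$ reads $\rd(y)$. Thread $t_2$ is symmetric: $n$ blocks, each of $n$ writes $\wt(y)$ followed by $n$ reads $\rd(x)$. Values are chosen distinct, which is possible after a reuse argument, or we simply take them from a constant domain since only $\rf$ matters. The total is $O(n^2)$ events on $2$ threads and $2$ registers. The $\rf$ edges are chosen so that the $j$-th read in block $i$ of $t_2$ reads from the $j$-th write in block $i$ of $t_1$, and symmetrically for $t_1$. The intended $\tsomm$ execution has each thread buffer its entire block of writes, perform its reads, and flush only afterwards. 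Reads, however, must see writes from the other thread that were already flushed, so I would adjust the indexing to read from the *previous* block of the other thread ($i-1$). That makes the intended execution realizable.

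\textbf{Consistency.} I would exhibit $\mo$ explicitly: order the writes to each register by a global interleaving of the "flush" points. Then I would check the two axioms. For \ref{eq:scloc}, per register only one thread writes and the other reads, and the reads observe writes in increasing $\mo$ order along $\po$, so no cycle arises. For \ref{eq:sce}, $\ppo$ drops write$\to$read edges, so within a block the reads are unordered relative to the preceding writes. I would define a linear "global time" in which each block's reads happen before its writes are flushed, and show that every $\ppo$, $\rfe$, $\moe$ and $\fre$ edge respects it. This verification is routine but fiddly.

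\textbf{Grid minor.} The $j$-th read positions across blocks and the $j$-th write positions across blocks form "columns" linked by $\rf$, while $\seqbef$ paths within blocks form "rows". I would contract each row $i$ to a path of $n$ branch vertices, namely the write-read pairs indexed by $j$. I would then verify that consecutive rows are joined by the $\rf$ edges at every column $j$, which yields an $n\times n$ grid minor. I expect the main obstacle to be making the rows and columns vertex-disjoint while keeping the $\rf$ pattern TSO-consistent. The pattern must connect every column between consecutive blocks, yet reads in a block can only see writes flushed earlier. If the straightforward indexing fails, my fallback is a brick-wall (wall) minor using alternating blocks of the two threads, since walls also have treewidth $\Theta(n)$. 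In that case I would rescale $n$ by a constant factor.
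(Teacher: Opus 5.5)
\textbf{Verdict: the grid-minor step fails.} Your consistency argument is not the problem. In fact your unshifted, same-block pattern is already $\tsomm$-consistent: each thread keeps its whole write block buffered, and the writes are flushed one at a time, each followed by the matching read of the other thread. So the shift to block $i-1$ is unnecessary.

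The grid-minor step cannot be repaired within your construction: for either indexing, $\tw(G_n)$ is bounded by a constant independent of $n$.
\begin{itemize}
\item Your $\rf$-edges form a one-to-one, order-preserving matching between a half-block of one thread (the $n$ writes, or the $n$ reads, of some block) and a single half-block of the other thread. Each matched pair of half-blocks is therefore a ladder, i.e.\ a $2\times n$ grid.
\item The only other edges are the single $\seqbef$-edges from the last event of a half-block to the first event of the next half-block of the same thread.
\item Thus $G_n$ is a line of ladders, each attached to $O(1)$ nearby ladders and only through its four endpoints. Sweep each ladder rung by rung, keeping in every bag the endpoints of the current and nearby ladders; this is a path decomposition of constant width.
\end{itemize}
Intuitively, a grid row needs $n$ disjoint connections both to the row above and to the row below. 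The interior events of each of your half-blocks have all their non-$\seqbef$ neighbours in one partner half-block, so the wide connections only ever pair up two half-blocks and never stack. The brick-wall fallback is not a construction and does not close this gap.

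\textbf{The missing idea} is to let one thread's $\seqbef$ itself supply the vertical edges, by alternating between consecutive rows. The paper's construction is as follows.
\begin{itemize}
\item Thread $t_1$ visits the grid vertices in lexicographic order. Its segment for $\tuple{i,j}$ is $\rd_{\tuple{i,j}}(y)\LTo{\seqbef}\wt_{\tuple{i,j}}(x)$, with only the write for $i=1$ and only the read for $i=n$. So $t_1$ provides the horizontal edges.
\item Thread $t_2$ consists of the segments $\rd_{\tuple{i,j}}(x)\LTo{\seqbef}\wt_{\tuple{i+1,j}}(y)$ for $i<n$, in lexicographic order of $\tuple{i,j}$. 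Each such $\seqbef$-edge is a vertical grid edge.
\item The reads-from relation is $\wt_{\tuple{i,j}}(z)\LTo{\rf}\rd_{\tuple{i,j}}(z)$ for $z\in\{x,y\}$. So the (up to four) events tagged $\tuple{i,j}$ form a connected branch set, and contracting these yields $\grid{n}$.
\end{itemize}
Here the $\rf$-edges on $y$ jump one full row ($\Theta(n)$ events) between the threads, interleaved with $\rf$-edges on $x$ at zero offset. That interleaving is exactly what your block-aligned matching lacks.

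$\tsomm$-consistency is then short.
\begin{itemize}
\item Each register has a single writer, so $\mo\subseteq\po$ and $\moe=\emptyset$.
\item Order events lexicographically by tag. All $\ppo$- and $\rfe$-edges are non-decreasing, and all $\fre$-edges are strictly increasing.
\item The only decreasing $\po$-edges are $t_2$'s write--read pairs $\wt_{\tuple{i+1,j}}(y)\LTo{\po}\rd_{\tuple{i,j+1}}(x)$, which $\ppo$ drops.
\item Within a single tag the edges are acyclic.
\end{itemize}
Operationally, $t_2$ keeps its row-$(i+1)$ writes to $y$ buffered while it continues reading the row-$i$ values of $x$.
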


Our proof constructs such a $G_n$ that has an $n\times n$ grid as a minor.
\cref{fig:grid_4} illustrates the $4\times 4$ grid and $G_4$.
Since the treewidth of an $n\times n$ grid is $n$ and treewidth is closed under minors~\cite{Robertson1986a}, this gives $\tw(G_n)\geq n$.

\input{figures/grid_4}

\smallskip
\begin{restatable}{corollary}{corweakmmtw}\label{cor:weak_mm-tw}
All memory models of \cref{tab:main_models} except $\scmm$ have unbounded treewidth.
\end{restatable}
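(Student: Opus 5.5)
The plan is to derive the corollary from \cref{lem:tso-tw}, using the monotonicity of consistency along the strength order $\mmorder$. Treewidth is a property of the $\rf$-graph alone. So if $\RFGraphsMM{\tsomm}\subseteq \RFGraphsMM{\mm}$, then the graphs $G_n$ of \cref{lem:tso-tw} also lie in $\RFGraphsMM{\mm}$, and $\tw(G_n)\geq n$ for every $n$ shows that $\mm$ has unbounded treewidth. The $G_n$ use only two threads and two registers, so they fit within any fixed $\ThreadDom$, $\LocationDom$ with at least two of each. A model with only one thread or register would be degenerate; we assume at least two of each, as the paper implicitly does. Moreover, the $G_n$ contain no RMW events, so they are also witnesses for the RMW-free fragments.

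The key step is the inclusion $\RFGraphsMM{\tsomm}\subseteq\RFGraphsMM{\mm}$ for each $\mm\in\{\rmwfree{\psomm},\rlxmm,\wramm,\ramm,\sramm\}$. Since the $G_n$ are RMW-free, it suffices to check the inclusion on RMW-free execution graphs. For most models this is the chain $\scmm \mmorder \tsomm \mmorder \sramm \mmorder \ramm \mmorder \{\rlxmm, \wramm\}$ stated in \cref{SUBSEC:PRELIM_MM}: any $\mo$ witnessing $\tsomm$-consistency of $G_n$ also witnesses consistency in the weaker models. The models not covered by that chain are $\tsomm$ itself and $\rmwfree{\psomm}$. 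The first is immediate. For the second, note $\pppo\subseteq\ppo$, so \ref{eq:sce} implies \ref{eq:psce}, while \ref{eq:scloc} is shared; hence $\rmwfree{\tsomm}\mmorder\rmwfree{\psomm}$.

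A more robust alternative avoids relying on the stated chain: take the concrete $\mo$ used to show $G_n\models\tsomm$ in the proof of \cref{lem:tso-tw}, and verify each axiom of \cref{tab:main_models} directly. The RMW axioms \ref{eq:at} and \ref{eq:wat} hold vacuously. \ref{eq:hbirr}, \ref{eq:wc} and \ref{eq:rc} follow from \ref{eq:scloc} plus acyclicity of $\po\cup\rfe$, and the latter is implied by \ref{eq:sce} together with \ref{eq:scloc}. \ref{eq:swc} and \ref{eq:wrc} need a short check on the specific graph.

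The main obstacle is the $\sramm$ case, since \ref{eq:swc} ($\hb\cup\mo$ acyclic) is not literally an axiom of $\tsomm$. I would discharge it by citing the standard result that $\tsomm\mmorder\sramm$ \cite{Lahav2022}, as the paper's stated order does. Failing that, I would check directly that in $G_n$ the $\mo$ edges can be chosen consistent with some linearization of $\hb$, which is possible because $\hb$ is acyclic there.
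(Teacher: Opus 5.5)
Your main argument is correct and matches the paper, which leaves the corollary as an immediate consequence of \cref{lem:tso-tw} and the strength chain $\scmm \mmorder \tsomm \mmorder \sramm \mmorder \ramm \mmorder \{\rlxmm,\wramm\}$; your observation that $\pppo\subseteq\ppo$ gives $\rmwfree{\tsomm}\mmorder\rmwfree{\psomm}$ correctly covers the one case the chain omits. One caution on your backup route: \ref{eq:wc} and \ref{eq:rc} do \emph{not} follow in general from \ref{eq:scloc} plus acyclicity of $\po\cup\rfe$ (message passing is a counterexample), so a direct check would have to use \ref{eq:sce} or the specific structure of $G_n$, such as the fact that $\mo\subseteq\po$ there.
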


%% file: figures/grid_4.tex
\begin{figure}[t]
\begin{subfigure}[b]{.4\textwidth}
\centering
\begin{tikzpicture}[ line width=1pt, node distance = 1cm] 

            \node[fill=blue!10, circle, line width=0.5pt] (n11) {\footnotesize$1,1$};
            \node[fill=blue!10,circle, line width=0.5pt, right = 0.5cm of n11] (n12){\footnotesize$1,2$};
            \node[fill=blue!10,circle, line width=0.5pt, right = 0.5cm of n12] (n13){\footnotesize$1,3$};
            \node[fill=blue!10,circle, right = 0.5cm of n13] (n14){\footnotesize$1,4$};
            
            \node[fill=blue!10, circle, line width=0.5pt, below = 0.5cm of n11] (n21) {\footnotesize$2,1$};
            \node[fill=blue!10,circle, line width=0.5pt, right = 0.5cm of n21] (n22){\footnotesize$2,2$};
            \node[fill=blue!10,circle, line width=0.5pt, right = 0.5cm of n22] (n23){\footnotesize$2,3$};
            \node[fill=blue!10,circle, line width=0.5pt, right = 0.5cm of n23] (n24){\footnotesize$2,4$};
            
            \node[fill=blue!10, circle, line width=0.5pt, below = 0.5cm of n21] (n31) {\footnotesize$3,1$};
            \node[fill=blue!10,circle, line width=0.5pt, right = 0.5cm of n31] (n32){\footnotesize$3,2$};
            \node[fill=blue!10,circle, line width=0.5pt, right = 0.5cm of n32] (n33){\footnotesize$3,3$};
            \node[fill=blue!10,circle, line width=0.5pt, right = 0.5cm of n33] (n34){\footnotesize$3,4$};
            
            \node[fill=blue!10, circle, line width=0.5pt, below = 0.5cm of n31] (n41) {\footnotesize$4,1$};
            \node[fill=blue!10,circle, line width=0.5pt, right = 0.5cm of n41] (n42){\footnotesize$4,2$};
            \node[fill=blue!10,circle, line width=0.5pt, right = 0.5cm of n42] (n43){\footnotesize$4,3$};
            \node[fill=blue!10,circle, line width=0.5pt, right = 0.5cm of n43] (n44){\footnotesize$4,4$};

            \node[fill=white, below = 1.5cm of n41] (n51) {};

            \draw (n11) -- (n12);
            \draw (n12) -- (n13);
            \draw (n13) -- (n14);
            \draw (n21) -- (n22);
            \draw (n22) -- (n23);
            \draw (n23) -- (n24);
            \draw (n31) -- (n32);
            \draw (n32) -- (n33);
            \draw (n33) -- (n34);
            \draw (n41) -- (n42);
            \draw (n42) -- (n43);
            \draw (n43) -- (n44);
            \draw[dashed] (n11) -- (n21);
            \draw[dashed] (n21) -- (n31);
            \draw[dashed] (n31) -- (n41);
            \draw[dashed] (n12) -- (n22);
            \draw[dashed] (n22) -- (n32);
            \draw[dashed] (n32) -- (n42);
            \draw[dashed] (n13) -- (n23);
            \draw[dashed] (n23) -- (n33);
            \draw[dashed] (n33) -- (n43);
            \draw[dashed] (n14) -- (n24);
            \draw[dashed] (n24) -- (n34);
            \draw[dashed] (n34) -- (n44);

\end{tikzpicture}
\caption{The $4\times 4$ grid.}
\label{subfig:grid_4_grid}
\end{subfigure}%
\hspace{1em}
\begin{subfigure}[b]{0.55\textwidth}
\centering
\begin{tikzpicture}[ line width=1pt,scale=1] 

        
        \node[colorVARa] (wx11) at (0,0){\footnotesize$\wt_{\tuple{1,1}}(x)$};
        \node[colorVARa, below = -0.2 cm of wx11] (wx12)
        {\footnotesize$\wt_{\tuple{1,2}}(x)$};
        \node[colorVARa, below = -0.2 cm of wx12] (wx13){\footnotesize$\wt_{\tuple{1,3}}(x)$}; 
        \node[colorVARa, below = -0.2 cm of wx13] (wx14){\footnotesize$\wt_{\tuple{1,4}}(x)$}; 
        \node[colorVARb, below = -0.2 cm of wx14] (ry21){\footnotesize$\rd_{\tuple{2,1}}(y)$}; 
        \node[colorVARa, below = -0.2 cm of ry21] (wx21){\footnotesize$\wt_{\tuple{2,1}}(x)$};
        \node[colorVARb, below = -0.2 cm of wx21] (ry22){\footnotesize$\rd_{\tuple{2,2}}(y)$}; 
        \node[colorVARa, below = -0.2 cm of ry22] (wx22){\footnotesize$\wt_{\tuple{2,2}}(x)$};
        \node[colorVARb, below = -0.2 cm of wx22] (ry23){\footnotesize$\rd_{\tuple{2,3}}(y)$}; 
        \node[colorVARa, below = -0.2 cm of ry23] (wx23){\footnotesize$\wt_{\tuple{2,3}}(x)$};
        \node[colorVARb, below = -0.2 cm of wx23] (ry24){\footnotesize$\rd_{\tuple{2,4}}(y)$}; 
        \node[colorVARa, below = -0.2 cm of ry24] (wx24){\footnotesize$\wt_{\tuple{2,4}}(x)$};
        \node[colorVARb, below = -0.2 cm of wx24] (ry31){\footnotesize$\rd_{\tuple{3,1}}(y)$}; 
        \node[colorVARa, below = -0.2 cm of ry31] (wx31){\footnotesize$\wt_{\tuple{3,1}}(x)$};
        \node[colorVARb, below = -0.2 cm of wx31] (ry32){\footnotesize$\rd_{\tuple{3,2}}(y)$}; 
        \node[colorVARa, below = -0.2 cm of ry32] (wx32){\footnotesize$\wt_{\tuple{3,2}}(x)$};
        \node[colorVARb, below = -0.2 cm of wx32] (ry33){\footnotesize$\rd_{\tuple{3,3}}(y)$}; 
        \node[colorVARa, below = -0.2 cm of ry33] (wx33){\footnotesize$\wt_{\tuple{3,3}}(x)$};
        \node[colorVARb, below = -0.2 cm of wx33] (ry34){\footnotesize$\rd_{\tuple{3,4}}(y)$}; 
        \node[colorVARa, below = -0.2 cm of ry34] (wx34){\footnotesize$\wt_{\tuple{3,4}}(x)$};
        \node[colorVARb, below = -0.2 cm of wx34] (ry41){\footnotesize$\rd_{\tuple{4,1}}(y)$}; 
        \node[colorVARb, below = -0.2 cm of ry41] (ry42){\footnotesize$\rd_{\tuple{4,2}}(y)$}; 
        \node[colorVARb, below = -0.2 cm of ry42] (ry43){\footnotesize$\rd_{\tuple{4,3}}(y)$}; 
        \node[colorVARb, below = -0.2 cm of ry43] (ry44){\footnotesize$\rd_{\tuple{4,4}}(y)$}; 

        \draw [decorate,decoration={brace,amplitude=2},thin] (wx11.190) -- (wx11.170) node (b11) [black,midway] {};
        \node[left = -0.1cm of b11] {\footnotesize$(1,1)$};
        
        \draw [decorate,decoration={brace,amplitude=2},thin] (wx12.190) -- (wx12.170) node (b12) [black,midway] {};
        \node[left = -0.1cm of b12] {\footnotesize$(1,2)$};

        \draw [decorate,decoration={brace,amplitude=2},thin] (wx13.190) -- (wx13.170) node (b13) [black,midway] {};
        \node[left = -0.1cm of b13] {\footnotesize$(1,3)$};

        \draw [decorate,decoration={brace,amplitude=2},thin] (wx14.190) -- (wx14.170) node (b14) [black,midway] {};
        \node[left = -0.1cm of b14] {\footnotesize$(1,4)$};

        \draw [decorate,decoration={brace,amplitude=2},thin] (wx21.190) -- (ry21.170) node (b21) [black,midway] {};
        \node[left = -0.1cm of b21] {\footnotesize$(2,1)$};

        \draw [decorate,decoration={brace,amplitude=2},thin] (wx22.190) -- (ry22.170) node (b22) [black,midway] {};
        \node[left = -0.1cm of b22] {\footnotesize$(2,2)$};

        \draw [decorate,decoration={brace,amplitude=2},thin] (wx23.190) -- (ry23.170) node (b23) [black,midway] {};
        \node[left = -0.1cm of b23] {\footnotesize$(2,3)$};

        \draw [decorate,decoration={brace,amplitude=2},thin] (wx24.190) -- (ry24.170) node (b24) [black,midway] {};
        \node[left = -0.1cm of b24] {\footnotesize$(2,4)$};

        \draw [decorate,decoration={brace,amplitude=2},thin] (wx31.190) -- (ry31.170) node (b31) [black,midway] {};
        \node[left = -0.1cm of b31] {\footnotesize$(3,1)$};

        \draw [decorate,decoration={brace,amplitude=2},thin] (wx32.190) -- (ry32.170) node (b32) [black,midway] {};
        \node[left = -0.1cm of b32] {\footnotesize$(3,2)$};

        \draw [decorate,decoration={brace,amplitude=2},thin] (wx33.190) -- (ry33.170) node (b33) [black,midway] {};
        \node[left = -0.1cm of b33] {\footnotesize$(3,3)$};

        \draw [decorate,decoration={brace,amplitude=2},thin] (wx34.190) -- (ry34.170) node (b34) [black,midway] {};
        \node[left = -0.1cm of b34] {\footnotesize$(3,4)$};

        \draw [decorate,decoration={brace,amplitude=2},thin] (ry41.190) -- (ry41.170) node (b41) [black,midway] {};
        \node[left = -0.1cm of b41] {\footnotesize$(4,1)$};
        \draw [decorate,decoration={brace,amplitude=2},thin] (ry42.190) -- (ry42.170) node (b42) [black,midway] {};
        \node[left = -0.1cm of b42] {\footnotesize$(4,2)$};
        \draw [decorate,decoration={brace,amplitude=2},thin] (ry43.190) -- (ry43.170) node (b43) [black,midway] {};
        \node[left = -0.1cm of b43] {\footnotesize$(4,3)$};
        \draw [decorate,decoration={brace,amplitude=2},thin] (ry44.190) -- (ry44.170) node (b44) [black,midway] {};
        \node[left = -0.1cm of b44] {\footnotesize$(4,4)$};

        \node[colorVARa] (rx11) at (2.5,0){\footnotesize$\rd_{\tuple{1,1}}(x)$};
        \node[colorVARb, below = -0.2 cm of rx11] (wy21)
        {\footnotesize$\wt_{\tuple{2,1}}(y)$};
        \node[colorVARa, below = -0.2 cm of wy21] (rx12)
        {\footnotesize$\rd_{\tuple{1,2}}(x)$};
        \node[colorVARb, below = -0.2 cm of rx12] (wy22)
        {\footnotesize$\wt_{\tuple{2,2}}(y)$};
        \node[colorVARa, below = -0.2 cm of wy22] (rx13)
        {\footnotesize$\rd_{\tuple{1,3}}(x)$};
        \node[colorVARb, below = -0.2 cm of rx13] (wy23)
        {\footnotesize$\wt_{\tuple{2,3}}(y)$};
        \node[colorVARa, below = -0.2 cm of wy23] (rx14)
        {\footnotesize$\rd_{\tuple{1,4}}(x)$};
        \node[colorVARb, below = -0.2 cm of rx14] (wy24)
        {\footnotesize$\wt_{\tuple{2,4}}(y)$};
        \node[colorVARa, below = -0.2 cm of wy24] (rx21)
        {\footnotesize$\rd_{\tuple{2,1}}(x)$};
        \node[colorVARb, below = -0.2 cm of rx21] (wy31)
        {\footnotesize$\wt_{\tuple{3,1}}(y)$};
        \node[colorVARa, below = -0.2 cm of wy31] (rx22)
        {\footnotesize$\rd_{\tuple{2,2}}(x)$};
        \node[colorVARb, below = -0.2 cm of rx22] (wy32)
        {\footnotesize$\wt_{\tuple{3,2}}(y)$};
        \node[colorVARa, below = -0.2 cm of wy32] (rx23)
        {\footnotesize$\rd_{\tuple{2,3}}(x)$};
        \node[colorVARb, below = -0.2 cm of rx23] (wy33)
        {\footnotesize$\wt_{\tuple{3,3}}(y)$};
        \node[colorVARa, below = -0.2 cm of wy33] (rx24)
        {\footnotesize$\rd_{\tuple{2,4}}(x)$};
        \node[colorVARb, below = -0.2 cm of rx24] (wy34)
        {\footnotesize$\wt_{\tuple{3,4}}(y)$};
        \node[colorVARa, below = -0.2 cm of wy34] (rx31)
        {\footnotesize$\rd_{\tuple{3,1}}(x)$};
        \node[colorVARb, below = -0.2 cm of rx31] (wy41)
        {\footnotesize$\wt_{\tuple{4,1}}(y)$};
        \node[colorVARa, below = -0.2 cm of wy41] (rx32)
        {\footnotesize$\rd_{\tuple{3,2}}(x)$};
        \node[colorVARb, below = -0.2 cm of rx32] (wy42)
        {\footnotesize$\wt_{\tuple{4,2}}(y)$};
        \node[colorVARa, below = -0.2 cm of wy42] (rx33)
        {\footnotesize$\rd_{\tuple{3,3}}(x)$};
        \node[colorVARb, below = -0.2 cm of rx33] (wy43)
        {\footnotesize$\wt_{\tuple{4,3}}(y)$};
        \node[colorVARa, below = -0.2 cm of wy43] (rx34)
        {\footnotesize$\rd_{\tuple{3,4}}(x)$};
        \node[colorVARb, below = -0.2 cm of rx34] (wy44)
        {\footnotesize$\wt_{\tuple{4,4}}(y)$};

        \draw [decorate,decoration={brace,amplitude=2},thin] (rx11.10) -- (wy21.-10) node (c11) [black,midway] {};
        \node[right = -0.1cm of c11] {\footnotesize$(1,1)\leftrightarrow(2,1)$};
        \draw [decorate,decoration={brace,amplitude=2},thin] (rx12.10) -- (wy22.-10) node (c11) [black,midway] {};
        \node[right = -0.1cm of c11] {\footnotesize$(1,2)\leftrightarrow(2,2)$};
        \draw [decorate,decoration={brace,amplitude=2},thin] (rx13.10) -- (wy23.-10) node (c11) [black,midway] {};
        \node[right = -0.1cm of c11] {\footnotesize$(1,3)\leftrightarrow(2,3)$};
        \draw [decorate,decoration={brace,amplitude=2},thin] (rx14.10) -- (wy24.-10) node (c11) [black,midway] {};
        \node[right = -0.1cm of c11] {\footnotesize$(1,4)\leftrightarrow(2,4)$};
        \draw [decorate,decoration={brace,amplitude=2},thin] (rx21.10) -- (wy31.-10) node (c11) [black,midway] {};
        \node[right = -0.1cm of c11] {\footnotesize$(2,1)\leftrightarrow(3,1)$};
        \draw [decorate,decoration={brace,amplitude=2},thin] (rx22.10) -- (wy32.-10) node (c11) [black,midway] {};
        \node[right = -0.1cm of c11] {\footnotesize$(2,2)\leftrightarrow(3,2)$};
        \draw [decorate,decoration={brace,amplitude=2},thin] (rx23.10) -- (wy33.-10) node (c11) [black,midway] {};
        \node[right = -0.1cm of c11] {\footnotesize$(2,3)\leftrightarrow(3,3)$};
        \draw [decorate,decoration={brace,amplitude=2},thin] (rx24.10) -- (wy34.-10) node (c11) [black,midway] {};
        \node[right = -0.1cm of c11] {\footnotesize$(2,4)\leftrightarrow(3,4)$};
        \draw [decorate,decoration={brace,amplitude=2},thin] (rx31.10) -- (wy41.-10) node (c11) [black,midway] {};
        \node[right = -0.1cm of c11] {\footnotesize$(3,1)\leftrightarrow(4,1)$};
        \draw [decorate,decoration={brace,amplitude=2},thin] (rx32.10) -- (wy42.-10) node (c11) [black,midway] {};
        \node[right = -0.1cm of c11] {\footnotesize$(3,2)\leftrightarrow(4,2)$};
        \draw [decorate,decoration={brace,amplitude=2},thin] (rx33.10) -- (wy43.-10) node (c11) [black,midway] {};
        \node[right = -0.1cm of c11] {\footnotesize$(3,3)\leftrightarrow(4,3)$};
        \draw [decorate,decoration={brace,amplitude=2},thin] (rx34.10) -- (wy44.-10) node (c11) [black,midway] {};
        \node[right = -0.1cm of c11] {\footnotesize$(3,4)\leftrightarrow(4,4)$};

        \draw  [rf] (wx11.east) -- (rx11.west);
        \draw  [rf] (wx12.east) -- (rx12.west); 
        \draw  [rf] (wx13.east) -- (rx13.west); 
        \draw  [rf] (wx14.east) -- (rx14.west);
        \draw  [rf] (wx21.east) -- (rx21.west);
        \draw  [rf] (wx22.east) -- (rx22.west); 
        \draw  [rf] (wx23.east) -- (rx23.west); 
        \draw  [rf] (wx24.east) -- (rx24.west); 
        \draw  [rf] (wx31.east) -- (rx31.west);
        \draw  [rf] (wx32.east) -- (rx32.west); 
        \draw  [rf] (wx33.east) -- (rx33.west); 
        \draw  [rf] (wx34.east) -- (rx34.west); 

        \draw  [rf] (wy21.west) -- (ry21.east);
        \draw  [rf] (wy22.west) -- (ry22.east);
        \draw  [rf] (wy23.west) -- (ry23.east);
        \draw  [rf] (wy24.west) -- (ry24.east);
        \draw  [rf] (wy31.west) -- (ry31.east);
        \draw  [rf] (wy32.west) -- (ry32.east);
        \draw  [rf] (wy33.west) -- (ry33.east);
        \draw  [rf] (wy34.west) -- (ry34.east);
        \draw  [rf] (wy41.west) -- (ry41.east);
        \draw  [rf] (wy42.west) -- (ry42.east);
        \draw  [rf] (wy43.west) -- (ry43.east);
        \draw  [rf] (wy44.west) -- (ry44.east);

        \draw  [fr] (rx13.west) -- (wx14.east);
        \draw  [fr] (ry21.east) -- (wy22.west);

    \end{tikzpicture}
\caption{The $\rf$-graph $G_4$ that has the $4\times 4$ grid as a minor.}
\label{subfig:grid_4_graph}
\end{subfigure}
\caption{
The $4\times 4$ grid is obtained from $G_4$ by contracting the events that share a subscript into the single vertex carrying that label, and then deleting every remaining edge except those joining vertices adjacent in \cref{subfig:grid_4_grid}.
Horizontal (filled) edges are encoded by the $\seqbef$ of the left thread, and vertical (dashed) edges are encoded by the $\seqbef$ of the right thread.
Note that $G_4$ is not $\scmm$-consistent, due to the cycle $\wt_{\tuple{1,4}}(x) \LTo{\po} \rd_{\tuple{2,1}}(y) \LTo{\fr} \wt_{\tuple{2,2}}(y)\LTo{\po} \rd_{\tuple{1,3}}(x)\LTo{\fr}\wt_{\tuple{1,4}}(x)$, with the highlighted $\fr$ edges.
}
\label{fig:grid_4}
\end{figure}

%% file: tw_weak_behaviors.tex
\Paragraph{Weak behaviors of bounded treewidth.}
In light of \cref{lem:tso-tw}, it is natural to ask how large treewidth relates to weak behavior at a conceptual level.
By generalizing our construction behind \cref{lem:sc-tw}, it can be shown that large treewidth requires very weak behavior, in the following sense.

\smallskip
\begin{restatable}{lemma}{lemhightwweak}\label{lem:high_tw_weak}
    For any $k\in \Nats$, if $\tw(G)\geq |\ThreadDom| + k \cdot |\LocationDom|$, then \emph{every} total extension $\sigma$ of $(G.\seqbef\cup G.\rf)^+$ contains an $\rf$-edge $\event_1\LTo{\rf}\event_2$ such that there are $\geq k$ writes/RMWs on $\lloc(\event_1)$ between $\event_1$ and $\event_2$ in $\sigma$.
\end{restatable}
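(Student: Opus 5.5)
We prove the contrapositive. Suppose some total extension $\sigma$ of $(G.\seqbef\cup G.\rf)^+$ is such that for every $\rf$-edge $\event_1\LTo{\rf}\event_2$, fewer than $k$ writes/RMWs on $\lloc(\event_1)$ lie strictly between $\event_1$ and $\event_2$ in $\sigma$. We build a tree decomposition of $G$ of width less than $|\ThreadDom|+k\cdot|\LocationDom|$. This is the construction behind \cref{lem:sc-tw} with the ``last write per register'' replaced by a sliding window of the last $k$ writes per register. The case $k=1$ is essentially exactly \cref{lem:sc-tw}.

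\textbf{Construction.} Let $\sigma=\event^1,\dots,\event^n$. The decomposition is a path $1,\dots,n$. Bag $\Bag_i$ contains:
\begin{compactitem}
\item for each thread $t$, the last event of $t$ among $\event^1,\dots,\event^i$ (at most $|\ThreadDom|$ events);
\item for each register $x$, the last $k$ writes/RMWs on $x$ among $\event^1,\dots,\event^{i}$, excluding one of them if it is $\event^i$ and is already counted as the last event of its thread (at most $k$ per register).
\end{compactitem}
I would tune this bookkeeping, counting $\event^i$ once and its thread predecessor once, so that $|\Bag_i|\le |\ThreadDom|+k|\LocationDom|$ and the width is at most $|\ThreadDom|+k|\LocationDom|-1<\tw(G)$. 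If the count is off by one, I would instead keep the $k$ most recent writes on $x$ \emph{before} position $i$ and add $\event^i$ separately, matching whatever accounting \cref{lem:sc-tw} uses.

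\textbf{Verification.} Three conditions must be checked.
\begin{compactitem}
\item \emph{Vertices:} each $\event^i$ belongs to $\Bag_i$.
\item \emph{$\seqbef$-edges:} for an edge $\event^j\LTo{\seqbef}\event^i$ we have $j<i$, since $\sigma$ extends $\seqbef$, and $\event^j$ is the last event of its thread before $i$. Both endpoints therefore lie in a common bag, namely $\Bag_i$, or $\Bag_{i-1}$ together with $\event^i$ under the variant accounting.
\item \emph{$\rf$-edges:} for an edge $\event^j\LTo{\rf}\event^i$ we have $j<i$, and by assumption fewer than $k$ writes on $x=\lloc(\event^j)$ occur strictly between them. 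Hence $\event^j$ is still among the last $k$ writes on $x$ at the moment $\event^i$ is placed, so $\event^j,\event^i\in\Bag_i$.
\end{compactitem}

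\textbf{Connectivity.} This is the step needing care. An event enters the thread slot at its own position and stays until the next event of its thread appears. It occupies a register slot from its own position until $k$ further writes on that register appear. Both occupancy periods are intervals of $\sigma$ starting at the event's own position, so their union is an interval. The set of bags containing any given event is thus a contiguous segment of the path, which gives the connectivity condition.

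The main obstacle is the exact off-by-one accounting of the bag size. The $\rf$-edge argument needs ``fewer than $k$ intervening writes'', which allows up to $k-1$ intervening writes; together with the writer itself, the window must hold $k$ writes on $x$ while the reader $\event^i$ is also present. Getting the bound to land exactly at width $|\ThreadDom|+k|\LocationDom|-1$ requires that the reader $\event^i$ be charged to its thread slot, not as an extra element.
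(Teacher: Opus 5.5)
\textbf{There is a genuine gap, at the off-by-one you flagged, and no bookkeeping can close it.} Your route is the paper's: a path decomposition along $\sigma$ whose bags keep the last event of each thread and the $k$ most recent writes/RMWs of each register.

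\emph{Your tuned construction fails.} In the tuned version, the thread slot of $\tid(\event^i)$ in $\Bag_i$ already holds $\event^i$. So the $\seqbef$-predecessor $\event^j$ of $\event^i$ occurs only in bags $\Bag_m$ with $m<i$, unless it is a write still sitting in a register window. Meanwhile $\event^i$ occurs only in bags with $m\geq i$. The edge $\event^j\LTo{\seqbef}\event^i$ is therefore covered by no bag, and your sentence that both endpoints lie in $\Bag_i$ is false. Charging $\event^i$ to its thread slot necessarily evicts its $\seqbef$-predecessor, which must share a bag with it.

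\emph{Your fallback is valid but proves too little.} The fallback is $\Bag_i=\{\event^i\}\cup\{\text{last event of each thread before } i\}\cup\{\text{last } k \text{ writes/RMWs per register before } i\}$. This is exactly the paper's decomposition, and it is a valid tree decomposition. However, its bags have up to $|\ThreadDom|+k|\LocationDom|+1$ elements, so it only yields $\tw(G)\leq|\ThreadDom|+k|\LocationDom|$. The contrapositive needs $\tw(G)<|\ThreadDom|+k|\LocationDom|$. The paper's written proof makes the same slip: it derives $|\Bag_i|\leq|\ThreadDom|+k|\LocationDom|+1$ and calls this the desired bound. Your remark that $k=1$ is ``essentially \cref{lem:sc-tw}'' was already a warning sign, since \cref{lem:sc-tw} gives $\leq|\ThreadDom|+|\LocationDom|$, not $<$.

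\emph{The statement as written is false.} Read ``between'' as strictly between, which is the reading your proof uses and the one for which a window of size $k$ is needed. Take $|\ThreadDom|=|\LocationDom|=1$, $k=1$, and the single-thread graph $\wt(x)\LTo{\seqbef}\rd(x)\LTo{\seqbef}\rd(x)$ in which both reads read from the write. Its underlying graph is a triangle, so $\tw(G)=2=|\ThreadDom|+k|\LocationDom|$. Yet in the unique total extension of $(\seqbef\cup\rf)^+$, no write lies strictly inside any $\rf$-edge. This generalises to every $k\geq 1$ in one thread:
\begin{compactenum}
\item put $k$ writes on $x$;
\item follow them with a block of reads of these writes, arranged so that every pair of writes has readers adjacent in $\seqbef$;
\item then add two further contiguous blocks, each reading all $k$ writes.
\end{compactenum}
Taking as branch sets each write together with its readers in the first block, plus the two final blocks, gives a $K_{k+2}$ minor, while every $\rf$-edge spans at most $k-1$ writes.

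If ``between'' were read inclusively, a window of size $k-1$ would rescue the strict bound, but that reading trivialises $k=1$ and is not the one you use.

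What your fallback, equivalently the paper's construction, does prove is the lemma under the hypothesis $\tw(G)>|\ThreadDom|+k|\LocationDom|$. With that hypothesis, your checks of vertex coverage, edge coverage and connectivity are complete as written.
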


It is worth highlighting, however, that the opposite is \emph{not} true: there are executions with very weak behavior that, nevertheless, enjoy small treewidth.
In \cref{FIG:WEAKTW}, we show 3 examples of $\rf$-graphs that display weak behaviors but have treewidth bounded by 3.
A minimum-width tree decomposition for \cref{SUBFIG:WEAKTW2} is shown in \cref{FIG:PATHDECESB}.
A complete discussion of these examples appears in \cref{SEC:APP_WEAK}.

\input{figures/weak_bounded_tw}
\input{figures/pathdec_esb}

%% file: figures/weak_bounded_tw.tex
\begin{figure}[t]
\begin{subfigure}{.33\textwidth}
\centering
    \begin{tikzpicture}[ line width=1pt, main/.style = {draw,inner sep=2pt},scale=0.6,yscale=0.7, xscale=2.2] 

            \node[main, draw=white, fill=white, text=colorVARa] (wy1) at (-1,-2){\footnotesize$\wt_1(x)$};
            \node[main, draw=white, fill=white, text=colorVARa] (wy2) at (-1,-3){\footnotesize$\wt_2(x)$};
            \node[main, draw=none, fill=none, text=colorVARa] (wyd) at (-1,-4){\footnotesize$\vdots$};
            \node[main, draw=white, fill=white, text=colorVARa] (wyn) at (-1,-5){\footnotesize$\wt_n(x)$};

            \node[main, draw=white, fill=white, text=colorVARb] (wx1) at (1,-2){\footnotesize$\wt_1(y)$};
            \node[main, draw=white, fill=white, text=colorVARb] (wx2) at (1,-3){\footnotesize$\wt_2(y)$};
            \node[main, draw=none, fill=none, text=colorVARb] (wxd) at (1,-4){\footnotesize$\vdots$};
            \node[main, draw=white, fill=white, text=colorVARb] (wxn) at (1,-5){\footnotesize$\wt_n(y)$};

            \node[main, draw=white, fill=white, text=colorVARa] (ry1) at (1,-6){\footnotesize$\rd_1(x)$};
            \node[main, draw=white, fill=white, text=colorVARa] (ry2) at (1,-7){\footnotesize$\rd_2(x)$};
            \node[main, draw=none, fill=none, text=colorVARa] (ryd) at (1,-8){\footnotesize$\vdots$};
            \node[main, draw=white, fill=white, text=colorVARa] (ryn) at (1,-9){\footnotesize$\rd_n(x)$};

            \node[main, draw=white, fill=white, text=colorVARb] (rx1) at (-1,-6){\footnotesize$\rd_1(y)$};
            \node[main, draw=white, fill=white, text=colorVARb] (rx2) at (-1,-7){\footnotesize$\rd_2(y)$};
            \node[main, draw=none, fill=none, text=colorVARb] (rxd) at (-1,-8){\footnotesize$\vdots$};
            \node[main, draw=white, fill=white, text=colorVARb] (rxn) at (-1,-9){\footnotesize$\rd_n(y)$};

            \draw [rf] (wy1.east) -- (ry1.west) node [midway, above right = -2pt] {}; 
            \draw [rf] (wy1.east) -- (ry2.west) node [midway, above right = -2pt] {};
            \draw [rf] (wy1.east) -- (ryn.west) node [midway, above right = -2pt] {};

            \draw [rf] (wx1.west) -- (rx1.east) node [midway, above right = -2pt] {};
            \draw [rf] (wx1.west) -- (rx2.east) node [midway, above right = -2pt] {}; 
            \draw [rf] (wx1.west) -- (rxn.east) node [midway, above right = -2pt] {}; 
        
        \end{tikzpicture}
\caption{Store Buffer}
\label{SUBFIG:WEAKTW1}
\end{subfigure}%
\begin{subfigure}{.33\textwidth}
\centering
    \begin{tikzpicture}[ line width=1pt, main/.style = {draw,inner sep=2pt},scale=0.6,yscale=0.7, xscale=2.2] 

            \node[main, draw=white, fill=white, text=colorVARa] (wy1) at (-1,-2){\footnotesize$\wt_1(x)$};
            \node[main, draw=white, fill=white, text=colorVARa] (wy2) at (-1,-3){\footnotesize$\wt_2(x)$};
            \node[main, draw=none, fill=none, text=colorVARa] (wyd) at (-1,-4){\footnotesize$\vdots$};
            \node[main, draw=white, fill=white, text=colorVARa] (wyn) at (-1,-5){\footnotesize$\wt_n(x)$};

            \node[main, draw=white, fill=white, text=colorVARb] (wx1) at (1,-2){\footnotesize$\wt_1(y)$};
            \node[main, draw=white, fill=white, text=colorVARb] (wx2) at (1,-3){\footnotesize$\wt_2(y)$};
            \node[main, draw=none, fill=none, text=colorVARb] (wxd) at (1,-4){\footnotesize$\vdots$};
            \node[main, draw=white, fill=white, text=colorVARb] (wxn) at (1,-5){\footnotesize$\wt_n(y)$};

            \node[main, draw=white, fill=white, text=colorVARa] (ry1) at (1,-6){\footnotesize$\rd_1(x)$};
            \node[main, draw=white, fill=white, text=colorVARa] (ry2) at (1,-7){\footnotesize$\rd_2(x)$};
            \node[main, draw=none, fill=none, text=colorVARa] (ryd) at (1,-8){\footnotesize$\vdots$};
            \node[main, draw=white, fill=white, text=colorVARa] (ryn) at (1,-9){\footnotesize$\rd_n(x)$};

            \node[main, draw=white, fill=white, text=colorVARb] (rx1) at (-1,-6){\footnotesize$\rd_1(y)$};
            \node[main, draw=white, fill=white, text=colorVARb] (rx2) at (-1,-7){\footnotesize$\rd_2(y)$};
            \node[main, draw=none, fill=none, text=colorVARb] (rxd) at (-1,-8){\footnotesize$\vdots$};
            \node[main, draw=white, fill=white, text=colorVARb] (rxn) at (-1,-9){\footnotesize$\rd_n(y)$};

            \draw [rf] (wy1.east) -- (ry1.west) node [midway, above right = -2pt] {}; 
            \draw [rf] (wy2.east) -- (ry2.west) node [midway, above right = -2pt] {};
            \draw [rf] (wyn.east) -- (ryn.west) node [midway, above right = -2pt] {};

            \draw [rf] (wx1.west) -- (rx1.east) node [midway, above right = -2pt] {};
            \draw [rf] (wx2.west) -- (rx2.east) node [midway, above right = -2pt] {}; 
            \draw [rf] (wxn.west) -- (rxn.east) node [midway, above right = -2pt] {}; 
        
        \end{tikzpicture}
\caption{Extended Store Buffer}
\label{SUBFIG:WEAKTW2}
\end{subfigure}%
\begin{subfigure}{.33\textwidth}
\centering
\begin{tikzpicture}[ line width=1pt, main/.style = {draw,inner sep=2pt},scale=0.6,yscale=0.7, xscale=2.2]

            \node[main, draw=white, fill=white, text=colorVARa] (wy1) at (-1,-6){\footnotesize$\wt_1(x)$};
            \node[main, draw=white, fill=white, text=colorVARa] (wy2) at (-1,-7){\footnotesize$\wt_2(x)$};
            \node[main, draw=none, fill=none, text=colorVARa] (wyd) at (-1,-8){\footnotesize$\vdots$};
            \node[main, draw=white, fill=white, text=colorVARa] (wyn) at (-1,-9){\footnotesize$\wt_n(x)$};

            \node[main, draw=white, fill=white, text=colorVARb] (wx1) at (1,-6){\footnotesize$\wt_1(y)$};
            \node[main, draw=white, fill=white, text=colorVARb] (wx2) at (1,-7){\footnotesize$\wt_2(y)$};
            \node[main, draw=none, fill=none, text=colorVARb] (wxd) at (1,-8){\footnotesize$\vdots$};
            \node[main, draw=white, fill=white, text=colorVARb] (wxn) at (1,-9){\footnotesize$\wt_n(y)$};

            \node[main, draw=white, fill=white, text=colorVARa] (ry1) at (1,-2){\footnotesize$\rd_1(x)$};
            \node[main, draw=white, fill=white, text=colorVARa] (ry2) at (1,-3){\footnotesize$\rd_2(x)$};
            \node[main, draw=none, fill=none, text=colorVARa] (ryd) at (1,-4){\footnotesize$\vdots$};
            \node[main, draw=white, fill=white, text=colorVARa] (ryn) at (1,-5){\footnotesize$\rd_n(x)$};

            \node[main, draw=white, fill=white, text=colorVARb] (rx1) at (-1,-2){\footnotesize$\rd_1(y)$};
            \node[main, draw=white, fill=white, text=colorVARb] (rx2) at (-1,-3){\footnotesize$\rd_2(y)$};
            \node[main, draw=none, fill=none, text=colorVARb] (rxd) at (-1,-4){\footnotesize$\vdots$};
            \node[main, draw=white, fill=white, text=colorVARb] (rxn) at (-1,-5){\footnotesize$\rd_n(y)$};

            \draw [rf] (wy1.east) -- (ry1.west) node [midway, above right = -2pt] {}; 
            \draw [rf] (wy2.east) -- (ry2.west) node [midway, above right = -2pt] {};
            \draw [rf] (wyn.east) -- (ryn.west) node [midway, above right = -2pt] {};

            \draw [rf] (wx1.west) -- (rx1.east) node [midway, above right = -2pt] {};
            \draw [rf] (wx2.west) -- (rx2.east) node [midway, above right = -2pt] {}; 
            \draw [rf] (wxn.west) -- (rxn.east) node [midway, above right = -2pt] {}; 
        
        \end{tikzpicture}
\caption{Extended Load Buffer}
\label{SUBFIG:WEAKTW3}
\end{subfigure}%
\caption{
Examples of $\rf$-graphs with very weak behavior but low treewidth.
}
\label{FIG:WEAKTW}
\end{figure}

%% file: figures/pathdec_esb.tex
\begin{figure}[t]
\centering
\begin{tikzpicture}[ line width=1pt,scale=1] 

\node[draw, align=left, rounded corners] (a2) at (2.5,0){\textcolor{white}{\footnotesize$\rd_1(x)$}\\
\textcolor{colorVARa}{\footnotesize$\wt_{1}(x)$}\\
\textcolor{colorVARa}{\footnotesize$\rd_{1}(x)$}\\
\textcolor{colorVARa}{\footnotesize$\wt_{2}(x)$}};

\node[draw, align=left, rounded corners, right = 0.2cm of a2] (a3){\textcolor{white}{\footnotesize$\rd_1(x)$}\\
\textcolor{colorVARa}{\footnotesize$\rd_{1}(x)$}\\
\textcolor{colorVARa}{\footnotesize$\wt_{2}(x)$}\\
\textcolor{colorVARa}{\footnotesize$\rd_{2}(x)$}};

\node[draw, align=left, rounded corners, right = 0.2cm of a3] (a4){\textcolor{colorVARa!50}{\footnotesize$\rd_1(x)$}\\
\textcolor{colorVARa}{\footnotesize$\wt_{2}(x)$}\\
\textcolor{colorVARa}{\footnotesize$\rd_{2}(x)$}\\
\textcolor{colorVARa}{\footnotesize$\wt_{3}(x)$}};

\node[draw, align=left, rounded corners, right = 0.2cm of a4] (a5){\textcolor{colorVARa!50}{\footnotesize$\rd_1(x)$}\\
\textcolor{colorVARa}{\footnotesize$\rd_{2}(x)$}\\
\textcolor{colorVARa}{\footnotesize$\wt_{3}(x)$}\\
\textcolor{colorVARa}{\footnotesize$\rd_{3}(x)$}};

\node[draw, align=left, rounded corners, right = 0.5cm of a5] (a6){\textcolor{colorVARa!50}{\footnotesize$\rd_1(x)$}\\
\textcolor{colorVARa}{\footnotesize$\rd_{n-1}(x)$}\\
\textcolor{colorVARa}{\footnotesize$\wt_{n}(x)$}\\
\textcolor{colorVARa}{\footnotesize$\rd_{n}(x)$}};




\node[draw, align=left, rounded corners, right = 0.2cm of a6] (b2) {\textcolor{colorVARa!50}{\footnotesize$\rd_1(x)$}\\
\textcolor{colorVARa}{\footnotesize$\wt_{n}(x)$}\\
\textcolor{colorVARb}{\footnotesize$\wt_{1}(y)$}\\
\textcolor{colorVARb}{\footnotesize$\rd_{1}(y)$}};

\node[draw, align=left, rounded corners, right = 0.2cm of b2] (b3) {\textcolor{colorVARa!50}{\footnotesize$\rd_1(x)$}\\
\textcolor{colorVARb}{\footnotesize$\wt_{1}(y)$}\\
\textcolor{colorVARb}{\footnotesize$\rd_{1}(y)$}\\
\textcolor{colorVARb}{\footnotesize$\wt_{2}(y)$}};

\node[draw, align=left, rounded corners, right = 0.2cm of b3] (b4) {\textcolor{colorVARa!50}{\footnotesize$\rd_1(x)$}\\
\textcolor{colorVARb}{\footnotesize$\rd_{1}(y)$}\\
\textcolor{colorVARb}{\footnotesize$\wt_{2}(y)$}\\
\textcolor{colorVARb}{\footnotesize$\rd_{2}(y)$}};

\node[draw, align=left, rounded corners, right = 0.5cm of b4] (b5) {\textcolor{colorVARa!50}{\footnotesize$\rd_1(x)$}\\
\textcolor{colorVARb}{\footnotesize$\rd_{n-1}(y)$}\\
\textcolor{colorVARb}{\footnotesize$\wt_{n}(y)$}\\
\textcolor{colorVARb}{\footnotesize$\rd_{n}(y)$}};

\draw (a2) -- (a3);
\draw (a3) -- (a4);
\draw (a4) -- (a5);
\draw[dotted] (a5) -- (a6);
\draw (a6) -- (b2);
\draw (b2) -- (b3);
\draw (b3) -- (b4);
\draw[dotted] (b4) -- (b5);

\end{tikzpicture}
\caption{
A tree decomposition of the $\rf$-graph of \cref{SUBFIG:WEAKTW2}.
}
\label{FIG:PATHDECESB}
\end{figure}

%% file: experiments.tex
\Paragraph{Experiments.}
In order to obtain an early assessment on the applicability of MSO for weak memory in practice, we gathered statistics on the treewidth of $\rf$-graphs of typical benchmarks in the weak-memory literature, taken from the GenMC repository~\cite{Kokologiannakis21}.
We find that even under weak memory, the executions of all benchmarks analyzed never exceed their treewidth bound under $\scmm$ that we prove in \cref{lem:sc-tw}.
A full description of the experimental analysis is presented in \cref{SEC:APP_EXPERIMENTS}.

%% file: mso_definability.tex
\section{The MSO Axiomatization of Memory Models}\label{SEC:MSO}

In this section we recall some basics on MSO (\cref{SUBSEC:MSO}), prove the MSO-definability of common weak-memory models (\cref{SUBSEC:MSO_DEFINABLE_MM}), highlight interesting algorithmic implications of this fact (\cref{SUBSEC:BOUNDED_TW_REACHABILITY}), and show the non-MSO-definability of other common models (\cref{SUBSEC:NON_MSO_DEFINABLE_MM}).
\looseness=-1

\input{mso}
\input{mso_definable_mm}
\input{reachability_implications}

\input{non_mso_definable_mm}

%% file: mso.tex
\subsection{The Monadic Second-Order Logic of Concurrent Executions}\label{SUBSEC:MSO}

We begin with MSO adapted to $\rf$-graphs, and recall Courcelle's theorem.

\Paragraph{Monadic Second Order Logic.}
The set of MSO formulas over $\rf$-graphs is given by the grammar
\[
\varphi ::= \True \ |\  \event_1=\event_2\ |\ \llab(\event)=\alpha \ |\ \seqbef(\event_1,\event_2)\ |\ \rf(\event_1,\event_2) \ |\ \event\in X\ |\ \varphi \lor \varphi \ |\  \neg\varphi \ | \ \exists \event. \varphi\ |\ \exists X. \varphi\
\]
where $\alpha\in \LabelDom$, $\event$, $\event_1,$ and $\event_2$ are first-order variables interpreted as events, and $X$ is a second order variable interpreted as a set of events.
We also use common shorthand notation such as $\neq$ and $\land$, as well as label predicates of the form $\op(x)\in \{\wt, \rmw\}$.
The semantics of MSO is rather intuitive -- we refer to~\cite{Courcelle1997} for a formal definition.
For instance, if $R$ is some relation expressible in MSO (e.g., $\seqbef$ or $\rf$), the transitive and reflexive-transitive closures of $R$ can be expressed by the following formulas.
\begin{align*}
\varphi_{R^+}(\event_1, \event_2) \triangleq \ &\neg \exists X. \left(\event_2\not \in X \land \forall \event_3, \event_4. \left((\event_1=\event_3 \lor \event_3\in X) \land  R(\event_3, \event_4) \implies \event_4\in X\right)   \right)
\numberthis\label{eq:mso_tc}\\
\varphi_{R^*}(\event_1, \event_2) \triangleq \ &\event_1 = \event_2 \lor \varphi_{R^+}(\event_1, \event_2)
\numberthis\label{eq:mso_rtc}
\end{align*}
We call an MSO formula \emph{closed} if it has no free variables.
Given a closed MSO formula $\varphi$ and an $\rf$-graph $G$, we write $G\models \varphi$ to denote that $G$ satisfies $\varphi$, and write $\MSOSemantics{\varphi}=\{ G\colon G\models \varphi \}$.
A class $\mathcal{C}$ of $\rf$-graphs is \emph{MSO-definable} if there exists an MSO formula $\varphi$ such that $\MSOSemantics{\varphi}=\mathcal{C}$.
Courcelle's celebrated metatheorem~\cite{Courcelle1990} gives a fixed-parameter tractable procedure for MSO model checking, which in our setting has the following form.

\smallskip
\begin{restatable}[Courcelle's Theorem]{theorem}{thmcourcelle}\label{thm:courcelle}
There exists an algorithm that, given a closed MSO formula $\varphi$, a natural number $k\in \Nats^+$ and a graph $G=\tuple{\Events, \seqbef, \rf}$, verifies whether $\tw(G)\leq k$ and $G\models \varphi$ in time $O(f(|\varphi|,k)\cdot |G|)$, for some computable function $f$.
\end{restatable}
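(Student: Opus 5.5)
This statement is Courcelle's theorem specialised to $\rf$-graphs, and the paper treats it as a recalled result. The plan is therefore to reduce it to the standard formulation of Courcelle's theorem together with Bodlaender's linear-time treewidth algorithm, rather than to reprove the metatheorem.

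\textbf{Step 1: encode $G$ as a relational structure.} View $G=\tuple{\Events,\seqbef,\rf}$ as a finite structure over the signature with two binary relations $\seqbef$ and $\rf$ and one unary predicate $P_\alpha$ for each label $\alpha$ occurring in $\varphi$. The atom $\llab(\event)=\alpha$ becomes $P_\alpha(\event)$. Since $\varphi$ mentions only finitely many labels, the signature is finite and depends only on $\varphi$. The Gaifman graph of this structure is exactly the undirected graph with edge set $\seqbef\cup\rf$, so its treewidth is $\tw(G)$ as defined earlier. The encoding is computable in time $O(|G|)$.

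\textbf{Step 2: test the treewidth bound.} Run Bodlaender's algorithm with parameter $k$. In time $g(k)\cdot|G|$ it either reports $\tw(G)>k$, in which case we reject, or returns a tree decomposition of width at most $k$.

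\textbf{Step 3: evaluate $\varphi$.} Use the decomposition to produce a labelled tree of size $O(|G|)$ over an alphabet depending only on $k$ and the signature. Compile $\varphi$ into a finite tree automaton whose size is computable from $|\varphi|$ and $k$; this is the standard Thatcher--Wright/Doner translation, with set variables handled by adding alphabet components. Run the automaton in linear time. Combining the two stages gives total time $O(f(|\varphi|,k)\cdot|G|)$ for a computable $f$.

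The main obstacle would be Step 3 if done from scratch, specifically the MSO-to-automaton translation on decompositions, including correctly interpreting edge atoms between vertices in bags. In the paper I would cite~\cite{Courcelle1990,Courcelle1997} for this step. The only specialisation needing verification is Step 1's claim that labels and the two edge relations fit a finite signature whose Gaifman graph has treewidth $\tw(G)$.
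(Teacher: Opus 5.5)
Your proposal is correct and matches the paper's treatment. The paper gives no proof of this statement and simply invokes Courcelle's theorem~\cite{Courcelle1990}; your reduction is the standard argument behind that citation: encode $G$ as a finite relational structure whose Gaifman graph is $(\Events, \seqbef\cup\rf)$, run Bodlaender's linear-time decomposition, then apply the MSO-to-tree-automaton translation, with your restriction to the labels occurring in $\varphi$ correctly keeping $f$ dependent only on $|\varphi|$ and $k$.
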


The proof of \cref{thm:courcelle} also yields the following corollary regarding satisfiability in MSO~\cite{Courcelle2012}.

\smallskip
\begin{restatable}[Courcelle's Theorem]{corollary}{cormsosatisfiability}\label{cor:mso_satisfiability}
Given a closed MSO formula $\varphi$ and some $k\in \Nats^+$, the problem of whether there exists an $\rf$-graph $G=\tuple{\Events, \seqbef, \rf}$ with $\tw(G)\leq k$ and $G\models\varphi$ is decidable.
\end{restatable}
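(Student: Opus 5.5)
This corollary is the standard decidability of MSO satisfiability over graphs of bounded treewidth (Seese/Courcelle). I would derive it by reducing to emptiness of a tree automaton. The route is to encode $\rf$-graphs of treewidth at most $k$ as labeled trees over a finite alphabet. Then I would translate $\varphi$ into an MSO formula over these trees, and finally invoke the decidability of the MSO theory of finite labeled trees (Thatcher--Wright/Doner), i.e., emptiness of the equivalent tree automaton.

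\emph{Step 1 (encoding).} Every graph with $\tw(G)\leq k$ admits a rooted tree decomposition of width $k$. By standard normalization (nice decompositions) we may assume that each node introduces or forgets a single vertex, or joins two children with equal bags. Each bag's vertices get ``colors'' from $\IntSet{k+1}$, consistently along the decomposition. A node is then labeled by a symbol from a finite alphabet. The symbol records the node type, which colors are in use, the label in $\LabelDom$ of any introduced vertex, and which $\seqbef$/$\rf$ edges hold between the introduced vertex and the other bag vertices. Finiteness here uses that $\ThreadDom$, $\LocationDom$ and $\ValueDom$ (hence $\LabelDom$) are finite and that $k$ is fixed. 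Every $\rf$-graph of treewidth $\leq k$ has such an encoding, and every well-formed tree decodes to a graph of treewidth $\leq k$. Well-formedness is itself MSO-definable on trees: it requires consistent color usage and that every vertex is introduced before it is forgotten.

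\emph{Step 2 (interpretation).} Each event $e$ is identified with the unique tree node introducing it. Sets of events become sets of introduce-nodes. Atomic predicates translate as follows. The test $\llab(e)=\alpha$ reads the introduce-node's symbol. The predicate $\seqbef(e_1,e_2)$ (resp.\ $\rf$) holds iff there are introduce-nodes $u_1,u_2$ such that one, say $u_2$, lies below the other, the color of $e_1$ is not reintroduced or forgotten on the path between them, and $u_2$'s symbol records an edge to that color. This ``same color persists along the path'' condition is expressible in MSO on trees using a set variable for the path. The translation $\varphi\mapsto\varphi'$ commutes with the connectives and quantifiers. We then decide satisfiability of $\varphi'\wedge\mathrm{WF}$ over finite labeled trees; it is satisfiable iff some $\rf$-graph $G$ with $\tw(G)\leq k$ satisfies $\varphi$.

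The main obstacle is Step 2's definition of adjacency across the decomposition via color tracking. Getting it right, together with well-formedness, so that decoding is exactly inverse to encoding is the one delicate point. Everything else is routine. Alternatively, one can note, as the paper does, that this is exactly what the proof of \cref{thm:courcelle} builds: a tree automaton $\mathcal{A}_{\varphi,k}$ accepting the encodings of decompositions of models of $\varphi$. Its emptiness check then directly yields decidability.
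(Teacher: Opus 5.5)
Your route is the standard argument the paper relies on. The paper gives no proof of its own: it only attributes the corollary to the proof of \cref{thm:courcelle} via \cite{Courcelle2012}, i.e., an automaton for $\varphi$ over encoded width-$k$ decompositions plus an emptiness test. That is equivalent to your interpretation into MSO over labeled trees followed by Thatcher--Wright/Doner.

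As written, however, your final equivalence is false. A well-formed tree decodes to an arbitrary finite structure with labels in $\LabelDom$ and two binary relations, which need not be an $\rf$-graph. So satisfiability of $\varphi'\wedge\mathrm{WF}$ only shows that some such structure of treewidth at most $k$ satisfies $\varphi$. For example, take $\varphi=$ ``some read has no $\rf$-predecessor'': no $\rf$-graph satisfies it, yet a single read event does. You must test $(\varphi\wedge\Phi)'\wedge\mathrm{WF}$ instead, where $\Phi$ is an MSO formula defining $\rf$-graphs as in \cref{lem:rf_graphs_mso_definable}. Note that expressing totality of each $\po^t$ and acyclicity of $\seqbef$ needs the transitive-closure formula \cref{eq:mso_tc}.

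Second, the ``unique introduce-node'' does not exist in a nice decomposition read bottom-up. A vertex lying in a join bag has been introduced separately in both subtrees below it. Instead, identify each event $e$ with the topmost node $r_e$ of its connected subtree, which is the child of its unique forget node; equivalently, read the decomposition top-down from an empty root bag.

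For an edge between $u$ and $v$, the two subtrees intersect, so $r_u$ and $r_v$ are comparable. They are also distinct, since each forget node drops a single vertex. Say $r_v$ lies below $r_u$. Connectivity then puts $u$ in every bag on the path from $r_u$ to $r_v$, including $\Bag_{r_v}$. Record each edge, with its direction, at the lower of the two topmost nodes. You also need one extra bit for a possible $\rf$ self-loop on an RMW, which the paper's definition of $\rf$-graphs allows but your ``other bag vertices'' wording misses.

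With this convention your colour-persistence translation of $\seqbef$ and $\rf$ is exact, and every edge is recorded exactly once. With these two repairs the argument is complete.
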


\Paragraph{The set of $\rf$-graphs.}
Towards our MSO treatment of program semantics under various memory models,
we start by establishing that the set of (well-formed) $\rf$-graphs is MSO-definable.

\smallskip
\begin{restatable}{lemma}{lemrfgraphsmsodefinable}\label{lem:rf_graphs_mso_definable}
The set $\{ G=\tuple{\Events, \seqbef, \rf}\colon G\text{ is an $\rf$-graph} \}$ is MSO-definable.
\end{restatable}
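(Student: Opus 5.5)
We write a closed MSO formula $\Phi_{\text{rf-graph}}$ as a conjunction of clauses, one per well-formedness condition in the definition of $\rf$-graphs (\cref{SUBSEC:PRELIM_EG}). Since $\ThreadDom$, $\LocationDom$ and $\ValueDom$ are fixed and finite, $\LabelDom$ is finite. Therefore predicates such as $\tid(\event)=t$, $\lloc(\event)=x$, $\op(\event)\in\{\wt,\rmw\}$, and $\val_{\wt}(\event_1)=\val_{\rd}(\event_2)$ are finite disjunctions of atoms $\llab(\event)=\alpha$. For instance, the value condition is $\bigvee_{(\alpha,\beta)} \llab(\event_1)=\alpha\land\llab(\event_2)=\beta$, ranging over the finitely many compatible label pairs.

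The $\rf$ clauses are routine first-order conditions:
\begin{compactitem}
\item every $\rf$-edge goes from a write/RMW to a read/RMW on the same register, with matching values;
\item $\rf^{-1}$ is functional, i.e., $\forall \event_1,\event_2,\event.\ \rf(\event_1,\event)\land\rf(\event_2,\event)\implies \event_1=\event_2$;
\item every read/RMW has an $\rf$-predecessor.
\end{compactitem}
For $\seqbef$ we first require that every $\seqbef$-edge relates events of the same thread. We then use the formula $\varphi_{R^+}$ of \eqref{eq:mso_tc} with $R=\seqbef$ to define $\po$.

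The main step is to express that $\po^t$ is a total order and that $\seqbef$ is its transitive reduction. We plan the following conditions:
\begin{enumerate*}[label=(\roman*)]
\item each event has at most one $\seqbef$-successor and at most one $\seqbef$-predecessor;
\item $\seqbef$ is acyclic, expressed as $\forall\event.\neg\varphi_{\seqbef^+}(\event,\event)$;
\item any two distinct events of the same thread are related by $\varphi_{\seqbef^+}$ in one direction or the other.
\end{enumerate*}
Conditions (i) and (ii) make $\seqbef$ a disjoint union of finite simple paths, since the graph is finite and acyclic with in- and out-degree at most one. Condition (iii) then forces each thread to occupy exactly one path. That path's $\po$ is a strict total order, and its covering relation is precisely the path edges. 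Hence $\seqbef$ coincides with the transitive reduction of $\po$.

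Conversely, every genuine $\rf$-graph satisfies all of (i)--(iii), because the transitive reduction of a total order is a Hamiltonian path. This two-direction correctness check for the $\seqbef$ clauses is the only nontrivial argument; the rest is direct transcription of the definition. Taking the conjunction of all clauses yields $\MSOSemantics{\Phi_{\text{rf-graph}}}$ equal to the set of $\rf$-graphs.
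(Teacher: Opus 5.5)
Your proof is correct. On the $\rf$ side it coincides with the paper's formula $\msorf$; the real difference is in how you handle $\seqbef$, and there your version is the sound one.

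The paper's $\msosb$ is purely first-order. It requires that every event have at most one $\seqbef$-successor and at most one $\seqbef$-predecessor, that $\seqbef$-edges stay within a thread, and that no two $\seqbef$-minimal events share a thread. From this the paper concludes that $\rf$-graphs are even first-order definable.

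Your clauses (ii) and (iii) use the genuinely second-order formula $\varphi_{\seqbef^+}$ of \cref{eq:mso_tc}. They supply exactly what those conditions miss, namely acyclicity. The paper's conjuncts are also satisfied when a thread's events form a $\seqbef$-path plus a disjoint $\seqbef$-cycle (or only a cycle). In that case $\po^t$ is not a total order, so the structure is not an $\rf$-graph. (As printed, the paper's last conjunct also lacks $\event_1\neq\event_2$, which would rule out $\seqbef$-minimal events altogether.)

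This is not merely an omission in the write-up. Since the Gaifman graph of $\seqbef\cup\rf$ has bounded degree, Hanf locality shows that no first-order sentence separates two structures with the same total number of events: a single long $\seqbef$-path of writes in one thread, versus a path plus a disjoint long cycle in that thread. So some second-order ingredient such as yours is unavoidable.

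Your route therefore gives a correct MSO characterization at the price of dropping the paper's unfounded first-order claim. That costs nothing downstream, since \cref{lem:free_rf_graphs_definable} and all applications of Courcelle's theorem only need MSO.
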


\Paragraph{The free $\rf$-graphs of a program.}
Given a concurrent program $\ConcProg$, we define its language of \emph{free $\rf$-graphs} as $\Langrf{\ConcProg}=\{ G=\tuple{\Events, \seqbef, \rf}\colon G \text{ is an $\rf$-graph and }\tuple{\Events, \seqbef}\in \Lang{\ConcProg}\}$, i.e., it is the set of  $\rf$-graphs which, if we ignore the $\rf$ component, are generated by $\ConcProg$. 
Note that here $G$ is not necessarily consistent in some memory model, hence the term ``free''.
The following lemma states that this language is MSO-definable.
It is established by conjuncting the MSO formula behind \cref{lem:rf_graphs_mso_definable} with an MSO formula $\msoprogsb$ which expresses that $\tuple{E, \seqbef} \in \Lang{\ConcProg}$, in spirit similar to the B\"uchi–Elgot–Trakhtenbrot theorem on the MSO definability of regular languages~\cite{Buchi1960}.

\smallskip
\begin{restatable}{lemma}{lemfreerfgraphsdefinable}\label{lem:free_rf_graphs_definable}
For any concurrent program $\ConcProg$, the language $\Langrf{\ConcProg}$ is MSO-definable.
\end{restatable}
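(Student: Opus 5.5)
We prove \cref{lem:free_rf_graphs_definable} by writing $\Langrf{\ConcProg}$ as $\MSOSemantics{\varphi_{\rf\text{-graph}} \land \msoprogsb}$. Here $\varphi_{\rf\text{-graph}}$ is the formula given by \cref{lem:rf_graphs_mso_definable}, and $\msoprogsb$ is a closed formula that holds on an $\rf$-graph $G$ exactly when $\tuple{G.\Events, G.\seqbef}\in\Lang{\ConcProg}$. Since $\Lang{\ConcProg}$ requires $\ProjectGraph{G}{\seqbef}{t}\in\Lang{\Prog_t}$ for every thread $t\in\ThreadDom$, and $\ThreadDom$ is finite, we set $\msoprogsb \triangleq \bigwedge_{t\in\ThreadDom}\psi_t$. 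Each $\psi_t$ expresses that the $\seqbef$-chain of the events with $\tid(\event)=t$ is accepted by $\Prog_t=\tuple{\States_t,\Alphabet_t,\Trans_t,\InitState_t}$. The predicate $\tid(\event)=t$ is the finite disjunction of $\llab(\event)=\alpha$ over $\alpha\in\LabelDom^t$, which is finite because $\LocationDom$ and $\ValueDom$ are finite.

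Following B\"uchi--Elgot--Trakhtenbrot, $\psi_t$ guesses a run. First, we eliminate $\epsilon$-transitions. Let $\Trans_t^*$ be the relation $p\LTransTo{\alpha}q$ iff there are $\epsilon$-paths $p\to p'$ and $q'\to q$ with $p'\LTransTo{\alpha}q'\in\Trans_t$. This relation is finite and computable. The language $\Lang{\Prog_t}$ is unchanged, because $\Lang{\cdot}$ places no acceptance condition on the final state, so every state is accepting. Then
$\psi_t \triangleq \exists (X_q)_{q\in\States_t}.\ \theta_t$, where $X_q$ holds the events after which the run is in state $q$. The formula $\theta_t$ is the conjunction of four conditions:
\begin{compactenum}
\item The sets $X_q$ partition the events of thread $t$.
\item The $\seqbef$-minimal event $\event$ of thread $t$, i.e.\ one with no $\seqbef$-predecessor, lies in some $X_q$ with $\InitState_t\LTransTo{\llab(\event)}q$ in $\Trans_t^*$.
\item For every pair $\seqbef(\event_1,\event_2)$ with both events in thread $t$, $\event_1\in X_p$, and $\event_2\in X_q$, we have $p\LTransTo{\llab(\event_2)}q\in\Trans_t^*$.
\item If thread $t$ has no events, the empty word must be accepted. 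This holds trivially, since $\InitState_t$ reaches itself.
\end{compactenum}
Each condition is a finite Boolean combination of the allowed atomic predicates with first-order quantifiers over events. Because the transition system has finitely many states and labels, the formula is finite.

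For correctness, we use the fact that the conjunct $\varphi_{\rf\text{-graph}}$ ensures $\po^t$ is a total order whose transitive reduction is $\seqbef$. Hence the events of thread $t$ form a single $\seqbef$-chain $\event_1\ldots\event_n$ with a unique minimal element, and $\seqbef$ edges stay within a thread. An accepting run $\InitState_t=p_0,p_1,\ldots,p_n$ on $\llab(\event_1)\cdots\llab(\event_n)$ under $\Trans_t^*$ gives the assignment $\event_i\in X_{p_i}$, which satisfies $\theta_t$. Conversely, a satisfying assignment read along the chain yields such a run, using induction along $\seqbef$. Runs under $\Trans_t^*$ correspond exactly to runs of $\Prog_t$, with $\epsilon$-steps absorbed.

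The main care point is this reliance on well-formedness: chain-shape is not expressible locally in $\psi_t$. It is exactly why the formula is conjoined with \cref{lem:rf_graphs_mso_definable} rather than used alone. The $\epsilon$-elimination step is the other detail to get right.
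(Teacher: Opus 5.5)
Your proof is correct and takes essentially the paper's route: conjoin the well-formedness formula of \cref{lem:rf_graphs_mso_definable} with a B\"uchi--Elgot--Trakhtenbrot-style formula $\msoprogsb$ that existentially guesses a run along the $\seqbef$-chains. The paper guesses one set $X_\delta$ per transition and matches target/source states of consecutive events, while you guess one set $X_q$ per state, per thread, and your explicit $\epsilon$-closure $\Trans_t^*$ covers a case the paper's formula misses: it assigns each event a single transition $\delta$ with $\alpha(\delta)=\llab(\event)$, so it does not account for runs that take $\epsilon$-steps between labeled transitions.
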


In light of \cref{lem:free_rf_graphs_definable}, for any memory model $\mm$, the $\rf$-semantics $\RFGraphsOf{\ConcProg}{\mm}$ are MSO-definable as long as the set $\RFGraphsMM{\mm}$ (i.e., the memory model itself) is MSO-definable.
Towards this, in the coming sections we study the MSO definability of the memory models of \cref{tab:main_models}.

%% file: mso_definable_mm.tex
\subsection{The MSO-Definability of $\ramm$, $\rmwfree{\sramm}$, $\rlxmm$, and $\wramm$  }\label{SUBSEC:MSO_DEFINABLE_MM}

Here we prove the following theorem, by establishing the MSO-definability of the consistency axioms of the corresponding memory models.

\smallskip
\begin{restatable}{theorem}{thmmsodefinablemms}\label{thm:mso_definable_mms}
For any memory model $\mm\in \{\ramm, \rmwfree{\sramm}, \rlxmm, \wramm \}$ and  concurrent program $\ConcProg$,
the set $\RFGraphsOf{\ConcProg}{\mm}$ is MSO-definable.
\end{restatable}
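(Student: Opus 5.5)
By \cref{lem:free_rf_graphs_definable}, $\Langrf{\ConcProg}$ is MSO-definable. Since $\RFGraphsOf{\ConcProg}{\mm}=\Langrf{\ConcProg}\cap\RFGraphsMM{\mm}$ and MSO is closed under conjunction, it suffices to show that $\RFGraphsMM{\mm}$ is MSO-definable for each $\mm\in\{\ramm,\rmwfree{\sramm},\rlxmm,\wramm\}$.

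Two tools do the routine work.
\begin{compactitem}
\item Any binary relation built from $\seqbef$, $\rf$, labels and first-order connectives is MSO-definable. So are its transitive and reflexive-transitive closures, via \eqref{eq:mso_tc} and \eqref{eq:mso_rtc}, which apply to any MSO-definable $R$. In particular $\po$, $\hb$, $\rf^{-1}$, and compositions such as $\hb;[\WriteDom];\hb;\rf^{-1}$ are definable.
\item $\acyclic(R)$ is $\neg\exists \event.\,\varphi_{R^+}(\event,\event)$, and $\irreflexive(R)$ is $\neg\exists\event.\,R(\event,\event)$.
\end{compactitem}
With these, $\wramm$ is immediate, because its axioms \ref{eq:hbirr}, \ref{eq:wrc} and \ref{eq:wat} never mention $\mo$. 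We only need that $\hb_l$ and the location-equality side conditions are label-definable, which holds since $\LocationDom$ is finite.

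The real issue is that $\ramm$, $\rmwfree{\sramm}$ and $\rlxmm$ quantify existentially over $\mo$, a binary relation, and MSO cannot quantify over binary relations. I will eliminate $\mo$ by proving, for each model, a saturation characterization. I define an $\mo$-free "forced" relation $\mo'$ on same-location writes/RMWs $w_1,w_2$.
\begin{compactitem}
\item For $\ramm$: $w_1\LTo{\mo'}w_2$ if $w_1 \LTo{\hb} w_2$, or if there is a read/RMW $r$ with $w_1\LTo{\hb}r$, $w_2\LTo{\rf}r$ and $w_1\neq w_2$ (this forced edge comes from \ref{eq:rc}). The RMW case adds the atomicity-forced edges: if $w\LTo{\rf}u$ for an RMW $u$, then every other write $\mo'$-after $w$ is pushed after $u$.
\item For $\rlxmm$: the same construction, with $\hb$ replaced by $(\po_x\cup\rf_x)^+$ per location.
\item For $\rmwfree{\sramm}$: the same construction, but acyclicity is required of $\hb\cup\mo'$ jointly.
\end{compactitem}
The claim in each case is that $G\models\mm$ iff the appropriate combination of $\hb$ and $\mo'$ (or $\po_x\cup\rf_x\cup\mo'_x\cup\fr'_x$) is acyclic, together with the mo-free axioms \ref{eq:hbirr} and \ref{eq:wat}. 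Here $\fr'\triangleq\rf^{-1};\mo'$.

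Each condition is an acyclicity statement about a first-order-definable relation, hence MSO-definable. For the RMW atomicity edges, the closure must be computed jointly: I take the transitive closure of the union of $\hb$, the read-coherence edges, and the atomicity edges, and do not saturate iteratively.

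The soundness direction of each characterization is easy. Every edge of $\mo'$ is forced in any consistent $\mo$ by \ref{eq:wc}, \ref{eq:rc}, \ref{eq:at} or \ref{eq:scloc}, so a cycle in the saturated relation contradicts consistency. The main obstacle is completeness: given acyclicity, I must construct a total $\mo_x$ per location that satisfies all axioms. My plan is to take $\mo_x$ as an arbitrary linearization of the (acyclic) restriction of $(\hb\cup\mo')^+$ to $\locx{\WriteDom}\cup\locx{\RMWDom}$. I then verify each axiom by contradiction: a violating cycle in the completed $\mo$ is shortened, step by step, to one using only forced edges.
\begin{compactitem}
\item Every $\mo$ edge not in $\mo'$ is between $(\hb\cup\mo')$-incomparable writes.
\item A cycle through such an edge, combined with $\hb$ or $\rf^{-1}$, would make those writes comparable.
\end{compactitem}
The delicate part is RMWs under \ref{eq:at}. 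Here the linearization must place each RMW immediately after its writer. I would handle this by first contracting each $\rf$-chain of RMWs, which is a path by \ref{eq:wat}, into a block, and then linearizing blocks. I expect this is where extra forced edges are needed.
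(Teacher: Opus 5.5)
Your reduction to $\RFGraphsMM{\mm}$, your $\wramm$ case, and your $\rmwfree{\sramm}$ case (read-coherence edges plus acyclicity of $\hb\cup\mo'$) match the paper. The gap is the RMW part of $\ramm$, and hence also of $\rlxmm$, since \ref{eq:scloc} entails \ref{eq:at}.

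Your atomicity rule says: if $w\LTo{\rf}u$ with $u$ an RMW, and $w\LTo{\mo'}w'$ with $w'\neq u$, then $u\LTo{\mo'}w'$. This rule has $\mo'$ in its premise, so $\mo'$ is a least fixed point over binary relations. MSO cannot quantify over binary relations, and you give no argument that this fixed point is definable.

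Read literally, ``do not saturate iteratively'' means computing the atomicity edges once from the $\hb$/read-coherence edges and then taking one transitive closure. Under that reading the characterization is false. Take one register and a thread $w_1\LTo{\seqbef}w_2\LTo{\seqbef}r$ (two writes, then a read). Add RMWs $u_1,u_2$ in other threads with $w_1\LTo{\rf}u_1\LTo{\rf}u_2\LTo{\rf}r$.
\begin{itemize}
\item The graph is not $\ramm$-consistent. \ref{eq:at} makes $w_1,u_1,u_2$ contiguous in $\mo$, and \ref{eq:wc} puts $w_2$ after them. Then $u_2\LTo{\mo}w_2\LTo{\hb}r$ violates \ref{eq:rc}.
\item The base edges are $w_1\to w_2$, $w_1\to u_1$, $w_1\to u_2$, $u_1\to u_2$ and $w_2\to u_2$.
\item One round of atomicity adds only $u_1\to w_2$. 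The edge $u_2\to w_2$ would need $u_1\to w_2$ as its premise.
\item The union is acyclic, and its linearization $w_1,u_1,w_2,u_2$ itself violates \ref{eq:at}.
\end{itemize}
Lengthening the chain shows that no fixed number of rounds suffices.

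The missing idea is to use the RMW chains in the \emph{definition} of the inferred order, not only in the linearization. Your rule only pushes outgoing edges of $w$ forward along $w$'s chain. Its closure is therefore obtained by lifting every constraint to chain heads via $\varphi_{\rf^*}$, and heads are MSO-definable once \ref{eq:wat} holds.

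The paper's $\msopmora$ relates write heads $h_1\to h_2$ iff either
\begin{itemize}
\item $h_1\LTo{\hb}h_2$, or
\item there are a write/RMW $e_3$ and an edge $e_4\LTo{\rf}e_5$ with $e_3\LTo{\hb}e_5$, $h_1\LTo{\rf^*}e_3$, $h_2\LTo{\rf^*}e_4$, and not $e_3\LTo{\rf^+}e_4$.
\end{itemize}
This is a plain, non-recursive MSO formula, and its acyclicity is the condition.

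Completeness then needs no extra forced edges. Linearize the heads and append each chain right after its head, so \ref{eq:at} holds by construction. Constraints between different chains were already lifted to the heads, and same-chain violations ($e_4\LTo{\rf^+}e_3$) show up as self-loops $h\to h$. In the example above, the triplet $\tuple{w_2,u_2,r}$ yields $w_2\to w_1$ at head level, which closes a cycle with $w_1\LTo{\hb}w_2$.
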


\Paragraph{The case of $\ramm$.}
The MSO formula for $\ramm$ is a conjunction of three formulas: $\msohbirr$, enforcing \ref{eq:hbirr}; $\msowat$, enforcing \ref{eq:wat}; and $\msorcwcra$, enforcing \ref{eq:rc}, \ref{eq:wc}, and \ref{eq:at} under the assumption that \ref{eq:wat} already holds.

\SubParagraph{Expressing $\msohbirr$ and $\msowat$.}
The first two formulas are somewhat straightforward, as the corresponding axioms do not involve an $\mo$, and can thus be phrased directly on the components of an $\rf$-graph.
To define $\msohbirr$, we first express that two events $\event_1, \event_2$ are such that $\event_1\LTo{\hb}\event_2$ in the formula $\msohb=\varphi_{R^+}$, where $\varphi_{R^+}$ is the transitive-closure formula (\cref{eq:mso_tc}) and  $R(\event_1,\event_2)= \rf(\event_1,\event_2) \lor \seqbef(\event_1,\event_2)$.
Then
\[
\msohbirr \triangleq \lnot \exists \event. \msohb(\event,\event)
\numberthis\label{eq:mso_hbirr}
\]
To define $\msowat$, we observe that \ref{eq:wat} is even first-order definable.
\[
\msowat \triangleq \lnot \exists \event_1,\event_2,\event_3. (\event_2\neq \event_3 \land \rmw(\event_2) \land \rmw(\event_3) \land \rf(\event_1, \event_2) \land \rf(\event_1, \event_3))
\numberthis\label{eq:mso_wat}
\]
\SubParagraph{Expressing $\msorcwcra$.}
We now turn our attention to the coherence and atomicity axioms.
These are more challenging because they involve the $\mo$ relation, which is not present in an $\rf$-graph, and thus has to be guessed.
Our main insight behind the MSO definability of $\ramm$ is that there exists an MSO-definable, sound and complete partial modification order for $\ramm$, that can be used instead of $\mo$. 
Next, we outline our approach.

An \emph{RMW chain} is a maximal sequence $\wt_1\LTo{\rf}\rmw_2\LTo{\rf}\rmw_3\LTo{\rf}\cdots \LTo{\rf} \rmw_{m}$.
Any $\rf$-graph that satisfies \ref{eq:wat} (which is expressed independently by $\msowat$ in \cref{eq:mso_wat}) guarantees that all its write and RMW events are decomposable into disjoint RMW chains (i.e., every such event appears in exactly one RMW chain).
First, we observe that given any RMW event $\rmw$, the (unique) write $\wt$ heading the RMW chain of $\rmw$ can be captured in MSO.
This allows us to infer $\mo$-orderings only between write events ---and not between RMWs--- using MSO, in non-fixed-point style.
Second, we prove that the acyclicity of these inferred orderings is nevertheless sufficient and necessary for consistency.

We start with the following (first-order) formula that identifies a conflicting triplet $\tuple{\event_1, \event_2, \event_3}$ for which 
\begin{enumerate*}[label=(\roman*)]
\item $\op(\event_1)\in \{\wt,\rmw\}$, and
\item $\event_2\LTo{\rf}\event_3$,
\end{enumerate*}
since this may imply the existence of an $\mo$ edge.
\[
    \msotriplet(\event_1,\event_2,\event_3) \triangleq \event_1\neq \event_2 \land  \event_1\neq \event_3
     \land \lloc(\event_1)=\lloc(\event_2) \land \op(\event_1) \in \{\wt,\rmw\} \land  \rf(\event_2,\event_3)\numberthis
\]

The following formula infers $\mo$-orderings between write events $\event_1$, $\event_2$.
We infer $\event_1\LTo{\mo}\event_2$ if $\event_1\LTo{\hb}\event_2$, due to \ref{eq:wc}, or there exists a conflicting triplet $\tuple{\event_3, \event_4, \event_5}$ such that
\begin{enumerate*}[label=(\roman*)]
\item $\event_3\LTo{\hb} \event_5$,
\item $\event_3$ (resp., $\event_4$) is in the RMW chain headed by $\event_1$ (resp., $\event_2$), and
\item it is not the case that $\event_3$ and $\event_4$ are in the same RMW chain with $\event_3$ preceding $\event_4$,
\end{enumerate*}
due to \ref{eq:rc} and \ref{eq:at}.
\begin{align}
\msopmora(\event_1,&\event_2) \triangleq \op(e_1) = \wt \land \op(e_2) = \wt \land \lloc(\event_1)=\lloc(\event_2) \land \big[\msohb(\event_1,\event_2)  \lor \exists \event_3, \event_4, \event_5. \nonumber\\
& \big(\msotriplet(\event_3,\event_4,\event_5) \land \msohb(\event_3,\event_5)  \land \lnot \varphi_{\rf^+}(\event_3,\event_4) \land \varphi_{\rf^*}(\event_1, \event_3) \land \varphi_{\rf^*}(\event_2, \event_4) \big) \big]
\numberthis\label{eq:mso_pmora}
\end{align}

Finally, we express coherence and atomicity as the requirement that $\msopmora$ is acyclic, using the transitive closure formula  $\varphi_{R^+}$ from \cref{eq:mso_tc} with $R=\msopmora$,
\[
\msorcwcra \triangleq \lnot \exists \event. \varphi_{R^+}(\event,\event)
\numberthis\label{eq:mso_rc_wc_ra}
\]
and define
\[
\msora \triangleq \msohbirr \land \msowat \land \msorcwcra
\numberthis\label{eq:mso_ra}
\]

The following lemma, together with \cref{lem:rf_graphs_mso_definable}, concludes the MSO-definability of $\RFGraphsMM{\ramm}$.

\smallskip
\begin{restatable}{lemma}{lemramso}\label{lem:ra_mso}
For any $\rf$-graph $G=\tuple{\Events, \seqbef, \rf}$, we have $G\models \msora$ iff $G\models \ramm$.
\end{restatable}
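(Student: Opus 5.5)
We prove the two directions separately. Throughout, write $P$ for the binary relation on write events defined by $\msopmora$, and $h(\event)$ for the head of the RMW chain containing a write/RMW $\event$. Under \ref{eq:wat}, $h(\event)$ is the unique write with $h(\event)\LTo{\rf^*}\event$.

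\emph{($\Leftarrow$) From $\ramm$ to $\msora$.} Fix an $\mo$ such that $\tuple{\Events,\seqbef,\rf,\mo}\models\ramm$. The conjuncts $\msohbirr$ and $\msowat$ hold directly, since \ref{eq:hbirr} is an axiom of $\ramm$, and \ref{eq:at} together with totality of $\mo_x$ yields \ref{eq:wat}. For acyclicity we show $P\subseteq\mo$. This suffices, because $\mo$ is a strict order.
\begin{itemize}
\item The key structural fact comes from \ref{eq:at} and totality of $\mo_x$: the rf-predecessor of every RMW is its immediate $\mo$-predecessor. Hence every RMW chain occupies a contiguous $\mo$-interval, ordered along $\rf$.
\item If $\event_1\LTo{\hb}\event_2$, then $\event_1\LTo{\mo}\event_2$ by \ref{eq:wc} and totality.
\item In the triplet case we have $\event_3\LTo{\hb}\event_5$, $\event_4\LTo{\rf}\event_5$ and $\event_3\neq\event_4$. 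Then \ref{eq:rc} forbids $\event_4\LTo{\mo}\event_3$, so $\event_3\LTo{\mo}\event_4$.
\item If $\event_3$ and $\event_4$ lay in the same chain, contiguity would force $\event_3\LTo{\rf^+}\event_4$, which the formula excludes. So they lie in distinct chains, headed by $\event_1$ and $\event_2$.
\item Since both chains are $\mo$-intervals, $\event_3\LTo{\mo}\event_4$ forces the whole chain of $\event_1$ to precede that of $\event_2$. Hence $\event_1\LTo{\mo}\event_2$.
\end{itemize}

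\emph{($\Rightarrow$) From $\msora$ to $\ramm$.} Assume $G\models\msora$, so $P$ is acyclic. For each location $x$, take a linear extension of $P$ on the chain heads of $x$. Then replace each head by its entire chain, in $\rf$ order, and call the result $\mo$. This is total per location, and \ref{eq:at} holds by construction using \ref{eq:wat}. It remains to check \ref{eq:wc} and \ref{eq:rc}; \ref{eq:hbirr} is given by $\msohbirr$.

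For \ref{eq:wc}, suppose $a\LTo{\hb}b$ and $b\LTo{\mo}a$.
\begin{itemize}
\item If $a$ and $b$ are in the same chain, then $b\LTo{\rf^+}a$, giving an $\hb$-cycle. This contradicts $\msohbirr$.
\item Otherwise, if $b$ is a write, then $h(a)\LTo{\rf^*}a\LTo{\hb}b$ yields $h(a)\,P\,b$ through the $\hb$ disjunct.
\item If $b$ is an RMW with rf-predecessor $p$, then $\tuple{a,p,b}$ is a conflicting triplet. Here $a\neq p$ and $\lnot\, a\LTo{\rf^+}p$, since otherwise $a$ and $b$ share a chain. This yields $h(a)\,P\,h(b)$.
\item In both subcases, $h(a)$ precedes $h(b)$ in the linear extension, contradicting $b\LTo{\mo}a$.
\end{itemize}

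For \ref{eq:rc}, suppose $\event\LTo{\mo}a$, $a\LTo{\hb}r$ and $\event\LTo{\rf}r$. Then $\tuple{a,\event,r}$ is a conflicting triplet. We have $a\neq\event$ because $\event\LTo{\mo}a$, and $a\neq r$ by irreflexivity of $\hb$. Moreover $\lnot\,a\LTo{\rf^+}\event$, because chains are laid out in $\rf$ order and $\event\LTo{\mo}a$.
\begin{itemize}
\item If the chains of $a$ and $\event$ differ, we obtain $h(a)\,P\,h(\event)$, contradicting $\event\LTo{\mo}a$.
\item If they coincide, we obtain the self-loop $h(a)\,P\,h(a)$, which contradicts acyclicity.
\end{itemize}
The only exception is when $r$ is an RMW. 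Then \ref{eq:wat} places $r$ as $\event$'s chain successor, so $r\LTo{\rf^*}a$, and together with $a\LTo{\hb}r$ this gives an $\hb$-cycle, again a contradiction.

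The main obstacle is this last case analysis: showing that inferring $\mo$ only between chain heads, with the side condition $\lnot\varphi_{\rf^+}(\event_3,\event_4)$, loses no constraint. This is especially delicate when conflicting events lie in the same RMW chain, where self-loops of $P$ must be shown to capture exactly the inconsistent configurations. I would first prove the contiguity lemma for RMW chains, and then handle these same-chain cases separately.
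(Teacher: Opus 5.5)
Your proof is correct and takes the same route as the paper. You build $\mo$ from a linear extension of $\msopmora$ on chain heads, with each RMW chain inlined in $\rf$ order, and conversely you show $\msopmora\subseteq\mo$ via \ref{eq:wc}, \ref{eq:rc} and the contiguity of RMW chains; you are in fact more careful than the paper on the \ref{eq:wc} case where the $\hb$-later event is an RMW outside the chain of the $\hb$-earlier one, which you correctly handle through the triplet $\tuple{a,p,b}$.

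One small attribution fix: the fact that an RMW's $\rf$-predecessor is its immediate $\mo$-predecessor (and hence \ref{eq:wat}) also needs \ref{eq:wc}. \ref{eq:at} and totality alone do not exclude $u\LTo{\mo}w$ for $w\LTo{\rf}u$.
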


The case of $\ramm$ acts as a template for the other memory models, which admit simpler axiomatizations.
We describe these here at a high level, and refer to \cref{SUBSEC:APP_MSO_DEF} for their full lemmas and MSO formulas. 

\Paragraph{The case of $\rlxmm$.}
The model $\rlxmm$ requires only \ref{eq:scloc}.
Recall that, for a single register, $\rlxmm$ and $\ramm$ coincide.
It thus suffices to tailor the formulas for $\ramm$ to a single register, and conjoin the resulting per-register formulas, together with $\msowat$, over all $x \in \LocationDom$.

\Paragraph{The case of  $\rmwfree{\sramm}$.}
The RMW-free fragment of $\sramm$ requires \ref{eq:swc} and \ref{eq:rc}.
As for $\ramm$, we infer $\mo$ orderings between writes via an MSO-definable partial order $\msopmosra$, here capturing only the orderings due to \ref{eq:rc}.
This is simpler than $\msopmora$ for two reasons:~we do not need to infer $\wt_1\LTo{\mo}\wt_2$ from $\wt_1\LTo{\hb}\wt_2$, as \ref{eq:swc} mixes $\hb$ and $\mo$; and, since there are no RMW chains, an ordering $\wt_1\LTo{\mo}\wt_2$ arises only from a conflicting triplet $\tuple{\wt_1, \wt_2, \rd}$ with $\wt_1\LTo{\hb}\rd$.

\Paragraph{The case of $\wramm$.}
Recall that we have already defined $\msohbirr$ and $\msowat$, respectively expressing \ref{eq:hbirr} and \ref{eq:wat}.
It remains to express \ref{eq:wrc}, which is straightforward, as it is directly translatable to MSO.

%% file: reachability_implications.tex
\subsection{Algorithmic Implications}\label{SUBSEC:BOUNDED_TW_REACHABILITY}

We now state some direct algorithmic implications of \cref{lem:free_rf_graphs_definable} and \cref{thm:mso_definable_mms}, by equipping  Courcelle's theorem to solve a number of fundamental algorithmic problems concerning the behavior of  programs restricted to bounded-treewidth graphs.
Given a memory model $\mm$, a concurrent program $\ConcProg$, and some $k\in \Nats^+$, let $\RFGraphsOfTw{\ConcProg}{\mm}{k}=\{G\in \RFGraphsOf{\ConcProg}{\mm}\colon \tw(G)\leq k\}$.

\Paragraph{Reachability.}
One of the most fundamental algorithmic questions is that of state reachability:~given a concurrent program $\ConcProg=\tuple{\Prog_{t_1},\dots, \Prog_{t_{|\ThreadDom|}}}$ and a sequence of states $\tuple{s_1, \dots, s_{{|\ThreadDom|}}}$,
is there an execution that, for each $i \in [|\ThreadDom|]$, brings $\Prog_{t_i}$ to state $s_i$?
A natural generalization to this question is to decide whether $\ConcProg$ exhibits an execution that satisfies an arbitrary MSO specification.
As the following lemma states, this question is decidable when restricted to graphs of bounded treewidth.

\smallskip
\begin{restatable}{lemma}{lemboundedtwreachability}\label{lem:bounded_tw_reachability}
For any memory model $\mm\in \{ \ramm, \rlxmm, \rmwfree{\sramm}, \wramm \}$, concurrent program $\ConcProg$, natural number $k\in \Nats^+$ and MSO formula $\varphi$, the question of whether there exists an $\rf$-graph $G\in \RFGraphsOfTw{\ConcProg}{\mm}{k}$ such that $G\models \varphi$ is decidable.
\end{restatable}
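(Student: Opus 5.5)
The plan is to reduce the question to MSO satisfiability over graphs of bounded treewidth, and then invoke \cref{cor:mso_satisfiability}. By \cref{thm:mso_definable_mms}, for each $\mm\in\{\ramm,\rlxmm,\rmwfree{\sramm},\wramm\}$ and the given program $\ConcProg$ there is a closed MSO formula $\psi_{\ConcProg,\mm}$ with $\MSOSemantics{\psi_{\ConcProg,\mm}}=\RFGraphsOf{\ConcProg}{\mm}$. Concretely, $\psi_{\ConcProg,\mm}$ is the conjunction of three parts:
\begin{itemize}
\item the formula of \cref{lem:rf_graphs_mso_definable}, expressing well-formedness of the $\rf$-graph;
\item the formula $\msoprogsb$ of \cref{lem:free_rf_graphs_definable}, expressing membership of $\tuple{\Events,\seqbef}$ in $\Lang{\ConcProg}$;
\item the model formula, e.g.\ $\msora$ for $\ramm$, expressing consistency.
\end{itemize}
All three can be built effectively from $\ConcProg$ and $\mm$. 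Given the input MSO formula $\varphi$, assumed closed as in the statement, I would form the closed formula $\Psi\triangleq\psi_{\ConcProg,\mm}\land\varphi$. By construction, an $\rf$-graph $G$ satisfies $\Psi$ iff $G\in\RFGraphsOf{\ConcProg}{\mm}$ and $G\models\varphi$.

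The key observation is then that there exists $G\in\RFGraphsOfTw{\ConcProg}{\mm}{k}$ with $G\models\varphi$ iff there exists an $\rf$-graph $G$ with $\tw(G)\le k$ and $G\models\Psi$. This holds because $\RFGraphsOfTw{\ConcProg}{\mm}{k}$ is exactly the set of members of $\RFGraphsOf{\ConcProg}{\mm}$ of treewidth at most $k$. The right-hand side is decidable by \cref{cor:mso_satisfiability}, applied to $\Psi$ and $k$.

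The one point that needs care, and the main potential obstacle, is effectiveness. \cref{cor:mso_satisfiability} takes the formula as input, so $\psi_{\ConcProg,\mm}$ must be computable from $\ConcProg$ rather than merely shown to exist. I would check this by inspecting the constructions. The formula $\msoprogsb$ is obtained in Büchi--Elgot--Trakhtenbrot style from the finite LTSs $\Prog_t$, guessing one set variable per state, and the model formulas are fixed once $\LocationDom$ and $\LabelDom$ are fixed; both are therefore computable.

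A second, minor point is that the satisfiability corollary ranges over structures interpreting the signature $\seqbef,\rf,\llab$. Arbitrary such structures are excluded by the well-formedness conjunct, so satisfying structures are genuine $\rf$-graphs. Treewidth is measured on $\seqbef\cup\rf$, which matches the convention used in \cref{thm:courcelle}.
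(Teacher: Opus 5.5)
Your proposal is correct and is essentially the paper's proof: take the MSO formula defining $\RFGraphsOf{\ConcProg}{\mm}$ (obtained from \cref{lem:rf_graphs_mso_definable}, \cref{lem:free_rf_graphs_definable} and the model formulas), conjoin it with $\varphi$, and decide satisfiability over graphs of treewidth at most $k$ via \cref{cor:mso_satisfiability}. Your remark that the defining formula must be effectively computable from $\ConcProg$ is a useful point the paper leaves implicit. One small inaccuracy: the paper's $\msoprogsb$ guesses one set variable per transition, not per state, which does not affect the argument.
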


Note that state reachability for $\ramm$ is undecidable~\cite{Abdulla19}, even when restricted to RMW-free executions~\cite{Conrado26}; furthermore, although state reachability for $\sramm$ and $\wramm$ is decidable~\cite{Lahav2022}, MSO reachability is not, due to the fact that execution graphs under these models include large grid minors (\cref{SUBSEC:TW_SC_TSO}), and MSO satisfiability over grids is undecidable~\cite{Courcelle2012}.

\Paragraph{Verification.}
Another fundamental problem is that of verification: given a concurrent program $\ConcProg$ and a specification $\varphi$, is it the case that all executions of $\ConcProg$ satisfy $\varphi$?
Since MSO is closed under negations, \cref{lem:bounded_tw_reachability} implies that this problem is decidable for $\mm\in \{ \ramm, \rlxmm, \rmwfree{\sramm}, \wramm \}$ when $\varphi$ is an MSO formula and we consider only executions of bounded treewidth.

\Paragraph{Robustness and differentiation.}
Given two memory models $\mm_1$ and $\mm_2$ with $\mm_1\mmorder\mm_2$, the robustness question of a concurrent program $\ConcProg$ wrt $(\mm_1, \mm_2)$ asks whether $\ConcProg$ exhibits executions under $\mm_2$ that are not possible under $\mm_1$.
On the other hand, given two concurrent programs $\ConcProg_1$ and $\ConcProg_2$ under a memory model $\mm$, the differentiation question asks whether $\ConcProg_1$ exhibits behaviors that are not possible under $\ConcProg_2$.
The MSO definability of memory models implies that robustness and differentiation are decidable over graphs of bounded treewidth.

\smallskip
\begin{restatable}{lemma}{lemboundedtwdiff}\label{lem:bounded_tw_diff}
For any $\mm_1, \mm_2\in \{ \ramm, \rlxmm, \rmwfree{\sramm}, \wramm \}$, two concurrent programs $\ConcProg_1, \ConcProg_2$, and natural number $k\in \Nats^+$, the question of whether $\RFGraphsOfTw{\ConcProg_1}{\mm_1}{k}\subseteq \RFGraphsOfTw{\ConcProg_2}{\mm_2}{k}$ is decidable.
\end{restatable}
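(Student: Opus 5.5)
The plan is to reduce the inclusion question to an MSO satisfiability question over graphs of treewidth at most $k$, and then apply \cref{cor:mso_satisfiability}. The inclusion $\RFGraphsOfTw{\ConcProg_1}{\mm_1}{k}\subseteq \RFGraphsOfTw{\ConcProg_2}{\mm_2}{k}$ fails iff there is an $\rf$-graph $G$ with $\tw(G)\leq k$ such that $G\in\RFGraphsOf{\ConcProg_1}{\mm_1}$ and $G\notin\RFGraphsOf{\ConcProg_2}{\mm_2}$.

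The bound $\tw(G)\leq k$ is a property of $G$ alone. So the treewidth restriction on the right-hand side is automatically met by any $G$ drawn from the left-hand side, and the second disjunct of non-membership can be dropped.

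The first step is to build closed MSO formulas $\psi_1,\psi_2$ with $\MSOSemantics{\psi_i}=\RFGraphsOf{\ConcProg_i}{\mm_i}$. Each $\psi_i$ is obtained by conjoining three pieces: the formula of \cref{lem:rf_graphs_mso_definable} describing well-formed $\rf$-graphs, the formula $\Phi_{\ConcProg_i}$ of \cref{lem:free_rf_graphs_definable} enforcing $\tuple{\Events,\seqbef}\in\Lang{\ConcProg_i}$, and the axiomatization of $\mm_i$ from \cref{thm:mso_definable_mms}, i.e.\ $\msora$ (\cref{lem:ra_mso}) or its analogues for $\rlxmm$, $\rmwfree{\sramm}$ and $\wramm$. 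In fact \cref{thm:mso_definable_mms} already gives this conjunction directly.

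The second step uses closure of MSO under negation and conjunction to form $\varphi\triangleq\psi_1\land\lnot\psi_2$. We then check that $\MSOSemantics{\varphi}$ restricted to treewidth $\leq k$ is exactly the set of counterexamples to the inclusion. One subtlety is that $\lnot\psi_2$ also admits structures that are not well-formed $\rf$-graphs, but $\psi_1$ already forces well-formedness, so this causes no problem.

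The final step applies \cref{cor:mso_satisfiability} to $\varphi$ and $k$. This decides whether some $\rf$-graph of treewidth $\leq k$ satisfies $\varphi$, and the inclusion holds iff the answer is negative. There is no real obstacle here: the only points needing care are the treewidth-closure observation above and the fact that the formulas $\psi_i$ are effectively computable from $\ConcProg_i$ and $\mm_i$, which the constructive proofs of the cited lemmas provide.
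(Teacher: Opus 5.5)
Your proposal is correct and follows the paper's own proof: you build MSO formulas for $\RFGraphsOf{\ConcProg_1}{\mm_1}$ and $\RFGraphsOf{\ConcProg_2}{\mm_2}$ and then invoke \cref{cor:mso_satisfiability} over graphs of treewidth $\leq k$. Your formulation is the precise one, namely that the inclusion holds iff $\psi_1\land\lnot\psi_2$ is \emph{unsatisfiable} over such graphs. The paper instead says that $\Phi_1\implies\Phi_2$ is ``satisfiable''; this should read ``valid'', or equivalently that its negation is unsatisfiable.
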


%% file: non_mso_definable_mm.tex
\subsection{Non-MSO-Definability for $\scmm$, $\tsomm$, $\psomm$ and $\sramm$}\label{SUBSEC:NON_MSO_DEFINABLE_MM}

We now turn our attention to memory models which are not expressible in MSO, under standard beliefs in complexity theory.

\Paragraph{Orthogonal vectors.} 
The orthogonal vectors problem (OV) is stated on two sets $X, Y$, each containing $n$ boolean $d$-dimensional vectors. 
A solution to OV is a pair of vectors $\tuple{x_1,\cdots, x_d}\in X, \tuple{y_1,\cdots, y_d}\in Y$ that is orthogonal, i.e., $\sum_{i\in \IntSet{d}} x_i \cdot y_i = 0$. 
OV is easily solvable in $O(n^2 \cdot d)$ and $O(2^d \cdot n)$ time.
The OV hypothesis states that this problem has no $O(n^{2-\epsilon} \cdot d^c)$ algorithm, for any fixed $\epsilon, c>0$~\cite{Bringmann19}.
We establish the following theorem, which states that a hypothetical MSO definability of the stated memory models would require, not only to break the OV hypothesis, but to actually achieve \emph{nearly-linear} complexity for OV.

\smallskip
\begin{restatable}{theorem}{thmnonmsodefinability}\label{thm:non_mso_definability}
For any memory model $\mm\in \{\rmwfree{\scmm}, \sramm, \rmwfree{\tsomm}, \rmwfree{\psomm} \}$, the set $\RFGraphsMM{\mm}$ is not MSO-definable unless OV is solvable in $O(n\cdot d)$ time.
\end{restatable}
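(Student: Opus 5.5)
Suppose, towards a contradiction, that $\RFGraphsMM{\mm}$ is MSO-definable by a fixed closed formula $\Phi_{\mm}$. We then solve OV in $O(n\cdot d)$ time using Courcelle's theorem (\cref{thm:courcelle}). Given an OV instance $\OVInstance=(X,Y)$ with $|X|=|Y|=n$ and dimension $d$, we construct in $O(n\cdot d)$ time an $\rf$-graph $G_{\OVInstance}$ with the following properties:
\begin{enumerate*}[label=(\roman*)]
\item it has $O(n\cdot d)$ events, a constant number of threads and a constant number of registers;
\item its treewidth is bounded by a constant $k$ independent of $n$ and $d$;
\item $G_{\OVInstance}\models \mm$ iff $\OVInstance$ has no orthogonal pair (or the dual, whichever is convenient).
\end{enumerate*}
Running the algorithm of \cref{thm:courcelle} on $\Phi_{\mm}$, $k$ and $G_{\OVInstance}$ then takes $O(f(|\Phi_{\mm}|,k)\cdot n d)=O(n\cdot d)$ time. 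This contradicts the premise unless OV is solvable in $O(n\cdot d)$.

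\textbf{The gadget.} The non-$\mo$-local axioms (\ref{eq:sc}, \ref{eq:sce}, \ref{eq:psce}, \ref{eq:swc} together with \ref{eq:at}) let a guessed $\mo$ propagate long-range constraints through a chain. Inconsistency is then witnessed by a cycle that can "compare" a vector of $X$ with a vector of $Y$, even though the $\rf$-graph itself is path-like. Concretely, I would use two (or a constant number of) threads. Thread $t_1$ lists the vectors of $X$ sequentially: for each vector $x_j$ and coordinate $i$, it emits a write or read on one of two registers depending on the bit $x_j[i]$. Thread $t_2$ lists $Y$ analogously, but with the coordinate order arranged so that a vector $x_j$ aligned against $y_\ell$ meets coordinate by coordinate. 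The $\rf$-edges are chosen to connect only $\seqbef$-nearby positions (e.g.\ each read reads from a write in the same block or the adjacent block), so a path/tree decomposition of constant width exists, in the style of \cref{FIG:PATHDECESB}. The key design is that the $\mo$ on the register is forced to be essentially a total order of writes following the sequence. A pair $(x_j,y_\ell)$ then produces a forbidden $\fr$/$\mo$ cycle exactly when, in every coordinate, at least one of the bits is $0$. Equivalently, every "blocking" edge is absent precisely when $x_j\cdot y_\ell=0$. Since $\mo$ is guessed existentially, consistency holds iff no pair yields a cycle.

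\textbf{Covering all models.} For the RMW-free models, I would check that one construction simultaneously lies in the "gap". It should be $\scmm$-consistent when no orthogonal pair exists, hence consistent in $\tsomm$ and $\psomm$ by $\scmm\mmorder\tsomm\mmorder\psomm$. When an orthogonal pair exists, its cycle should use only edges retained by $\pppo\cup\rfe\cup\moe\cup\fre$, or \ref{eq:scloc} edges, so that it is inconsistent even in $\psomm$. For $\sramm$, whose RMW-free fragment is MSO-definable by \cref{thm:mso_definable_mms}, the RMWs are essential. I would use RMW chains to force $\mo$ orderings among writes in a total, long-range manner via \ref{eq:at}, and embed the comparison in a cycle violating \ref{eq:swc}.

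\textbf{Main obstacle.} The hardest step is designing the gadget so that the forced $\mo$ orderings carry the comparison information across arbitrary distances, while keeping the $\rf\cup\seqbef$ graph of bounded treewidth. A naive encoding links each $x_j$ to each $y_\ell$ and recreates grid minors. I would first try a construction in which the $Y$-block is encoded with $\rf$ edges into a single growing $\mo$ chain, so that orthogonality is detected through transitive $\mo$ reasoning rather than explicit edges.
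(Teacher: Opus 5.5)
\textbf{There is a genuine gap.} Your whole argument rests on the gadget $G_{\OVInstance}$ with properties (i)--(iii), and this gadget is never constructed. The description you give does not determine any specific $\rf$-graph: bits mapped to a write or read on one of two registers, coordinate orders ``arranged'' so that vectors meet, and $\mo$ ``forced to be essentially a total order''. Neither the constant treewidth bound (ii) nor the equivalence (iii) is argued. You flag this yourself as the main obstacle. The tension you note is exactly what remains unresolved: all $n^2$ pairs must be compared while $\seqbef\cup\rf$ stays tree-like. The other models are in the same state. For $\sramm$ you only say that RMW chains and \ref{eq:at} should be used. For $\rmwfree{\tsomm}$ and $\rmwfree{\psomm}$ you state as a wish, not as a property of any construction, that the violating cycle survives using only $\pppo\cup\rfe\cup\moe\cup\fre$ edges.

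\textbf{The missing idea is that requirement (ii) is unnecessary.} By \cref{lem:sc-tw}, every $\scmm$-consistent $\rf$-graph has treewidth at most $|\ThreadDom|+|\LocationDom|$. \cref{thm:courcelle} decides the conjunction ``$\tw(G)\leq k$ and $G\models\varphi$''. With $k=|\ThreadDom|+|\LocationDom|$, this conjunction holds iff $G\models\scmm$, whatever the treewidth of inconsistent inputs. So any linear-time reduction from OV to $\scmm$-consistency of $\rf$-graphs with constantly many threads and registers suffices. Such a reduction already exists~\cite{Mathur2020}, with $2$ threads and $7$ registers, so $k=9$. No new gadget is needed; a gadget of bounded treewidth on all inputs would also work, but it is strictly harder.

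For the other models you only need linear-time transformations $G\mapsto G'$ with $G\models\scmm$ iff $G'\models\mm$, where $G'\models\mm$ again forces bounded treewidth.
\begin{itemize}
\item For $\sramm$: the graphs of~\cite{Mathur2020} have functional $\rf$, so you can turn every read into an RMW. Each RMW is placed right after its writer in $\mo$, and with no plain reads left we get $\fr\subseteq\mo$. Hence \ref{eq:swc} yields \ref{eq:sc}, and $G\models\scmm$ iff $G'\models\sramm$. Since $G'$ is isomorphic to $G$ as a graph, the bound $9$ transfers.
\item For $\rmwfree{\tsomm}$ and $\rmwfree{\psomm}$: add, for each thread $t$, a helper thread $t'$. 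It reads each write $\wt(t,x)$ and then writes a flag $v_t$, which $t$ reads immediately after $\wt(t,x)$. Every $\seqbef$-edge leaving a write is then recovered as $\rfe;\pppo;\rfe;\pppo$. So $G'\models\psomm$ implies $G\models\scmm$, which in turn implies $G'\models\scmm$, giving $\tw(G')\leq 4+9=13$.
\end{itemize}
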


Naturally, the impossibility of \cref{thm:non_mso_definability} extends to memory models with mixed accesses that subsume any of the above, such as the C++ memory model with SC accesses~\cite{Batty2011}, and the POWER architecture for programs compiled from the release/acquire fragment of C++, since those enjoy $\sramm$ semantics~\cite{Lahav2016}.
The proof for each case of \cref{thm:non_mso_definability} is shown in \cref{SUBSEC:APP_NON_MSO_DEFINABLE_MM}.
\looseness=-1

%% file: rf_robustness.tex
\section{Reads-From Robustness}\label{SEC:RFROB}

Robustness against a memory model $\mm$ is the property of a concurrent program $\ConcProg$ of not exhibiting new behaviors under $\mm$ compared to $\scmm$.
A robust program enjoys the best of $\mm$ and $\scmm$:~its weak-memory operations provide efficiency, while its $\scmm$-only behaviors makes it amenable to simpler reasoning and analysis methods.
Robustness comes in different variants, based on what exactly constitutes program behavior in this context. 
In this section we show that the treewidth and MSO, as developed in the previous sections, have appealing applications to robustness.

\Paragraph{State robustness.}
One of the least sensitive notions of robustness is wrt to states:~$\ConcProg$ is state robust if every state of $\ConcProg$ that is reachable via an execution in $\ExecutionGraphsOf{\ConcProg}{\mm}$ is also reachable via an execution in $\ExecutionGraphsOf{\ConcProg}{\scmm}$.
State robustness against $\mm$ is decidable iff state reachability under $\mm$ is decidable~\cite{Derevenetc2015}.
In particular for $\ramm$, this implies the undecidability of state robustness~\cite{Abdulla19}.

\Paragraph{Execution-graph robustness.}
A concurrent program $\ConcProg$ is execution-graph robust against a memory model $\mm$ if $\ExecutionGraphsOf{\ConcProg}{\mm}=\ExecutionGraphsOf{\ConcProg}{\scmm}$.
Execution graph robustness is more sensitive than state robustness, and is decidable against $\tsomm$~\cite{Bouajjani2011} and $\ramm$~\cite{Lahav2019}.

\Paragraph{Reads-from robustness.}
In this work, we introduce the notion of reads-from robustness (or $\rf$-robustness).
A concurrent program $\ConcProg$ is $\rf$-robust against a memory model $\mm$ if $\RFGraphsOf{\ConcProg}{\mm}=\RFGraphsOf{\ConcProg}{\scmm}$.
It is easy to see that execution-graph robustness implies $\rf$-robustness, and this relationship is strict, in the sense that there exist programs that are $\rf$-robust but not execution-graph robust (see \cref{subfig:rob_example1,subfig:rob_example2} for an example).
Hence, this notion is a step closer to state robustness, by being insensitive to $\mo$.

\input{figures/rob_example}

\input{rf_robustness_reachability}

\input{observational_rf_robustness}

%% file: figures/rob_example.tex
\begin{figure}
\centering
\def\xstep{1}
\def\ystep{1}
\def\scalevalue{1}
\begin{subfigure}[b]{0.24\textwidth}
\centering
\scalebox{\scalevalue}{
\begin{tikzpicture}[ line width=1pt, node distance = 1cm] 
            \node[] (wx0) at (1,1.25){\textcolor{colorVARa}{\footnotesize$\wt(x)$}};
            \node[right = 0.5cm of wx0] (wx1) {\textcolor{colorVARa}{\footnotesize$\wt(x)$}}; 

            \node[below of = wx0] (wy0) {\textcolor{colorVARb}{\footnotesize$\wt(y)$}};
            \node[below of = wx1] (wy1) {\textcolor{colorVARb}{\footnotesize$\wt(y)$}}; 

            \node[below of = wy0] (rx0) {\textcolor{colorVARa}{\footnotesize$\rd(x)$}};
            \node[below of = wy1] (ry0){\textcolor{colorVARb}{\footnotesize$\rd(y)$}};

            \draw [seqbef] (wx0) to (wy0);
            \draw [seqbef] (wy0) to (rx0);
            \draw [seqbef] (wx1) to (wy1);
            \draw [seqbef] (wy1) to (ry0);
            \draw [rf] (wx0) to[bend right=40] (rx0);
            \draw [rf] (wy0) to (ry0);
            \draw [mo] (wx0) to (wx1);
            \draw [mo] (wy1) to (wy0);

\end{tikzpicture}
}
\caption{$G_1 \in \ExecutionGraphsOf{\ConcProg}{\ramm}\setminus \ExecutionGraphsOf{\ConcProg}{\scmm}$}
\label{subfig:rob_example1}
\end{subfigure}
\hfill
\begin{subfigure}[b]{0.24\textwidth}
\centering
\scalebox{\scalevalue}{
\begin{tikzpicture}[ line width=1pt, node distance = 1cm] 

            \node[] (wx0) at (1,1.25){\textcolor{colorVARa}{\footnotesize$\wt(x)$}};
            \node[right = 0.5cm of wx0] (wx1) {\textcolor{colorVARa}{\footnotesize$\wt(x)$}}; 

            \node[below of = wx0] (wy0) {\textcolor{colorVARb}{\footnotesize$\wt(y)$}};
            \node[below of = wx1] (wy1) {\textcolor{colorVARb}{\footnotesize$\wt(y)$}}; 

            \node[below of = wy0] (rx0) {\textcolor{colorVARa}{\footnotesize$\rd(x)$}};
            \node[below of = wy1] (ry0){\textcolor{colorVARb}{\footnotesize$\rd(y)$}};

            \draw [seqbef] (wx0) to (wy0);
            \draw [seqbef] (wy0) to (rx0);
            \draw [seqbef] (wx1) to (wy1);
            \draw [seqbef] (wy1) to (ry0);
            \draw [rf] (wx0) to[bend right=40] (rx0);
            \draw [rf] (wy0) to (ry0);
            \draw [mo] (wx1) to (wx0);
            \draw [mo] (wy1) to (wy0);
\end{tikzpicture}
}
\caption{$G_2 \in \ExecutionGraphsOf{\ConcProg}{\ramm}\cap \ExecutionGraphsOf{\ConcProg}{\scmm}$}
\label{subfig:rob_example2}
\end{subfigure}
\hfill
\begin{subfigure}[b]{0.24\textwidth}
\centering
\scalebox{\scalevalue}{
\begin{tikzpicture}[ line width=1pt, node distance = 0.75cm] 

            \node[] (wx0) at (1,1.25){\textcolor{colorVARa}{\footnotesize$\wt(x)$}};
            \node[below of = wx0] (wy0) {\textcolor{colorVARb}{\footnotesize$\wt(y)$}};
            \node[below of = wy0] (rx0) {\textcolor{colorVARa}{\footnotesize$\rd(x)$}};
            \node[below of = rx0] (ckx0) {\textcolor{colorVARa}{\footnotesize$\ck(x)$}};
            
            \node[right = 0.5cm of wx0] (wx1) {\textcolor{colorVARa}{\footnotesize$\wt(x)$}}; 
            \node[below of = wx1] (wy1) {\textcolor{colorVARb}{\footnotesize$\wt(y)$}}; 
            \node[below of = wy1] (ry0){\textcolor{colorVARb}{\footnotesize$\rd(y)$}};
            \node[below of = ry0] (cky0){\textcolor{colorVARb}{\footnotesize$\ck(y)$}};
            \node[below of = cky0, draw=colorVARa] (rx1){\textcolor{colorVARa}{\footnotesize$\rd(x)$}};

            \draw [seqbef] (wx0) to (wy0);
            \draw [seqbef] (wy0) to (rx0);
            \draw [seqbef] (wx1) to (wy1);
            \draw [seqbef] (wy1) to (ry0);
            \draw [seqbef] (rx0) to (ckx0);
            \draw [seqbef] (ry0) to (cky0);
            \draw [seqbef] (cky0) to (rx1);
            \draw [rf] (wx0.200) to[bend right=40] (rx0.160);
            \draw [rf] (wx1) to[bend left=40] (rx1);
            \draw [rf] (wy0) to (ry0);
            \draw [mo] (wx0) to (wx1);
            \draw [mo] (wy1) to (wy0);
\end{tikzpicture}
}
\caption{$G_3 \in \ExecutionGraphsOf{\ConcProgAug}{\ramm}\setminus \ExecutionGraphsOf{\ConcProgAug}{\scmm}$}
\label{subfig:rob_example3}
\end{subfigure}
\hfill
\begin{subfigure}[b]{0.24\textwidth}
\centering
\scalebox{\scalevalue}{
\begin{tikzpicture}[ line width=1pt, node distance = 0.75cm] 

            \node[] (wx0) at (1,1.25){\textcolor{colorVARa}{\footnotesize$\wt(x)$}};
            \node[below of = wx0] (wy0) {\textcolor{colorVARb}{\footnotesize$\wt(y)$}};
            \node[below of = wy0] (rx0) {\textcolor{colorVARa}{\footnotesize$\rd(x)$}};
            \node[below of = rx0] (ckx0) {\textcolor{colorVARa}{\footnotesize$\ck(x)$}};
            
            \node[right = 0.5cm of wx0] (wx1) {\textcolor{colorVARa}{\footnotesize$\wt(x)$}}; 
            \node[below of = wx1] (wy1) {\textcolor{colorVARb}{\footnotesize$\wt(y)$}}; 
            \node[below of = wy1, inner sep=0pt] (ry0){\textcolor{colorVARb}{\footnotesize$\rd(y)$}};
            \node[below of = ry0] (cky0){\textcolor{colorVARb}{\footnotesize$\ck(y)$}};
            \node[below of = cky0, draw=colorVARa] (rx1){\textcolor{colorVARa}{\footnotesize$\rd(x)$}};

            \draw [seqbef] (wx0) to (wy0);
            \draw [seqbef] (wy0) to (rx0);
            \draw [seqbef] (wx1) to (wy1);
            \draw [seqbef] (wy1) to (ry0);
            \draw [seqbef] (rx0) to (ckx0);
            \draw [seqbef] (ry0) to (cky0);
            \draw [seqbef] (cky0) to (rx1);
            \draw [rf] (wx0.200) to[bend right=40] (rx0.160);
            \draw [rf] (wx0.-25) to[out=-10, in=170, looseness=0.5] (rx1.155);
            \draw [rf] (wy0) to (ry0);
            \draw [mo] (wx1) to (wx0);
            \draw [mo] (wy1) to (wy0);
\end{tikzpicture}
}
\caption{$G_4 \in \ExecutionGraphsOf{\ConcProgAug}{\ramm}\cap \ExecutionGraphsOf{\ConcProgAug}{\scmm}$}
\label{subfig:rob_example4}
\end{subfigure}
\caption{A program $\ConcProg$ that only produces the execution graphs shown in  (\subref{subfig:rob_example1}) and (\subref{subfig:rob_example2}) is $\rf$-robust, but not execution-graph robust.
The augmented program $\ConcProgAug$ that only produces the execution graphs shown in  (\subref{subfig:rob_example3}) and (\subref{subfig:rob_example4}) is observationally $\rf$-robust, but not execution-graph robust, $\rf$-robust, or observationally execution-graph robust.
}
\label{fig:rob_example}
\end{figure}

%% file: rf_robustness_reachability.tex
\subsection{Reads-From Robustness and State Reachability}\label{subsec:rf-robustness}

Is $\rf$-robust decidable for common memory models?
This is a challenging question, but,
perhaps surprisingly, we can establish a conceptually stronger result:~for MSO-definable memory models $\mm$ that further exhibit a contiguity property, every program $\ConcProg$ either enjoys decidable state reachability, or it is ``obviously'' not $\rf$-robust; and this distinction is decidable.
Our key insight is that treewidth divergence is both
\begin{enumerate*}[label=(\roman*)]
\item necessary for undecidable reachability/state robustness, due to \cref{lem:bounded_tw_reachability}, and
\item sufficient for $\rf$-non-robustness, due to \cref{lem:sc-tw}.
\end{enumerate*}
In the following, we make these insights formal.

\Paragraph{$(\seqbef \cup\rf)$-contiguity.}
We call a memory model $\mm$ \emph{$(\seqbef\cup \rf)$-contiguous} if 
\begin{enumerate*}[label*=(\roman*)]
\item for every $\rf$-graph $G\in \RFGraphsMM{\mm}$, $(G.\seqbef\cup G.\rf)$ is acyclic, and
\item for every program $\ConcProg$ and $G\in \RFGraphsOf{\ConcProg}{\mm}$, removing an event $e$ from $G$ that is maximal wrt $(\seqbef\cup\rf)^+$ (together with any adjacent $\seqbef$- or $\rf$-edges) yields a graph $G'\in \RFGraphsOf{\ConcProg}{\mm}$.
\end{enumerate*}
All memory models in \cref{tab:main_models}, except $\rlxmm$, are $(\seqbef\cup \rf)$-contiguous.

\Paragraph{Treewidth contiguity.}
Given a natural number $\ell\in \Nats^+$, a memory model $\mm$ is $\ell$-treewidth-contiguous (or $\ell$-tw-contiguous) if for every concurrent program $\ConcProg$ and $\rf$-graph $G\in \RFGraphsOf{\ConcProg}{\mm}$ with $\tw(G)\geq \ell$, there is an $\rf$-graph $G'\in \RFGraphsOf{\ConcProg}{\mm}$ with $\tw(G)-\ell \leq \tw(G') < \tw(G)$.
In words, tw-contiguity excludes large treewidth gaps on the $\rf$-graphs of $\ConcProg$ under $\mm$.
The following lemma follows from the facts that
\begin{enumerate*}[label=(\roman*)]
\item  removing a single vertex from a graph decreases its treewidth by at most $1$, and
\item the treewidth of a graph is upper-bounded by its size.
\end{enumerate*}

\smallskip
\begin{restatable}{lemma}{lemtwcontiguousmodels}\label{lem:tw_contiguous_models}
Every $(\seqbef\cup \rf)$-contiguous memory model (and thus all memory models in \cref{tab:main_models} except $\rlxmm$) is also $1$-tw-contiguous.
\end{restatable}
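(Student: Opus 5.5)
The plan is to fix a program $\ConcProg$ and $G\in\RFGraphsOf{\ConcProg}{\mm}$ with $\tw(G)\geq 1$, and to produce the required $G'$ by repeatedly deleting maximal events. By condition (i) of $(\seqbef\cup\rf)$-contiguity, $(G.\seqbef\cup G.\rf)$ is acyclic, so $(\seqbef\cup\rf)^+$ is a strict partial order on the finite set $G.\Events$. Since $\tw(G)\geq 1$, the set $G.\Events$ is nonempty, and hence some event $e$ is maximal with respect to this order.

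Let $G_1$ be the graph obtained by removing $e$ and its adjacent edges. By condition (ii), $G_1\in\RFGraphsOf{\ConcProg}{\mm}$. Condition (i) also applies to $G_1$ because $G_1$ is again in $\RFGraphsMM{\mm}$, so the procedure can be iterated. This yields a chain $G=G_0,G_1,\dots,G_m$ with $m=|G.\Events|$. Each $G_i$ lies in $\RFGraphsOf{\ConcProg}{\mm}$, and $G_m$ is the empty graph, whose treewidth is below $\tw(G)$ (taken as $-1$, or at most $0$ under any convention, which is still $<1\le\tw(G)$).

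Two standard treewidth facts then finish the argument. First, removing a single vertex decreases treewidth by at most $1$: from a decomposition of $G_i$ with $e$ deleted from every bag one obtains a decomposition of $G_{i+1}$ of no larger width. Conversely, adding $e$ to every bag of a decomposition of $G_{i+1}$ yields a decomposition of $G_i$, so $\tw(G_i)\leq\tw(G_{i+1})+1$. Together with monotonicity under subgraphs, this gives $\tw(G_{i+1})\in\{\tw(G_i)-1,\tw(G_i)\}$. The sequence $\tw(G_0),\dots,\tw(G_m)$ is therefore non-increasing with steps of size at most $1$. It starts at $\tw(G)$ and ends strictly below it, so there is a first index $j$ with $\tw(G_j)<\tw(G)$, and for that index $\tw(G_j)=\tw(G)-1$. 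Taking $G'=G_j$ gives $\tw(G)-1\leq\tw(G')<\tw(G)$, which is exactly $1$-tw-contiguity.

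The only delicate step is the endpoint: we need some graph in the chain with strictly smaller treewidth. The empty (or single-vertex) graph handles this, and it remains in $\RFGraphsOf{\ConcProg}{\mm}$ because condition (ii) permits iterating the deletion down to it. The parenthetical claim about the models of \cref{tab:main_models} then follows from the paper's remark that all of them except $\rlxmm$ are $(\seqbef\cup\rf)$-contiguous.
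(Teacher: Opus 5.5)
Your proof is correct and follows the same route as the paper. Both repeatedly delete a $(\seqbef\cup\rf)^+$-maximal event, stay inside $\RFGraphsOf{\ConcProg}{\mm}$ by condition (ii), and use that each deletion lowers treewidth by at most one while the process must eventually go below $\tw(G)$. Your version is slightly more careful than the paper's: it makes explicit that maximal events exist by condition (i), that $\tw(G_{i+1})\leq\tw(G_i)$, and that the endpoint of the chain forces a strict drop.
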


\Paragraph{Exact-treewidth satisfiability.}
Consider a concurrent program $\ConcProg$ over an MSO-definable memory model $\mm$, and some natural number $k\in \Nats^+$.
The question of whether there exists an $\rf$-graph $G\in \RFGraphsOf{\ConcProg}{\mm}$ with $\tw(G)\leq k$ is decidable,
as a special case of \cref{thm:courcelle} where the corresponding MSO formula is simply $\True$.
In order to be able to detect treewidth divergence in $\ConcProg$, we need to be able to decide whether $\RFGraphsOf{\ConcProg}{\mm}$ exhibits a graph with treewidth \emph{exactly} $k$.
We achieve this based on the fact that optimal tree decompositions of graphs can be computed via MSO transductions~\cite{Mikolaj2016,Mikolaj2017}, which can be checked to have width exactly $k$.
Formally, we obtain the following satisfiability lemma as a strengthening of \cref{cor:mso_satisfiability}.

\smallskip
\begin{restatable}{lemma}{lemcourcelletheoremexacttw}\label{lem:courcelle_theorem_exact_tw}
Given an MSO formula $\varphi$ and some $k\in \Nats^+$, the problem of whether there exists an $\rf$-graph $G=\tuple{\Events, \seqbef, \rf}$ with $\tw(G)= k$ and $G\models\varphi$ is decidable.
\end{restatable}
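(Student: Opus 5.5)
The plan is to reduce exact-treewidth satisfiability to satisfiability over graphs of bounded treewidth (\cref{cor:mso_satisfiability}), by showing that the property ``$\tw(G)\geq k$'' is MSO-definable. Given that, it suffices to decide whether some $\rf$-graph $G$ with $\tw(G)\leq k$ satisfies $\varphi\land\psi_{\geq k}$, where $\psi_{\geq k}$ is a closed MSO formula with $G\models\psi_{\geq k}$ iff $\tw(G)\geq k$. Such a $G$ exists iff some $G$ with $\tw(G)=k$ satisfies $\varphi$, and \cref{cor:mso_satisfiability} decides the former. This matches the route the paper signals through MSO transductions: one guesses, inside MSO, an optimal tree decomposition and checks that its width is exactly $k$.

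For $\psi_{\geq k}$ I would avoid the transduction machinery and use the minor characterization. For fixed $k$, the class of (undirected versions of) graphs of treewidth $\leq k-1$ is minor-closed. By Robertson--Seymour it therefore has a finite set of forbidden minors $H_1,\dots,H_m$. Then $\tw(G)\geq k$ holds iff $G$ contains some $H_j$ as a minor. For a fixed graph $H$, containment as a minor is MSO-expressible over the edge relation $\seqbef\cup\rf$ (taken symmetrically). One existentially quantifies pairwise disjoint vertex sets $X_v$, one for each $v\in V(H)$. One then states, using the MSO connectivity formula built from the transitive-closure formula \cref{eq:mso_tc} restricted to $X_v$, that each $X_v$ is nonempty and connected. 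Finally, for every edge $uv$ of $H$, one states that some edge of $G$ joins $X_u$ and $X_v$. Taking the disjunction over $j\in[m]$ yields $\psi_{\geq k}$.

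The main obstacle is effectivity. Robertson--Seymour only guarantees that the obstruction set exists; the algorithm must actually construct $H_1,\dots,H_m$, or at least $\psi_{\geq k}$, from $k$. I would handle this by citing that obstruction sets for treewidth $\leq k-1$ are computable; this is known because bounded treewidth admits an effective description, and the size of obstructions is bounded by a computable function of $k$. Alternatively, I would fall back on the cited result~\cite{Mikolaj2016,Mikolaj2017}: for fixed $k$ there is an MSO transduction that, on any graph of treewidth $\leq k$, produces a tree decomposition of optimal width. Its output can be checked in MSO to have width exactly $k$, which yields a formula equivalent to $\tw(G)=k$ on graphs of treewidth $\leq k$. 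Either route produces a computable closed formula. A minor remaining check is that treewidth of the $\rf$-graph refers to the underlying undirected graph on $\seqbef\cup\rf$, so the directed predicates must be symmetrized in the formula.
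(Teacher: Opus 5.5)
Your argument is correct. Your primary route differs from the paper's, and your fallback is exactly what the paper does.

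\textbf{The paper's route.} The paper invokes the result of \cite{Mikolaj2016,Mikolaj2017}: for each $k$ there is an MSO transduction that, on graphs of treewidth at most $k$, nondeterministically outputs tree decompositions of optimal width. By the backwards-translation property of MSO transductions, the condition ``some output has width exactly $k$'' becomes an MSO formula $\varphi_{\tw=k}$. \cref{cor:mso_satisfiability} is then applied to $\varphi\land\varphi_{\tw=k}$.

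\textbf{Your route.} You define $\tw(G)\geq k$ directly as a disjunction of minor-containment sentences over the obstruction set for treewidth $\leq k-1$. You then apply the same corollary to $\varphi\land\psi_{\geq k}$. Your reduction (treewidth $\leq k$ together with $\psi_{\geq k}$ is equivalent to treewidth exactly $k$) and the branch-set encoding of minors are both sound.

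\textbf{What your route buys.} Its only MSO ingredient is the textbook expressibility of containing a fixed minor. Your $\psi_{\geq k}$ is also correct on all graphs. The paper's $\varphi_{\tw=k}$ is only guaranteed correct relative to treewidth $\leq k$, which suffices there because the corollary is already relativized.

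\textbf{What it costs.} You need a separate effectivity result. Finiteness of the obstruction set alone yields no algorithm, so the computable size bound on treewidth obstructions (Lagergren) is an essential citation, not a side remark; you were right to flag it. Finiteness itself needs less than the full Graph Minor Theorem. Every minor-minimal obstruction $H$ for treewidth $\leq k-1$ has $\tw(H)=k$, since deleting one edge drops the treewidth to at most $k-1$ and adding an edge raises it by at most $1$. So the obstructions form an antichain among graphs of treewidth $\leq k$, which is finite by well-quasi-ordering of bounded-treewidth graphs.

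The paper's route avoids obstruction sets entirely and inherits effectivity from the constructive transduction. The price is reliance on the considerably heavier machinery of optimal-width tree decompositions in MSO.
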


\Paragraph{An algorithm for reachability and $\rf$-robustness.}
We are now ready to establish the algorithmic interplay between reachability and $\rf$-robustness.
Let $\mm$ be an MSO-definable memory model that is also $\ell$-tw-contiguous, for some $\ell\in \Nats^+$.
Consider a concurrent program $\ConcProg$, and let $k=|\ThreadDom| + |\LocationDom|$.
For each $m\in \{1,\dots, \ell\}$, we can decide whether there exists an $\rf$-graph $G\in \RFGraphsOf{\ConcProg}{\mm}$ with $\tw(G)=k+m$, using \cref{lem:free_rf_graphs_definable} and \cref{lem:courcelle_theorem_exact_tw}.
We distinguish the following cases.
\begin{compactitem}
\item If the answer is negative for all such $m$, since $\mm$ is $\ell$-tw-contiguous, we are guaranteed that $\RFGraphsOfTw{\ConcProg}{\mm}{k}=\RFGraphsOf{\ConcProg}{\mm}$, i.e., all $\rf$-graphs of $\ConcProg$ have treewidth $\leq |\ThreadDom| + |\LocationDom|$.
Then, \cref{lem:bounded_tw_reachability} implies that state reachability (and thus also state robustness) is decidable.
\item If the answer is positive for any such $m$, then the program is not $\rf$-robust, since every $\rf$-graph $G\in \RFGraphsOf{\ConcProg}{\scmm}$ is such that $\tw(G)\leq k$, due to \cref{lem:sc-tw}.
\end{compactitem}

\smallskip
\begin{restatable}{lemma}{lemrobustnessvsreachability}\label{lem:robustness_vs_reachability}
For any $\ell$-tw-contiguous and $\mso$-definable memory model $\mm$, there is an algorithm that receives a program $\ConcProg$ as an input and either
\begin{enumerate*}[label=(\roman*)]
\item solves state/MSO reachability for $\ConcProg$ under $\mm$, or
\item reports that $\ConcProg$ is not $\rf$-robust against $\mm$.
\end{enumerate*}
\end{restatable}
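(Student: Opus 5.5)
The algorithm is the one sketched just before the statement; what remains is to check that it always terminates and that each output is correct. Fix $\mm$ MSO-definable (say by a closed formula $\Phi_{\mm}$) and $\ell$-tw-contiguous, and let $\ConcProg$ be the input program. Set $k=|\ThreadDom|+|\LocationDom|$. By \cref{lem:free_rf_graphs_definable} there is a formula $\msoprogsb'$ defining $\Langrf{\ConcProg}$, so $\psi\triangleq \msoprogsb'\land\Phi_{\mm}$ defines $\RFGraphsOf{\ConcProg}{\mm}$. For each $m\in\{1,\dots,\ell\}$, the algorithm invokes \cref{lem:courcelle_theorem_exact_tw} on $\psi$ and $k+m$ to decide whether some $G\in\RFGraphsOf{\ConcProg}{\mm}$ has $\tw(G)=k+m$. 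These are finitely many decidable queries, so this phase terminates.

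\emph{Positive case.} Suppose some query succeeds, witnessed by $G$ with $\tw(G)=k+m>k$. By \cref{lem:sc-tw}, every $\scmm$-consistent $\rf$-graph has treewidth at most $k$, so $G\notin\RFGraphsOf{\ConcProg}{\scmm}$. Hence $\RFGraphsOf{\ConcProg}{\mm}\neq\RFGraphsOf{\ConcProg}{\scmm}$, and the algorithm correctly reports that $\ConcProg$ is not $\rf$-robust.

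\emph{Negative case.} Suppose all queries fail. I claim every $G\in\RFGraphsOf{\ConcProg}{\mm}$ has $\tw(G)\le k$. Assume not, and choose a counterexample $G$ of minimal treewidth $t>k$; then $t>k+\ell$, since treewidths $k+1,\dots,k+\ell$ are excluded. Because $t\ge\ell$, $\ell$-tw-contiguity yields $G'\in\RFGraphsOf{\ConcProg}{\mm}$ with $t-\ell\le\tw(G')<t$. Then $\tw(G')\ge t-\ell>k$, which contradicts the minimality of $t$. (If one prefers induction, descending from $t$ in steps of at most $\ell$ must land in the excluded window $(k,k+\ell]$.) So $\RFGraphsOf{\ConcProg}{\mm}=\RFGraphsOfTw{\ConcProg}{\mm}{k}$. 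For an MSO specification $\varphi$, existence of an execution satisfying it is then equivalent to existence of one of treewidth $\le k$ satisfying it. This is decidable by the argument of \cref{lem:bounded_tw_reachability}: apply \cref{cor:mso_satisfiability} to $\psi\land\varphi$ with bound $k$.

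State reachability is a special case of MSO reachability. Reaching $\tuple{s_1,\dots,s_{|\ThreadDom|}}$ is MSO-expressible by modifying the automaton-encoding formula so that each $\Prog_{t_i}$'s run ends in $s_i$, in the same B\"uchi-style construction as \cref{lem:free_rf_graphs_definable}. It is therefore decided in the same way.

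The main obstacle is the correctness of the negative case: it requires that the gap argument really excludes \emph{all} larger treewidths. The minimality argument above is exactly what settles this, so the proof reduces to the stated lemmas. One point needs care: \cref{lem:bounded_tw_reachability} is stated for the specific models listed there, so for a general MSO-definable $\mm$ I would rederive it from \cref{cor:mso_satisfiability} rather than cite it.
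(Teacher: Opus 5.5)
Your proposal is correct and follows the paper's own argument. You query for treewidth exactly $k+1,\dots,k+\ell$ via \cref{lem:free_rf_graphs_definable} and \cref{lem:courcelle_theorem_exact_tw}. In the positive case you conclude non-robustness from \cref{lem:sc-tw}, and in the negative case you conclude $\RFGraphsOf{\ConcProg}{\mm}=\RFGraphsOfTw{\ConcProg}{\mm}{k}$ from $\ell$-tw-contiguity.

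Your minimal-counterexample argument makes rigorous the step the paper only asserts (``we are guaranteed''). Your remark that \cref{lem:bounded_tw_reachability} must be rederived from \cref{cor:mso_satisfiability} for a general MSO-definable $\mm$ is a fair tightening of the paper's citation, since that lemma is stated only for four specific models.
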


In turn, \cref{lem:robustness_vs_reachability}, \cref{lem:tw_contiguous_models} and \cref{thm:mso_definable_mms} yield the following theorem.

\smallskip
\begin{restatable}{theorem}{thmrobustnessvsreachability}\label{thm:robustness_vs_reachability}
For any memory model $\mm\in \{ \ramm, \rmwfree{\sramm}, \wramm \}$, there is an algorithm that receives a program $\ConcProg$ as an input and either
\begin{enumerate*}[label=(\roman*)]
\item solves state/MSO reachability for $\ConcProg$ under $\mm$, or
\item reports that $\ConcProg$ is not $\rf$-robust against $\mm$.
\end{enumerate*}
\end{restatable}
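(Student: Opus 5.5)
The theorem follows by instantiating \cref{lem:robustness_vs_reachability}. That lemma needs two hypotheses on $\mm$: MSO-definability and $\ell$-tw-contiguity for some $\ell\in\Nats^+$. Fix $\mm\in\{\ramm,\rmwfree{\sramm},\wramm\}$.

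\emph{MSO-definability.} By \cref{thm:mso_definable_mms}, for every program $\ConcProg$ the set $\RFGraphsOf{\ConcProg}{\mm}$ is MSO-definable. This is exactly what the algorithm behind \cref{lem:robustness_vs_reachability} consumes. The formula is obtained effectively from $\ConcProg$: conjoin the formula of \cref{lem:free_rf_graphs_definable} with $\msora$, $\msosranormw$, or $\msowra$, respectively. Reachability queries are then handled by conjoining further MSO formulas, through \cref{lem:bounded_tw_reachability}.

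\emph{Tw-contiguity.} Take $\ell=1$ and apply \cref{lem:tw_contiguous_models}. This requires that each of the three models is $(\seqbef\cup\rf)$-contiguous, which the paper asserts for all models of \cref{tab:main_models} except $\rlxmm$. I expect this verification to be the main obstacle, and I would check it briefly per model.

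For acyclicity of $\seqbef\cup\rf$, note that $\ramm$ and $\wramm$ include \ref{eq:hbirr}. $\rmwfree{\sramm}$ includes \ref{eq:swc}, which implies irreflexivity of $\hb$.

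For closure under removing a $(\seqbef\cup\rf)^+$-maximal event $e$:
\begin{itemize}
\item Every axiom is an irreflexivity or acyclicity constraint over relations built from $\po,\rf,\mo,\fr$, plus \ref{eq:wat}.
\item Since $e$ is maximal, it is read by no event, so dropping it leaves every other read's $\rf$-source intact.
\item Restricting the old witness $\mo$ to the remaining events gives a candidate modification order. Every derived relation on the smaller graph is contained in the restriction of the old one, so cycles cannot be created and consistency is preserved.
\item The $\seqbef$-projection per thread loses only its last event. The resulting word is a prefix of a word of $\Lang{\Prog_t}$, and every prefix of a word in the language of an LTS is again in the language (there are no accepting states). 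Hence the smaller graph remains in $\Lang{\ConcProg}$.
\end{itemize}

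With both hypotheses in place, \cref{lem:robustness_vs_reachability} yields the algorithm. Concretely, the algorithm checks for $m=1$ whether some $G\in\RFGraphsOf{\ConcProg}{\mm}$ has $\tw(G)=|\ThreadDom|+|\LocationDom|+1$, using \cref{lem:courcelle_theorem_exact_tw}.
\begin{itemize}
\item If such a $G$ exists, \cref{lem:sc-tw} shows it is not in $\RFGraphsOf{\ConcProg}{\scmm}$, so $\ConcProg$ is reported not $\rf$-robust.
\item Otherwise, 1-tw-contiguity shows all graphs have treewidth at most $|\ThreadDom|+|\LocationDom|$, and \cref{lem:bounded_tw_reachability} decides reachability.
\end{itemize}
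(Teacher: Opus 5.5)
Your proposal is correct and follows the paper's argument: the theorem is obtained by instantiating \cref{lem:robustness_vs_reachability}, with MSO-definability supplied by \cref{thm:mso_definable_mms} and $1$-tw-contiguity supplied by \cref{lem:tw_contiguous_models}. Your extra verification of $(\seqbef\cup\rf)$-contiguity is sound and fills in a step the paper only asserts. It uses acyclicity via \ref{eq:hbirr}/\ref{eq:swc}, monotonicity of all axioms under restricting $\mo$ after deleting a maximal event, and prefix-closure of LTS languages.
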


\cref{thm:robustness_vs_reachability} is particularly interesting for $\ramm$, for which verification is undecidable in general, and also achieves better complexity than state-reachability algorithms for the other models, whose complexity is non-primitive recursive~\cite{Lahav2022}.

%% file: observational_rf_robustness.tex
\subsection{Observational Reads-From Robustness}\label{SUBSEC:AUG_RF}

The robustness notions mentioned thus far are sometimes too sensitive.
One example is when a program executes speculative reads, followed by a validation step that determines whether these reads obtained obsolete data that should be ignored~\cite{Boehm2012}.
Such programs are not even state-robust, yet it is desirable to classify them as robust since the reads responsible for the violation are unused.
In order to deal with such spurious violations, execution graph robustness has been relaxed to \emph{observational} robustness, effectively disregarding unused reads from triggering a violation~\cite{Margalit2021}.
Here we show that reads-from robustness also supports this type of observational coarsening.

\Paragraph{Augmented concurrent programs.}
In order to avoid introducing heavy extra notation, we make some simplifying assumptions on the structure of programs.
In particular, each thread $t$ executes an \emph{augmented LTS} $\ProgAug_t$ where the following hold wlog.
\begin{compactenum}
\item $\ProgAug_t$ starts with a sequence of $\wt(t, x, \bot), \wt(t, y,\bot),\dots $ of writes to each shared register, where $\bot$ is a fresh value not written by any other transition.
\item There is an additional event label $\tm(t)$, which signifies that $t$ reached an error state.
\item There is an additional label $\ck(t,x)$, which signifies that $t$ is about to use its most recent read on $x$.
Given a sequence of labels $\sigma=\rd(t,x,v), \alpha_1, \alpha_2, \dots, \alpha_m$, we say that $\rd(t,x,v)$ is \emph{unused} in $\sigma$ if for all $i\in [m]$, if $\alpha_i=\ck(t,x)$, then there exists some $j<i$ such that $\op(\alpha_j)\in \{\rd, \rmw\}$ and $\lloc(\alpha_j)=x$.
Then $\ProgAug_t$ must satisfy the following \emph{invariance under unused reads} property:
for every sequence $\sigma_1 \cdot \rd(t,x,v) \cdot \sigma_2\in \Lang{\ProgAug_t}$, if  $\rd(t,x,v)$ is unused in $\rd(t,x,v) \cdot \sigma_2$, then for all $v'\in \ValueDom$, we have $\sigma_1\cdot  \rd(t,x,v') \cdot \sigma_2\in \Lang{\ProgAug_t}$.
\end{compactenum}
At the source-code level, these three conditions can be satisfied by simple, syntactic transformations.
The first two are rather trivial.
The third condition can be achieved by 
\begin{enumerate*}[label=(\roman*)]
\item directing every load on $x$ to a dedicated local register $l_x$, and
\item executing a check instruction $\ck(t,x)$ right before reading the contents of $l_x$, or 
right after a read-modify-write instruction on $x$ (regardless of whether it succeeds).
\end{enumerate*}
An \emph{augmented concurrent program} is represented as $\ConcProgAug=\tuple{\ProgAug_{t_1},\dots, \ProgAug_{t_{|\ThreadDom|}}}$.

\Paragraph{Augmented graphs.}
We lift the definitions of execution-/$\rf$-graphs to augmented execution-/$\rf$-graphs naturally,
by also considering $\ck(t,x)$ and $\tm(t)$ events.
We say that an augmented $\rf$-graph $G$ \emph{reaches $\tm$} if $\tm(t)\in G.\Events$, for some thread $t$.

Given an augmented graph $G$, we define the set of its unused reads $\UnusedReadDom(G)$ in the natural way: a read $\rd(t,x,v)$ of $G$ is unused if for every check $\ck(t,x)$ such that $\rd(t,x,v)\LTo{\seqbef^+} \ck(t,x)$, there exists a read/RMW event $\event$ with $\lloc(\event)=x$ and $\rd(t,x,v)\LTo{\seqbef^+} \event\LTo{\seqbef^+} \ck(t,x)$.
We say that $G$ and $G'$ are \emph{equivalent}, written $G\simeq G'$, if they are component-wise equal, except for the values and $\rf$-edges of their unused reads.
Formally, there is a bijective function $f\colon G.E \to G'.E$ satisfying the following.
\begin{compactenum}
\item For all $\event_1, \event_2\in G.\Events$, we have $(\event_1, \event_2)\in G.\seqbef$ iff $(f(\event_1),f( \event_2))\in G'.\seqbef$.
\item For all $\event \in G.E\setminus \UnusedReadDom(G)$, we have $\event = f(\event)$ and $f(\event)\in G'.E\setminus \UnusedReadDom(G')$.
\item For all $\event \in \UnusedReadDom(G)$, we have that $\op(f(\event))=\rd$ and $\lloc(\event)=\lloc(f(\event))$.
\item For all $\event_1, \event_2\in G.\Events$ with $\event_2\not \in \UnusedReadDom(G)$, we have $(\event_1,\event_2)\in G.\rf$ iff $(f(\event_1), f(\event_2))\in G'.\rf$.
\end{compactenum}

\Paragraph{Observational $\rf$-robustness.}
An augmented program $\ConcProgAug$ is \emph{observationally $\rf$-robust} against $\mm$ if for all $G \in\RFGraphsOf{\ConcProgAug}{\mm}$, there exists some  $G' \in \RFGraphsOf{\ConcProgAug}{\scmm}$ s.t. $G\simeq G'$. 
Note that $\rf$-robustness implies observational $\rf$-robustness, but the latter notion is strictly coarser (see \cref{fig:rob_example}).
Finally, observational $\rf$-robustness implies robustness against $\tm$ reachability.

\smallskip
\begin{restatable}{lemma}{lemobsrobust}\label{lem:observational_robustness}
Consider a memory model $\mm$ and a (augmented) concurrent program $\ConcProgAug$ that is observationally $\rf$-robust against $\mm$.
If there exists some $G \in \RFGraphsOf{\ConcProgAug}{\mm}$ reaching $\tm$, then there exists some  $G' \in \RFGraphsOf{\ConcProgAug}{\scmm}$ also reaching $\tm$.
\end{restatable}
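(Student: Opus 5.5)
The plan is a direct argument from the definitions. No MSO or treewidth machinery is needed. Let $G\in \RFGraphsOf{\ConcProgAug}{\mm}$ be an augmented $\rf$-graph reaching $\tm$, so $\tm(t)\in G.\Events$ for some thread $t$. By observational $\rf$-robustness there is some $G'\in \RFGraphsOf{\ConcProgAug}{\scmm}$ with $G\simeq G'$, witnessed by a bijection $f\colon G.\Events\to G'.\Events$. The goal is to show that $G'$ also contains an $\tm$ event.

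The key step is to check that the $\tm(t)$ event is preserved by $f$. By definition, $\UnusedReadDom(G)$ consists only of read events. The event $e=\tm(t)$ has label $\tm(t)$, not a read label, so $e\notin \UnusedReadDom(G)$. Condition (2) of the equivalence then gives $f(e)=e$, and $f(e)\in G'.\Events$. Since events are pairs $\tuple{\id,\llab}$, equality preserves the label. Hence $\tm(t)\in G'.\Events$, so $G'$ reaches $\tm$.

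The only point to handle carefully is that the lifted definitions of augmented graphs keep $\tm$ events outside the read domain, so they can never count as unused reads. This is immediate because $\tm(t)$ is introduced as a separate label distinct from $\rd$, $\wt$, $\rmw$ (and $\ck$). I expect no real obstacle. In particular, the $\mm$-consistency of $G$ is used only to invoke robustness, and membership of $G'$ in the $\scmm$ semantics is given directly by the definition of observational $\rf$-robustness.
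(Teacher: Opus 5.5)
Your proposal is correct and takes the same route as the paper's proof: you obtain $G'\in\RFGraphsOf{\ConcProgAug}{\scmm}$ with $G\simeq G'$ from observational $\rf$-robustness, and conclude that $\tm(t)\in G'.\Events$. You also spell out a step the paper leaves implicit: since $\tm(t)\notin\UnusedReadDom(G)$, condition (2) of $\simeq$ forces $f(\tm(t))=\tm(t)\in G'.\Events$.
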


\Paragraph{Observational $\rf$-robustness vs reachability.}
Given \cref{lem:observational_robustness}, we would like to have a decision procedure for detecting when an augmented program violates observational $\rf$-robustness, analogously to \cref{thm:robustness_vs_reachability} for $\rf$-robustness.
At a first glance, this seems not possible, as observationally $\rf$-robust programs might generate augmented $\rf$-graphs of unbounded treewidth.
Nevertheless, we show that this is possible by defining a reduced variant of $\rf$-graphs that has bounded treewidth for all observationally $\rf$-robust programs.
The full procedure is detailed in \cref{SUBSEC:APP_AUG_RF}.
This allows to arrive at the following theorem.

\smallskip
\begin{restatable}{theorem}{thmobsrobustnessvsreachability}\label{thm:obs_robustness_vs_reachability}
For any memory model $\mm\in \{ \ramm, \rmwfree{\sramm}, \wramm \}$, there is an algorithm that receives an augmented program $\ConcProgAug$ as an input and either
\begin{enumerate*}[label=(\roman*)]
\item solves $\tm$ reachability for $\ConcProgAug$ under $\mm$, or
\item reports that $\ConcProgAug$ is not observationally $\rf$-robust against $\mm$.
\end{enumerate*}
\end{restatable}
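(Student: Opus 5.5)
The plan is to mirror the algorithm behind \cref{lem:robustness_vs_reachability}, but to run it on a \emph{reduced} $\rf$-graph rather than on the augmented $\rf$-graph itself. Given $G\in\RFGraphsOf{\ConcProgAug}{\mm}$, let $\mathsf{red}(G)$ be obtained by deleting every unused read in $\UnusedReadDom(G)$, together with its incident $\seqbef$- and $\rf$-edges. In place of the deleted read we add an $\seqbef$ edge from its $\seqbef$-predecessor to its $\seqbef$-successor, so that each thread remains a path. Let $\RedGraphsOf{\ConcProgAug}{\mm}$ be the set of such reductions. The algorithm then has three ingredients, established in this order.

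\textbf{Step 1 (MSO-definability).} I will show that $\RedGraphsOf{\ConcProgAug}{\mm}$ is MSO-definable for each $\mm\in\{\ramm,\rmwfree{\sramm},\wramm\}$. The approach is to guess the unused reads. Consider the program obtained by allowing, at each read transition, a silent ``skip'' instead. The invariance-under-unused-reads property guarantees that the value of a skipped read is irrelevant, so it suffices that skipping is legal only if no $\ck(t,x)$ follows before the next read/RMW on $x$. This is a regular condition, so $\msoprogsb$ adapts directly. Consistency of the reduction must then be checked. Unused reads only contribute outgoing $\seqbef$/$\hb$ orderings, and their incoming $\rf$ edge may be chosen freely. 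So for $\ramm$, $\rmwfree{\sramm}$ and $\wramm$ I expect that a deleted read can always be given the $\mo$-maximal write among those $\hb$-before it, or simply a write that is $\hb$-before it. Plausibly the write $\wt(t,x,\bot)$ at the thread start, or the latest $\hb$-preceding write, works. Then $G\models\mm$ iff $\mathsf{red}(G)$ satisfies the $\mm$-formula of \cref{thm:mso_definable_mms}, where the $\seqbef$ edges through the removed events are preserved by the contraction. The same reasoning gives the analogous statement for $\scmm$, by reading from the most recent write in a total order.

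\textbf{Step 2 (bounded width under SC and tw-contiguity).} Observational $\rf$-robustness means that every $\mathsf{red}(G)$ under $\mm$ equals some $\mathsf{red}(G')$ with $G'$ SC-consistent. Equivalence $\simeq$ is exactly equality after reduction. The key claim is $\tw(\mathsf{red}(G'))\le|\ThreadDom|+|\LocationDom|$ for SC-consistent $G'$. I will get this by reusing the proof of \cref{lem:sc-tw}: $\mathsf{red}(G')$ is an SC graph from which reads are deleted and $\seqbef$ paths are contracted. This is a minor of $G'$, and treewidth is minor-closed. Hence treewidth exceeding $k=|\ThreadDom|+|\LocationDom|$ for some reduced graph witnesses non-robustness. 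For tw-contiguity I will follow \cref{lem:tw_contiguous_models}. Removing an $(\seqbef\cup\rf)$-maximal event of $G$ keeps a consistent graph of $\ConcProgAug$, modulo the prefix-closure of the augmented LTS. Such a removal changes the reduction by at most one vertex, or by zero if the event was an unused read. A subtlety is that removing a $\ck$ may turn a used read into an unused one, causing a drop by $2$; so I would settle for $2$-tw-contiguity, which suffices for the algorithm.

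\textbf{Step 3 (algorithm).} For $m\in\{1,2\}$, decide via \cref{lem:courcelle_theorem_exact_tw} whether some reduced graph has treewidth $k+m$. If the answer is yes, report non-robustness by Step 2. If the answer is no, all reduced graphs have treewidth at most $k$. Reaching $\tm$ is preserved by reduction and is MSO-expressible, so \cref{thm:courcelle}/\cref{cor:mso_satisfiability} decides $\tm$ reachability. The main obstacle I expect is Step 1: proving that rf-edges of unused reads can always be reassigned consistently, particularly for $\ramm$, whose $\mo$-inference formula $\msopmora$ involves readers. I would try to handle this by showing that reads which are $\seqbef$-maximal in the relevant causal sense can read from an $\hb$-latest write without creating cycles.
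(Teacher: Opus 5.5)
Your proposal is correct and follows essentially the paper's route. You pass to reduced $\rf$-graphs in which unused reads no longer carry $\rf$-edges, and show that this reduced semantics is MSO-definable and that reduced $\scmm$-graphs still have treewidth at most $|\ThreadDom|+|\LocationDom|$. You then derive treewidth contiguity from $(\seqbef\cup\rf)$-contiguity, and use the exact-treewidth test to either decide $\tm$ reachability or report non-robustness.

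The deviations are minor. You delete unused reads as vertices instead of only dropping their $\rf$-edges, and you settle for $2$-tw-contiguity where the paper gets $1$ because the removed $\ck$ event is a leaf. Your re-insertion idea is what justifies the consistency step the paper merely asserts: give each unused read an $\mo$-maximal (resp.\ $\hb$-maximal) write that is already $\hb$-before it, so $\hb$ is unchanged. One thing is missing from your MSO encoding: it must also force every non-skipped read to be used. Otherwise it admits partial reductions whose treewidth can be large even for observationally robust programs, which would make the non-robustness report unsound.
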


%% file: related_work.tex
\section{Related Work}\label{SEC:RELATED_WORK}

Here we discuss related work on the automated analysis of concurrent systems, and connect the results of this paper to existing research directions.

\Paragraph{MSO and treewidth.}
The study of computational models under the lens of treewidth+MSO has been long and fruitful.
In~\cite{Madhusudan2011}, it was shown to yield efficient algorithms for automata with access to various types of unbounded storage (e.g., multiple stacks/queues), for which verification is undecidable in general.
The executions of multi-stack automata have also been parameterized by splitwidth~\cite{Cyriac2012,Cyriac2014}, which is similar to treewidth and also enables efficient decision procedures based on MSO.
In a distributed setting, MSO was recently shown capable to capture most common message-passing communication protocols~\cite{Bollig2021a,DiGiusto2023}, leading to similar decision procedures.

Our work is inspired by aforementioned literature, but our results are novel and based on distinct insights for weak memory.
For example, in contrast to message passing, we show that several weak-memory models are not MSO-definable, via novel connections to fine-grained complexity (\cref{SUBSEC:NON_MSO_DEFINABLE_MM}).
Moreover, the features that allow axiomatization in MSO can be subtle (e.g., recall that RMWs break the MSO definability of $\sramm$ but not $\ramm$), and hint towards a connection to (non) multi-copy atomicity, a notion existing primarily in shared-memory concurrency.

At a technical level, our axiomatization relies on an MSO-definable \emph{partial} modification order, i.e., a subset of $\mo$ that can be inferred directly from $\rf$ and $\po$, and is sufficient to establish consistency (aka ``saturation''). 
Beyond the models of \cref{SUBSEC:MSO_DEFINABLE_MM}, a partial $\mo$ is known to suffice for Causal Memory~\cite{Bouajjani2017}, as well as for certain weak isolation levels~\cite{Biswas19,Moldrup25}.
We remark, however, that saturability does not always imply MSO-definability; in fact, our proof behind \cref{thm:non_mso_definability} is over execution families where saturation is sound and complete.

\Paragraph{Bounded analyses.}
The analysis of shared-memory concurrent programs has been a long and burgeoning topic.
Typical algorithms perform some sort of bounding of the space of program executions, in order to obtain a tractable, or even decidable, version of the analysis.
The most common bounding concerns the length of the executions, enabling stateless model checking algorithms~\cite{Flanagan2005,Abdulla2014,Abdulla2015,Chalupa2018,Abdulla2018,Kokologiannakis21,Bui2021,Agarwal2021}.
Another popular type of bounding is on context switches between threads, and has strong practical~\cite{Qadeer2005,Atig2011,Musuvathi2007,Abdulla2024} and complexity-theoretic~\cite{Esparza2014,Chini2017} implications.
Although verification under a context bound is typically decidable, this is not the case for programs executing under $\ramm$~\cite{Abdulla19}, which was amended by the significantly stronger notion of view bounding.

This paper develops this direction further, by proposing treewidth as another dimension of bounding.
Bounded-treewidth executions are an appealing approximation mechanism, since
\begin{enumerate*}[label=(\roman*)]
\item they are of unbounded size,
\item they capture very weak behaviors (see \cref{SEC:APP_WEAK}), even involving arbitrary causality cycles, which are challenging to all existing methods, and
\item they may be complete in practice, as programs appear to naturally restrict the treewidth of their executions (see \cref{SEC:APP_EXPERIMENTS}).
\end{enumerate*}

Although treewidth and context-bounding are incomparable, executions with bounded view-switching yield  $\rf$-graphs of bounded treewidth.
A crude bound of  $O(|\ThreadDom||\LocationDom| + k|\LocationDom|)$ for an execution with $k$ view switches follows from a construction similar to that of \cref{lem:sc-tw}.

\Paragraph{Robustness.}
Program robustness has been a long-standing concept in weak-memory concurrency, utilized by both researchers and programmers.
The simplest and most common guarantee of robustness has been data-race freedom (DRF)~\cite{Adve1990,Adve1993,Gharachorloo1992}, variants of which hold for $\tsomm$~\cite{Owens2010} and the C/C++ memory model~\cite{Batty2015,Lahav2017}.
Although DRF implies state robustness, it is generally considered too sensitive, in the sense that simple (state-) robust programs contain races.
To this avail, researchers have proposed the coarser notion of execution graph robustness,
and proved its decidability against $\tsomm$~\cite{Bouajjani2011,Bouajjani2013} and $\ramm$~\cite{Lahav2019}.
This direction of ever coarser robustness criteria was recently extended with observational robustness~\cite{Margalit2021} and induced subgraph robustness~\cite{Nagar2024}.

The notion of reads-from robustness (as well as its observational relaxation) introduced in this work is a further coarsening in this direction.
This is the first algorithmically useful robustness notion that reasons purely about program-observable behaviors, disentangling them from lower-level architectural details captured in the modification order.

%% file: conclusion.tex
\section{Conclusion}\label{SEC:CONCLUSION}
We have studied weak-memory concurrency under the lens of treewidth and MSO, giving a systematic account of whether and under what conditions various memory models are MSO-axiomatizable, and characterizing the treewidth of their reads-from semantics.
MSO and Courcelle's theorem are usually the first approach to tackling a number of challenging problems, by revealing  insights to be exploited by more refined methods.
In similar spirit, our work suggests a number of interesting future directions, such as establishing precise complexity bounds for the studied problems (analogously to complexity refinements made for message-passing~\cite{Cyriac2014}), possibly exploring different width parameters of $\rf$-graphs, and practical tooling.
\looseness=-1

%% file: app_tw_mm.tex
\section{Proofs of \cref{SEC:TWMM}}\label{SEC:APP_TWMM}

\input{figures/dec_example}

\input{figures/sc_td}

\lemsctw*
\begin{proof}
Consider an $\rf$-graph $G=\tuple{ \Events, \seqbef, \rf }$ such that $G\models \scmm$. 
Since $G\models \scmm$, there exists an $\mo$ such that the execution graph $G'=\tuple{ \Events, \seqbef, \rf, \mo }\models \scmm$,
thus the corresponding relation $\po \cup \rf \cup \mo \cup \fr$ is acyclic.
Let $\tau=\event_1,\cdots, \event_n$ be any total extension of $\po \cup \rf \cup \mo \cup \fr$.
We construct a tree decomposition $\tuple{\{ \Bag_i\colon i\in I  \}, \Tree=(I, F)}$ of $G$ based on $\tau$, as follows.
The vertex set is $I=\IntSet{n}$, and the edge set is $F=\{(i, i+1)\colon i\in \IntSet{n-1} \}$ (i.e., $\Tree$ in fact forms a path).
For each $i\in \IntSet{n}$, the bag $\Bag_i$ contains the following vertices.
\begin{compactenum}[label=(\roman*)]
\item\label{item:td_sc_1} The event $\event_i$.
\item\label{item:td_sc_2} For each thread $t\in \ThreadDom$, the event $\event_j$ that is the last event of thread $t$ with $j<i$, if such an event exists.
\item\label{item:td_sc_3} For each register $x\in \LocationDom$, the event $\event_j$ that is the last event writing to $x$ with $j<i$, if such an event exists.
\end{compactenum} 
It is clear that $|\Bag_i|\leq |\ThreadDom|+|\LocationDom|+1$, yielding the desired bound on the treewidth.
It remains to argue that $\tuple{\{ \Bag_i\colon i\in I  \}, \Tree=(I, F)}$ is indeed a tree decomposition of $G$.
\begin{compactenum}
\item \cref{item:td_sc_1} implies that $\bigcup_{i \in I} \Bag_i = \Events$.
\item For any edge $(\event_j, \event_i)\in \seqbef$, we have $\event_j, \event_i\in \Bag_i$, due to \cref{item:td_sc_2}.
Moreover, for any edge $(\event_j, \event_i)\in \rf$, we have $\event_j, \event_i\in \Bag_i$, due to \cref{item:td_sc_3},
since $\event_j$ must be the most recent write/RMW event on $\lloc(\event_j)$.
\item Finally, every event $\event_i$ occupies a contiguous interval, starting in bag $\Bag_i$ and ending in $\Bag_j$, with $j$ being the earliest index in which $\event_i$ is no longer the last event of $\tid(\event_i)$ nor the the last event writing to $\lloc(\event_i)$.
\end{compactenum}
The desired result follows.
\end{proof}

\smallskip
\begin{restatable}{lemma}{lemsctwlowerbound}\label{lem:sc_tw_lowerbound}
For all $\ThreadDom$ and $\LocationDom$ ($\ThreadDom, \LocationDom \geq 1$), there exists an $\rf$-graph $G$ such that $G\models \scmm$ and $\tw(G)\geq |\ThreadDom|+|\LocationDom|-1$.
\end{restatable}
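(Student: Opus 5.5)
We plan to construct, for $T=|\ThreadDom|$ and $L=|\LocationDom|$, an $\scmm$-consistent $\rf$-graph that contains a clique on $T+L$ vertices as a minor. Treewidth is closed under minors and $\tw(K_{m})=m-1$, so this gives $\tw(G)\geq T+L-1$. The intended minor has one branch set per thread and one per register. Each thread contributes a connected set via its $\seqbef$-chain. Each register $x$ contributes a connected set via a write together with all reads reading from it, which form an $\rf$-star.

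\textbf{Construction.} First fix, for every register $x$, one write $w_x$ on $x$, executed by thread $t_1$ at the very beginning of its execution. Because all reads of $x$ should read from $w_x$, we issue no other writes to $x$. Then, in a round-robin schedule, every thread $t$ performs one read $\rd(t,x)$ reading from $w_x$, for every register $x$. The $\scmm$ witness is the total order placing $t_1$'s initial writes first, followed by all reads in any interleaving. Since there is a single write per register, $\mo$ is trivial and $\fr$ is empty. The graph $\po\cup\rf\cup\mo\cup\fr$ is therefore acyclic, with all edges respecting this order.

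\textbf{Minor extraction.} For each thread $t\neq t_1$, let $A_t$ be its $\seqbef$-chain of reads; this is connected. For each register $x$, let $B_x=\{w_x\}$. Then $B_x$ is adjacent to $A_t$ via an $\rf$-edge for every $t\neq t_1$. Thread $t_1$ is a problem: its events include all the $w_x$, so we cannot also give it a disjoint connected branch set. We resolve this by giving $t_1$ its reads only after its writes, and taking $A_{t_1}$ to be the suffix chain of those reads.

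The main obstacle is producing thread–thread adjacencies, since distinct threads are not directly connected. We handle this by routing through the register stars, so that each branch set $B_x$ grows to include some reads of other threads. To get $A_t$ adjacent to $A_{t'}$, we add an extra register-free gadget for each pair: thread $t$ writes a fresh value to a designated register, and $t'$ reads it. To avoid needing extra registers, we instead have $t$ perform a second write on register $x$, read immediately by $t'$, scheduled in $\scmm$ order. We then contract these reads of $t'$ into $A_{t'}$ and these writes of $t$ into $A_t$, and keep them out of the $B_x$ sets. In this last step one must check that $\fr$ edges created by multiple writes to $x$ stay consistent with the chosen total order. Reads must read the latest preceding write in the order, which we ensure by scheduling each read right after its writer.

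Alternatively, a grid-style argument yields treewidth at least $\min$-type bounds. Beyond the clique-minor construction above, the case $T=1$ or $L=1$ is handled by the trivial bound $\tw\geq 0$ or by a single edge.
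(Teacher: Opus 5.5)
There is a genuine gap: nothing in your construction makes the \emph{register} branch sets pairwise adjacent. You take $B_x=\{w_x\}$ with all $w_x$ at the start of $t_1$, so $B_{x_i}$ and $B_{x_j}$ are joined only when $|i-j|=1$ (by $\seqbef$). The $\rf$-stars give thread--register edges and your second-write gadgets give thread--thread edges. The minor you actually exhibit is therefore $K_T$ joined with a path on $L$ vertices. Its treewidth is at most $T+1$: add the $T$ thread vertices to every bag of a width-$1$ decomposition of the path. This is too small as soon as $L\geq 3$.

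A cleverer choice of branch sets cannot rescue the construction either. Your graph has $L+TL+O(T^2)$ events and hence $O(TL+T^2)$ edges, whereas a $K_{T+L}$ minor needs at least $\binom{T+L}{2}$ edges. So for $L$ large relative to $T$, no clique minor of that size exists. Concretely, for $T=1$ your graph is the path $w_{x_1},\dots,w_{x_L},\rd(t_1,x_1),\dots,\rd(t_1,x_L)$ with rungs $w_{x_i}\LTo{\rf}\rd(t_1,x_i)$. That is a ladder plus one edge, of treewidth at most $3<L$ for $L\geq 4$. Relatedly, your last paragraph is wrong: for $T=1$ (resp.\ $L=1$) the required bound is $\tw(G)\geq L$ (resp.\ $\geq T$), which is unbounded, not trivial.

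The missing ingredient necessarily uses $\Theta(L^2)$ events, and it is exactly the first phase $\sigma_1$ of the paper's proof. Thread $t_1$ first writes every register once. Then, for every $i<j$, it executes $\rd(t_1,x_i),\wt(t_1,x_i),\rd(t_1,x_j),\wt(t_1,x_j)$. Each read takes the latest write on its register, so all these events on $x_i$ form one chain $\wt\LTo{\rf}\rd\LTo{\seqbef}\wt\LTo{\rf}\cdots$, which contracts to a single vertex $v_{x_i}$. Meanwhile the edge $\wt(t_1,x_i)\LTo{\seqbef}\rd(t_1,x_j)$ joins $v_{x_i}$ to $v_{x_j}$.

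Your remaining two phases then match the paper's. First, every thread reads every register from the last such write; this gives the complete bipartite part, with $t_1$'s branch set taken after its prefix, as you suggest. Second, each thread writes a register that all later threads read, which gives the thread clique. For this last phase, the gadget writes must come after all the bipartite-phase reads. Otherwise, under $\scmm$, those reads would no longer read from the intended writes.
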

\begin{proof}
We construct such a graph $G$ that has a clique of size $|\ThreadDom|+|\LocationDom|$ as a minor, using the fact that treewidth is preserved under taking minors~\cite{Robertson1986a}. 
We label the threads $t_1, \cdots, t_{|\ThreadDom|}$ and the locations $x_1, \dots, x_{|\LocationDom|}$.
The clique minor of $G$ is formed by vertices $v_{t_1}, \dots, v_{t_{|\ThreadDom|}}$ and $v_{x_1}, \cdots, v_{x_{|\LocationDom|}}$.
In turn, each vertex is obtained by contracting various events in $G$.

We list the events of $G$ in a linear order $\sigma=\sigma_1\cdot \sigma_2\cdot \sigma_3$, which implicitly specifies $G.\seqbef$ and $G.\rf$ according to $\scmm$ semantics (hence $G\models \scmm$).
We use a superscript $\event^i$ to refer to events of $\sigma_i$.
At a high level,
\begin{enumerate*}[label=(\roman*)]
\item $\sigma_1$ encodes a clique amongst $v_{x_1}, \dots, v_{x_{|\LocationDom|}}$,
\item $\sigma_2$ encodes a complete bipartite graph where the two partites consist of $v_{x_1}, \dots, v_{x_{|\LocationDom|}}$ and $v_{t_1}, \dots, v_{t_{|\ThreadDom|}}$, and
\item $\sigma_3$ encodes a clique amongst $v_{t_1}, \dots, v_{t_{|\ThreadDom|}}$.
\end{enumerate*}
Note that these three subgraphs already have a clique over $v_{t_1}, \cdots, v_{t_{|\ThreadDom|}}$ and $v_{x_1}, \cdots, v_{x_{|\LocationDom|}}$ as a minor, as desired.

\SubParagraph{The sequence $\sigma_1$.}
We construct the first sequence as
$
\sigma_1=\sigma_1^0\cdot \sigma_1^1\cdots \sigma_1^{|\LocationDom|-1}
$
where
$\sigma_1^0=\wt^1(t_1, x_1), \wt^1(t_1, x_2), \dots, \wt^1(t_1, x_{|\LocationDom|})$,
and for each $i\in \IntSet{|\LocationDom|-1}$, we have
\begin{align*}
\sigma_1^i=&\rd^1(t_1, x_{i}), \wt^1(t_1, x_{i}), \rd^1(t_1, x_{i+1}), \wt^1(t_1, x_{i+1}), \\
&\rd^1(t_1, x_{i}), \wt^1(t_1, x_{i}), \rd^1(t_1, x_{i+2}), \wt^1(t_1, x_{i+2}),\\
&\vdots\\
&\rd^1(t_1, x_{i}), \wt^1(t_1, x_{i}), \rd^1(t_1, x_{|\LocationDom|}), \wt^1(t_1, x_{|\LocationDom|})\ .    
\end{align*}
Note that all events of $\sigma_1$ belong to the same thread, hence pairwise successive events in $\sigma_1$ are connected by $\seqbef$.
This also induces an $\rf$ relation between each read from some location $x_i$ and the previous write to $x_i$.
In order to obtain the first clique minor, we contract all $\seqbef$- and $\rf$- edges between events of $\sigma_1$ in the same location $x_i$, to obtain the sought vertex $v_{x_i}$.
Note that this is possible because all such events form a sequence of the form
\[
\wt^1(t_1,x_i)\LTo{\rf}\rd^1(t_1, x_i)\LTo{\seqbef}\wt^1(t_1, x_i)\LTo{\rf}\cdots \LTo{\seqbef}\wt^1(t_1, x_i)\ .
\]
In order to see that the contracted vertices form a clique, note that for each $i, j\in \IntSet{|\LocationDom|}$ with $i<j$, the subsequence $\sigma_1^i$ connects some $\wt^1(t_1, x_i)$ to the unique $\rd^1(t_1, x_j)$ of $\sigma_1^i$ via $\seqbef$.

\SubParagraph{The sequence $\sigma_2$.}
We construct the second sequence as $
\sigma_2= \sigma_2^1\cdot \sigma_2^2\cdots \sigma_2^{|\ThreadDom|}
$
where
\[\sigma_2^i=\rd^2(t_i, x_1), \rd^2(t_i, x_2), \dots, \rd^2(t_i, x_{|\LocationDom|})\ .\]
For each $i\in [|\ThreadDom|]$ and $j\in [|\LocationDom|]$,  $\rd^2(t_i, x_j)$ will read from the latest event $\wt^1(t_1, x_j)$ in $\sigma_1$.

Vertices in the first side of the sought partition, $v_{x_1}, \dots, v_{x_{|\LocationDom|}}$, have been defined in $\sigma_1$. $\forall i \in [|\ThreadDom|]$, the vertex $v_{t_i}$ will be obtained by contracting $\seqbef$-edges between events in $\sigma_2$ (and, later, $\sigma_3$) that belong to the same thread $t_i$. In order to see that this forms the desired complete bipartition, notice that $\forall i\in [|\ThreadDom|], j\in [|\LocationDom|]$, $\wt^1(t_1, x_j)\LTo{\rf}\rd^2(t_i, x_j)$. 

\SubParagraph{The sequence $\sigma_3$.}Finally, we construct the third sequence as $
\sigma_3= \sigma_3^1\cdot \sigma_3^2\cdots \sigma_3^{|\ThreadDom|}
$
where
\[\sigma_3^i=\wt^3(t_i, x_1), \rd^3(t_{i+1}, x_1), \dots, \rd^3(t_{|\ThreadDom|}, x_1)\ .\]
For every $\sigma_3^i$, each $\rd^3(t_j, x_1)$ reads from the unique $\wt^3(t_i, x_1)$.

As previously mentioned, vertices $v_{t_i}, \forall i \in [|\ThreadDom|]$ will be obtained by contracting $\seqbef$-edges between events in $\sigma_2$ and $\sigma_3$ that belong to the same thread $t_i$. In order to see that the contracted vertices form a clique, note that for each $i, j\in \IntSet{|\ThreadDom|}$ with $i<j$, the subsequence $\sigma_3^i$ connects $\wt^3(t_i, x_1)$ to $\rd^3(t_j, x_1)$ via $\rf$.

Notice that since the linear order of events $\sigma$ follows $\seqbef$ and each read reads from the latest write, this $\rf$-graph is consistent in $\scmm$.
    
\end{proof}

\smallskip
\lemtsotw*
\begin{proof}

Given some $n\in \Nats^+$, the grid graph $\grid{n}$ consists of $n^2$ vertices $\{ \tuple{i,j}\colon i,j\in\IntSet{n} \}$ and edges $\{ (\tuple{i_1, j_1}, \tuple{i_2, j_2})\colon (i_1=i_2 \text{ and } |j_2-j_1|=1) \text{ or } (j_1=j_2 \text{ and } |i_2-i_1|=1)\}$.
It is known that $\tw(\grid{n})=n$~\cite{Robertson1986a}, hence grids have unbounded treewidth (i.e. not bounded by a fixed constant independent of $n$).

We prove the lemma by constructing an $\rf$-graph $G_n$ that has $\grid{n}$ as a minor, since then $\tw(G_n)\geq \tw(\grid{n}) =n$,
following the fact that treewidth bounds are closed under minors~\cite{Robertson1986a}.
Let $\List_n$ be a sequence of the vertices of $\grid{n}$ ordered lexicographically.
For example $\List_3=\tuple{1,1}, \tuple{1,2}, \tuple{1,3}, \tuple{2,1}, \tuple{2,2}, \tuple{2,3}, \tuple{3,1}, \tuple{3,2}, \tuple{3,3}$.
We construct $G$ with two threads $t_1$, $t_2$ and $2$ registers $x$, $y$.
Intuitively, $t_1$ contains all the grid vertices and its horizontal edges, while $t_2$ encodes the vertical edges.

Thread $t_1$ consists of $n^2$ segments $s^1_{\tuple{i,j}}$, one for each vertex $\tuple{i,j}$ of $\grid{n}$.
The segments are $\seqbef$-ordered following $\List_n$.
Each segment $s^1_{\tuple{i,j}}$ is either a single event or two events ordered by $\seqbef$.
In particular
\begin{compactitem}
\item If $i=1$, we have a single event $\wt_{\tuple{i,j}}(x)$.
\item If $i=n$, we have a single event $\rd_{\tuple{i,j}}(y)$.
\item In all other cases, we have two events with the corresponding ordering $\rd_{\tuple{i,j}}(y) \LTo{\seqbef} \wt_{\tuple{i,j}}(x)$.
\end{compactitem}
Thread $t_2$ consists of $n(n-1)$ segments $s^2_{\tuple{i,j}}$, one for each vertex $\tuple{i,j}$ of $\grid{n}$ with $i<n$.
The segments are $\seqbef$-ordered following $\List_n$.
Each segment $s_{\tuple{i,j}}$ consists of two events with the corresponding ordering $\rd_{\tuple{i,j}}(x) \LTo{\seqbef} \wt_{\tuple{i+1,j}}(y)$.
The reads from relation is such that $(\wt_{\tuple{i,j}}(z), \rd_{\tuple{i,j}}(z))\in \rf$ for each $z\in \{x, y\}$.
See \cref{fig:grid_4} for an illustration of $G_4$.

We first argue that $G_n\models \tsomm$.
Note that there can be only one possible witnessing $\mo$, satisfying $\mo\subseteq \po$, since every register is exclusively written by a single thread.
This implies that
\begin{enumerate*}[label=(\roman*)]
\item $\moe=\emptyset$, and
\item $\fr=\fre$.
\end{enumerate*}
Moreover, observe that all $\ppo$ edges are monotone:~if $\event_{\tuple{i_1, j_1}}\LTo{\ppo}\event'_{\tuple{i_2, j_2}}$, then $\tuple{i_2, j_2}$ does not appear before $\tuple{i_1, j_1}$ in $\List_n$.
Since every segment contains at most one write, this makes $\fr$ edges strictly monotone.
Since there are clearly no $(\ppo\cup \rfe)$ cycles, this also excludes the presence of $(\ppo\cup \rfe \cup \moe \cup \fre)$ cycles, guaranteeing \ref{eq:sce}.
Finally, to verify that \ref{eq:scloc} also holds, observe that every register $z\in \{x,y\}$ is exclusively written or read by one of the threads, which makes $\po_z\subseteq \ppo$.
Since $(\ppo\cup \rfe \cup \moe \cup \fre)$ is acyclic, we also have that $(\po_z\cup \rf_z\cup \mo_z \cup \fr_z)$ is acyclic.

We now show that $\grid{n}$ is a minor of $G_n$, as $\tw(\grid{n})=n$~\cite{Robertson1986a}.
We obtain $\grid{n}$ from $G_n$ via edge contractions. 
To do so, we contract all edges of the form $e_{\tuple{i,j}} \to e'_{\tuple{i,j}}$, i.e., edges between events that refer to the same vertex $\tuple{i,j}$ of $\grid{n}$. 
The new vertex created after applying all contractions for $\tuple{i,j}$ is the vertex $\tuple{i,j}$ of the minor $\grid{n}$.
Let $H_n$ be the contracted graph, and we argue that $H_n=\grid{n}$, as it contains all the edges of $\grid{n}$.

First, all horizontal edges of $\grid{n}$, of the form $(\tuple{i_1, j_1}, \tuple{i_2, j_2})$ with $(i_1=i_2 \text{ and } |j_2-j_1|=1)$ are present in $H_n$, as these are adjacent in $\List_n$, which defines the $\seqbef$-order of events in $t_1$.  The vertical edges of $\grid{n}$, of the form $(\tuple{i_1, j_1}, \tuple{i_2, j_2})$ with $(|i_2-i_1|=1 \text{ and } j_1=j_2)$ are also present in $H_n$, through the edges $\rd_{\tuple{i,j}}(x) \LTo{\seqbef} \wt_{\tuple{i+1,j}}(y)$ in $t_2$. 

\end{proof}

%% file: figures/dec_example.tex
\begin{figure}
\begin{subfigure}[t]{.32\textwidth}

    \begin{tikzpicture}[ line width=1pt, node distance = 0.8cm and 0.4cm] 

            \node (1) {\textcolor{colorVARa}{\footnotesize$\rd_1(x)$}};
            \node[below of = 1] (2){\textcolor{colorVARb}{\footnotesize$\wt_2(y)$}};

            \node[right = of 1] (3) {\textcolor{colorVARa}{\footnotesize$\wt_3(x)$}};
            \node[below of = 3] (4){\textcolor{colorVARb}{\footnotesize$\rd_4(y)$}};
            \node[below of = 4] (5){\textcolor{colorVARa}{\footnotesize$\wt_5(y)$}};

            \node[right = of 3] (6) {\textcolor{colorVARa}{\footnotesize$\rd_6(x)$}};
            \node[below of = 6] (7){\textcolor{colorVARa}{\footnotesize$\rmw_7(y)$}};
            \node[below of = 7] (8){\textcolor{colorVARa}{\footnotesize$\rd_8(y)$}};

            \draw [seqbef] (1) to (2);
            \draw [seqbef] (3) to (4);
            \draw [seqbef] (4) to (5);
            \draw [seqbef] (6) to (7);
            \draw [seqbef] (7) to (8);
            \draw [rf] (3) to (1);
            \draw [rf] (2) to (4);
            \draw [rf] (3) to (6);
            \draw [rf] (3) to (7);
            \draw [rf] (7) to (5);
            \draw [rf] (5) to (8);
        
        \end{tikzpicture}
\caption{$G$.}
\label{SUBFIG:DECEXAMPLE1}
\end{subfigure}%
\begin{subfigure}[t]{.23\textwidth}

    \begin{tikzpicture}[ line width=1pt, node distance = 0.7cm,inner sep=5pt] 
            \node[rectangle, draw=black, rounded corners] (t12) at (5,0){\footnotesize$\textcolor{colorVARa}{\rd_1},\textcolor{colorVARb}{\wt_2},\textcolor{colorVARa}{\wt_3},\textcolor{colorVARb}{\rd_4}$};
            \node[rectangle, rounded corners, draw=black, below of =t12] (t13){\footnotesize$\textcolor{colorVARa}{\wt_3},\textcolor{colorVARb}{\rd_4},\textcolor{colorVARa}{\rd_6},\textcolor{colorVARa}{\rmw_7}$};
            \node[rectangle, rounded corners, draw=black, below of =t13] (t14){\footnotesize$\textcolor{colorVARb}{\rd_4},\textcolor{colorVARa}{\wt_5},\textcolor{colorVARa}{\rmw_7},\textcolor{colorVARa}{\rd_8}$};

            \draw [] (t12) tonode [below=2pt]{} (t13);
            \draw [] (t13) tonode [below=2pt]{} (t14);
        
        \end{tikzpicture}
\caption{Width $=3$.}
\label{SUBFIG:DECEXAMPLE2}
\end{subfigure}%
\begin{subfigure}[t]{.45\textwidth}

\begin{tikzpicture}[ line width=1pt, node distance = 0.7cm,inner sep=5pt]

            \node[rectangle, rounded corners, draw=black] (t12) at (5,0){\footnotesize$\textcolor{colorVARa}{\rd_1},\textcolor{colorVARb}{\wt_2},\textcolor{colorVARa}{\wt_3}$};
            \node[rectangle, rounded corners, draw=black, right =0.1 cm of t12] (t13){\footnotesize$\textcolor{colorVARb}{\wt_2},\textcolor{colorVARa}{\wt_3},\textcolor{colorVARb}{\rd_4}$};
            \node[rectangle, rounded corners, draw=black, below of =t13] (t14){\footnotesize$\textcolor{colorVARa}{\wt_3},\textcolor{colorVARb}{\rd_4},\textcolor{colorVARa}{\rmw_7}$};
            \node[rectangle, rounded corners, draw=black,  left = 0.1cm of t14] (t15){\footnotesize$\textcolor{colorVARa}{\wt_3},\textcolor{colorVARa}{\rd_6},\textcolor{colorVARa}{\rmw_7}$};
            \node[rectangle, rounded corners, draw=black,  right = 0.1cm of t14] (t16){\footnotesize$\textcolor{colorVARb}{\rd_4},\textcolor{colorVARa}{\rmw_7},\textcolor{colorVARa}{\wt_5}$};
            \node[rectangle, rounded corners, draw=black, below of = t16] (t17){\footnotesize$\textcolor{colorVARa}{\wt_5},\textcolor{colorVARa}{\rmw_7},\textcolor{colorVARa}{\rd_8}$};

            \draw [] (t12) tonode [below=2pt]{} (t13);
            \draw [] (t13) tonode [below=2pt]{} (t14);
            \draw [] (t14) tonode [below=2pt]{} (t15);
            \draw [] (t14) tonode [below=2pt]{} (t16);
            \draw [] (t16) tonode [below=2pt]{} (t17);
        
        \end{tikzpicture}
\caption{Width $=2$.}
\label{SUBFIG:DECEXAMPLE3}
\end{subfigure}%
\caption{
An $\rf$-graph $G$ (\subref{SUBFIG:DECEXAMPLE1}), together with a sub-optimal (\subref{SUBFIG:DECEXAMPLE2}) and an optimal (\subref{SUBFIG:DECEXAMPLE3}) tree decomposition of $G$.
}
\label{FIG:DECEXAMPLE}
\end{figure}

%% file: figures/sc_td.tex
\begin{figure}
\begin{subfigure}[b]{.35\textwidth}

\begin{tikzpicture}[ line width=1pt, node distance = 0.7cm] 

            \node (wy1) {\textcolor{colorVARb}{\footnotesize$\wt_1(y)$}};
            \node[below of = wy1, inner sep=2pt] (rx0){\textcolor{colorVARa}{\footnotesize$\rd_2(x)$}};

            \node[right = 0.4cm of wy1, inner sep=2pt] (wx1) {\textcolor{colorVARa}{\footnotesize$\wt_4(x)$}};
            \node[above of = wx1, inner sep=2pt] (wx0) {\textcolor{colorVARa}{\footnotesize$\wt_3(x)$}};
            \node[below of = wx1, inner sep=2pt] (ry0) {\textcolor{colorVARb}{\footnotesize$\rd_5(y)$}};

            \draw [seqbef] (wx0) to (wx1);
            \draw [seqbef] (wy1) to (rx0);
            \draw [seqbef] (wx1) to (ry0);
            \draw [rf] (wx1) to (rx0);
            \draw [rf] (wy1) to (ry0);
\end{tikzpicture}
\caption{An  $\rf$-graph $G$ consistent in $\scmm$.}
\label{subfig:sc_rf}
\end{subfigure}%
\hfill
\begin{subfigure}[b]{.6\textwidth}

\begin{tikzpicture}[ line width=1pt, node distance = 0.8cm]

    \node[anchor=west] at (-1.2,0.7){\footnotesize$\tau = \ \textcolor{colorVARa}{\wt_3}, \ \ \ \ \ \ \ \textcolor{colorVARa}{\wt_4}, \ \ \ \ \ \ \ \ \ \textcolor{colorVARb}{\wt_1}, \ \ \ \ \ \ \ \ \ \textcolor{colorVARa}{\rd_2}, \ \ \ \ \ \ \ \ \ \ \textcolor{colorVARb}{\rd_5} $};

    \node[draw, align=left, rounded corners] at (0,0) (a3){\footnotesize$\textcolor{colorVARa}{\wt_3}$\textcolor{white}{
    $\wt_3$}\\\textcolor{white}{a}\textcolor{white}{a}};
    
    \node[draw, align=left, rounded corners, right = 0.3cm of a3] (a4){\footnotesize$\textcolor{colorVARa}{\wt_3}$, $\textcolor{colorVARa}{\wt_4}$\\\textcolor{white}{a}\textcolor{white}{a}};
    
    \node[draw, align=left, rounded corners, right = 0.3cm of a4] (a5){\footnotesize$\textcolor{colorVARb}{\wt_1}$, $\textcolor{colorVARa}{\wt_4}$\\\textcolor{white}{a}\textcolor{white}{a}};
    
    \node[draw, align=left, rounded corners, right = 0.3cm of a5] (a6){\footnotesize$\textcolor{colorVARb}{\wt_1}$, $\textcolor{colorVARa}{\rd_2}$,\\ $\textcolor{colorVARa}{\wt_4}$\textcolor{white}{a}};

    \node[draw, align=left, rounded corners, right = 0.3cm of a6] (a7){\footnotesize$\textcolor{colorVARb}{\wt_1}$, $\textcolor{colorVARa}{\rd_2}$,\\ $\textcolor{colorVARa}{\wt_4}$, $\textcolor{colorVARb}{\rd_5}$};

    \draw (a3) to (a4);
    \draw (a4) to (a5);
    \draw (a5) to (a6);
    \draw (a6) to (a7);

\end{tikzpicture}
\caption{A tree decomposition of (\subref{subfig:sc_rf}).}
\label{subfig:sc_tw}
\end{subfigure}%
\caption{
An instance of the tree decomposition behind \cref{lem:sc-tw}.
}
\label{fig:sc_tw}
\end{figure}

%% file: app_weak.tex
\section{Bounded Treewidth and Weak Behaviours}\label{SEC:APP_WEAK}

We first show that high-treewidth $\rf$-graphs must exhibit weak behaviors. 

\smallskip
\lemhightwweak*
\begin{proof}
    We prove the contrapositive: if there exists a total extension $\sigma$ of $(G.\seqbef\cup G.\rf)^+$ such that \emph{every} pair $\event_1\LTo{\rf}\event_2$ has fewer than $k$ writes/RMWs on their same location between $\event_1$ and $\event_2$ in $\sigma$, then $\tw(G) < |\ThreadDom| + k \cdot |\LocationDom|$.

    We construct a tree decomposition $\tuple{\{ \Bag_i\colon i\in I  \}, \Tree=(I, F)}$ of $G$ based on $\sigma$, in a similar manner to the construction of \cref{lem:sc-tw}.
    
    The vertex set is $I=\IntSet{n}$, and the edge set is $F=\{(i, i+1)\colon i\in \IntSet{n-1} \}$.
    For each $i\in \IntSet{n}$, the bag $\Bag_i$ contains.
    \begin{enumerate*}[label=(\roman*)]
    \item\label{item:td_sc_1} The event $\event_i$,
    \item\label{item:td_sc_2} for each thread $t\in \ThreadDom$, the event $\event_j$ that is the last event of thread $t$ with $j<i$, if such an event exists, and
    \item\label{item:td_sc_3} for each register $x\in \LocationDom$, the (at most) $k$ most recent events among $\{\event_j \mid j<i\}$ that write to $x$.
    \end{enumerate*} 
    
It is clear that $|\Bag_i|\leq |\ThreadDom|+k \cdot |\LocationDom|+1$, yielding the desired bound on the treewidth.
It is also the case that $\tuple{\{ \Bag_i\colon i\in I  \}, \Tree=(I, F)}$ is a tree decomposition of $G$, since
 \begin{enumerate*}[label=(\arabic*)]
\item $\bigcup_{i \in I} \Bag_i = \Events$ due to  \cref{item:td_sc_1},
\item for any edge $(\event_j, \event_i)\in \seqbef$, we have $\event_j, \event_i\in \Bag_i$ due to \cref{item:td_sc_2},
\item for any edge $(\event_j, \event_i)\in \rf$, we have $\event_j, \event_i\in \Bag_i$ due to \cref{item:td_sc_3} (since $\event_j$ must be one of the $k$ most recent write/RMW events on $\lloc(\event_j)$), and
\item every event $\event_i$ occupies a contiguous interval.
\end{enumerate*} 
\end{proof}
We now show examples of executions with very weak behavior that, nevertheless, enjoy small treewidth.

\Paragraph{Store Buffer.}
Consider (a scaled version of) the standard Store Buffer litmus test (\cref{SUBFIG:WEAKTW1}).
Observe that any total extension of $(G.\seqbef \cup G.\rf)^+$ separates $n$ reads from their writer by $n-1$ intermediate writes on the same register.
Nevertheless, this graph has treewidth only $2$.

\Paragraph{Extended Store Buffer.}
Next, consider an Extended Store Buffer example (\cref{SUBFIG:WEAKTW2}), where now every read observes a different write.
In every total extension of $(G.\seqbef\cup G.\rf)^+$, there are $n-1$ stale writes being read (as opposed to just $1$ in the previous example).
Nevertheless, this graph has treewidth only $3$, with the corresponding tree decomposition shown in \cref{FIG:PATHDECESB}.

\Paragraph{Extended Load Buffer.}
Finally, consider an Extended Load Buffer example (\cref{SUBFIG:WEAKTW3}).
Such behaviors are generally considered very weak, because now $(G.\seqbef\cup G.\rf)$ is not acyclic, and in fact, every causality cycle has length $\geq n$.
Nevertheless, this graph also has treewidth only $3$.
This illustrates another important aspect of treewidth:~the definition of tree decompositions is not sensitive to the directionality of edges, and thus the graphs in \cref{SUBFIG:WEAKTW2} and \cref{SUBFIG:WEAKTW3} have, save for event relabeling, the same set of tree decompositions (and thus the same treewidth).

\Paragraph{Implications.}
We show in \cref{SUBSEC:BOUNDED_TW_REACHABILITY} that, under certain memory models, the set of behaviors of a program as witnessed by $\rf$-graphs of bounded treewidth can be explored systematically and efficiently.
The above examples indicate that this set contains non-trivial behaviors, that have traditionally been a challenge to both model checking and verification based on program logics.

%% file: app_experiments.tex
\section{Experiments}\label{SEC:APP_EXPERIMENTS}

\Paragraph{Setup.}
In order to test the hypothesis that executions of programs under weak memory tend to have low treewidth, we use the GenMC model checker~\cite{Kokologiannakis21} to explore all (bounded) executions of a large class of programs and measure their treewidth.
Our benchmark set consists of all programs accompanying GenMC under the ``correct'' category.
We use the options \texttt{-disable-ipr} and \texttt{-disable-race-detection} to allow each benchmark to execute fully, and we set the loop-unroll bound to \texttt{unroll=200}.
In order to obtain indications of programs which have weak behaviors, we run each benchmark both under $\scmm$ and $\ramm$.
In the latter case, GenMC uses the memory order of each memory access, and thus might mix accesses of different types. 
Out of $399$ benchmarks originally, we exclude $27$ for reasons such as not compiling because of dependence on external libraries and not producing $\scmm$ executions (e.g., due to being blocked by the verifier).
We run each of the remaining $372$ benchmarks until it either exceeds $10$ minutes, or produces 10K execution graphs.
This executes $342$ benchmarks to completion, while for the remaining $30$, we keep the execution graphs produced by GenMC thus far.
We use FlowCutter~\cite{flowcutter} to calculate the treewidth of each $\rf$-graph. 
FlowCutter performs iterative over-approximations of treewidth until it exhausts a user-defined time bound ($2$s in our case).
Hence, the reported treewidth values are not tight, but still informative when they are sufficiently small to support our hypothesis.

The full commands used to generate the $\scmm$ and weak-memory executions, are, respectively

\begin{compactitem}
    \item \texttt{./genmc/RelWithDebInfo/genmc --print-exec-graphs --v0}
    
    \texttt{--disable-race-detection --disable-ipr --disable-estimation --unroll=200 -sc}
    
    \texttt{input\_file\_path}
    \item \texttt{./genmc/RelWithDebInfo/genmc --print-exec-graphs --v0 }
    
    \texttt{--disable-race-detection --disable-ipr --disable-estimation --unroll=200 -ra}
    
    \texttt{input\_file\_path}
\end{compactitem}

\input{figures/ex_graphs}

\begin{figure}
\begin{subfigure}[b]{0.49\textwidth}
\centering
\includegraphics[scale=0.37,trim= 0 0 0 0,clip]{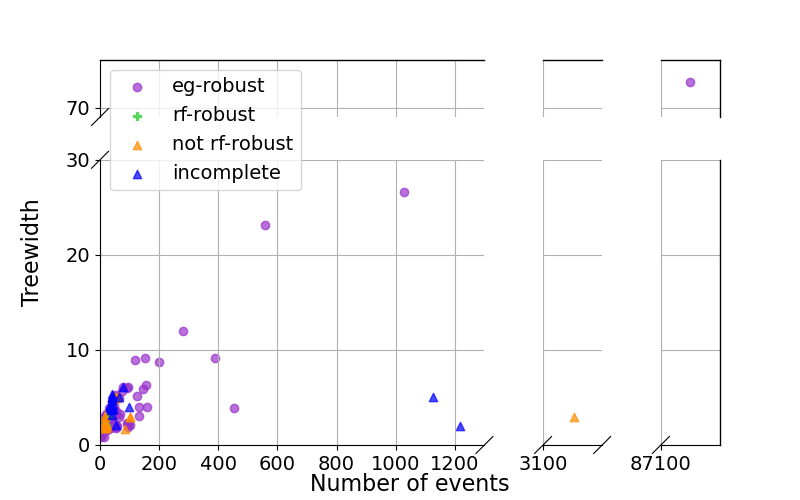}
\subcaption{\label{subfig:graphs_twn}
Overall dataset.
}
\end{subfigure}
\hfill
\begin{subfigure}[b]{0.49\textwidth}
\centering
\includegraphics[scale=0.37,trim= 0 0 0 0,clip]{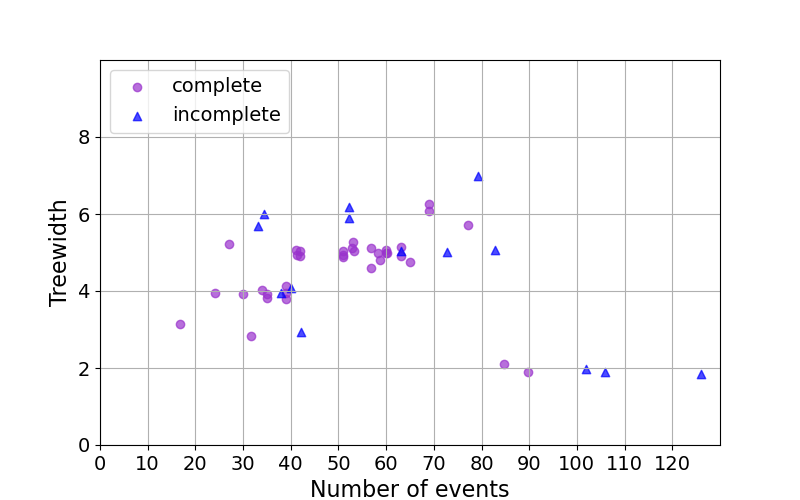}
\subcaption{\label{subfig:graphs_twnxp}
Extended benchmarks with weak behavior.
}
\end{subfigure}
\caption{\label{fig:graphs_n}
Experimental results on the treewidth of our dataset versus execution graph size.
Every benchmark program contributes a data point $(x,y)$, where $y$ is the maximum treewidth among all its $\rf$-graphs explored under weak-memory, and $x$ is the number of events of an execution graph with maximal treewidth size.
The labels ``eg-robust''/``rf-robust'' denote that, between the $\scmm$ and weak-memory runs, the respective benchmark exhibits the same number of execution-/rf- graphs, while ``not rf-robust'' means that there are more $\rf$-graphs under weak memory, and ``incomplete'' means that the benchmark did not execute to completion due to limits on time and output size.
}
\end{figure}

\Paragraph{Results.}
\cref{subfig:graphs_twmax} shows the obtained upper bounds on the treewidth of each benchmark.
\cref{subfig:graphs_twn} shows the relationship between maximum observed treewidth and execution-graph size.
Treewidth tends to stay small and well within the treewidth bound under $\scmm$ of \cref{lem:sc-tw}, even for benchmarks that exhibit weak behavior.
Out of the benchmarks that are executed to completion, $303$ are $\rf$-robust and $282$ are execution-graph robust. 
Moreover, among $39$ benchmarks that are not $\rf$-robust, $36$ exhibit the same maximum treewidth under $\scmm$ and weak memory, while for the remaining $3$, their weak executions saw a treewidth increase by at most $2$.
These findings suggest that verification based on treewidth-based under-approximation may be fairly precise.

\Paragraph{Extensions to benchmarks with weak behavior.}
In order to further stress-test our above conclusions, we select the ``not $\rf$-robust'' benchmarks that have scaling parameters, and scale them further to see if they show a pattern of treewidth divergence.
This results in 10 distinct benchmarks (we exclude simple variants of the same benchmark), and for each, we generate 5 variations by scaling different parameters.
The benchmarks selected for further stress testing are \texttt{buf-ring}, \texttt{chase-lev}, \texttt{dekker2-fixed}, \texttt{dq}, \texttt{dq-opt}, \texttt{eratosthenes}, \texttt{mutex}, \texttt{qu}, \texttt{qu-opt} and \texttt{stc-opt}.
The results are shown in \cref{subfig:graphs_twmaxxp}, confirming that treewidth remains well-behaved.
\cref{subfig:graphs_twnxp} shows the relationship between maximum observed treewidth and execution-graph size.
Out of the $36$ benchmarks executed to completion, only $6$ benchmarks exhibit a higher treewidth under weak memory (by at most $2$), $5$ of which are variations of \texttt{dekker-2}.

%% file: figures/ex_graphs.tex
\begin{figure}
\begin{subfigure}[b]{0.49\textwidth}
\centering
\includegraphics[scale=0.4,trim= 0 0 0 35,clip]{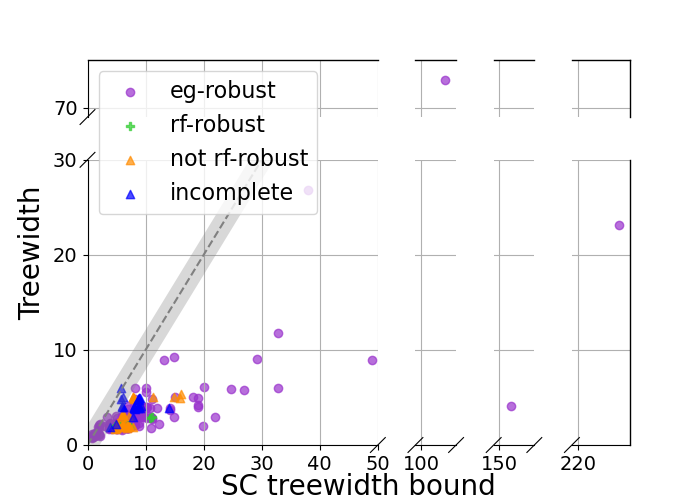}
\subcaption{\label{subfig:graphs_twmax}
Overall dataset.
}
\end{subfigure}
\hfill
\begin{subfigure}[b]{0.49\textwidth}
\centering
\includegraphics[scale=0.4,trim= 0 0 0 35,clip]{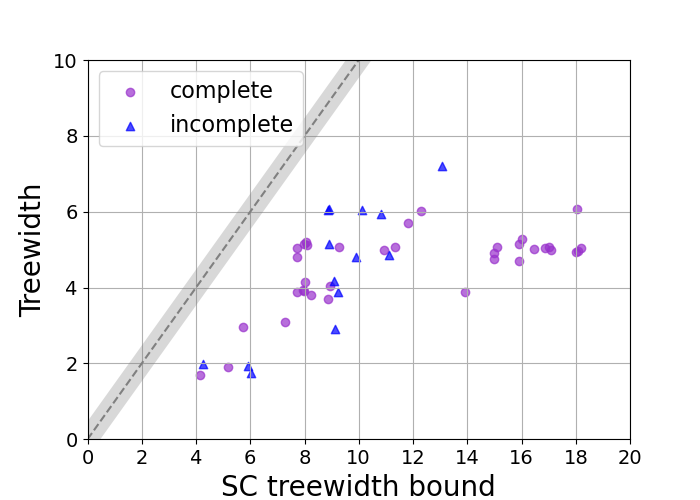}
\subcaption{\label{subfig:graphs_twmaxxp}
Extended benchmarks with weak behavior.
}
\end{subfigure}
\caption{\label{fig:graphs}
Experimental results on the treewidth of our dataset.
Every benchmark program contributes a data point $(x,y)$, where $y$ is the maximum treewidth among all its $\rf$-graphs explored under weak-memory, and $x$ is the theoretical upper bound of its treewidth based on \cref{lem:sc-tw}.
The labels ``eg-robust''/``rf-robust'' denote that, between the $\scmm$ and weak-memory runs, the respective benchmark exhibits the same number of execution-/rf- graphs, while ``not rf-robust'' means that there are more $\rf$-graphs under weak memory, and ``incomplete'' means that the benchmark did not execute to completion due to limits on time and output size.
}
\end{figure}

%% file: app_mso.tex
\section{Proofs of \cref{SEC:MSO}}\label{SEC:APP_MSO}

\subsection{Proofs of \cref{SUBSEC:MSO}}

\smallskip
\lemrfgraphsmsodefinable*
\begin{proof}
We provide two MSO formulas $\msosb$ and $\msorf$ that express the well-formedness conditions on $\seqbef$ and $\rf$, respectively, so that $\tuple{\Events,\seqbef,\rf}$ defines an $\rf$-graph.
\begin{align*}
\msosb \triangleq & \forall \event_1,\event_2 . (\seqbef(\event_1,\event_2)\implies\lnot\exists \event_3.(\event_2\neq \event_3 \land \seqbef(\event_1,\event_3))\numberthis \\
\land& \ \forall \event_1,\event_2 . (\seqbef(\event_2,\event_1)\implies\lnot\exists \event_3.(\event_2\neq \event_3 \land \seqbef(\event_3,\event_1))\\
\land& \ \forall \event_1,\event_2 . (\seqbef(\event_2,\event_1)\implies \tid(\event_1)=\tid(\event_2))\\
\land& \ \forall \event_1, \event_2. ((\lnot \exists \event_3. (\seqbef(\event_3,\event_1) \lor \seqbef(\event_3,\event_2))) \implies \tid(\event_1)\neq\tid(\event_2))\\
\end{align*}
\begin{align*}
\msorf \triangleq \forall \event_1, \event_2. &(\rf(\event_1,\event_2)\implies \op(\event_1)\in \{\wt,\rmw\} \land  \op(\event_2)\in \{\rd,\rmw\}\numberthis\\
& \land \lloc(\event_1)=\lloc(\event_2)\\
& \land \val_{\wt}(\event_1)=\val_{\rd}(\event_2)\\
& \land  \lnot \exists \event_3. (\event_3\neq \event_1 \land \rf(\event_3,\event_2)))\\
\land  \forall \event_1. &(\op(\event_1)\in \{\rd,\rmw\} \implies \exists \event_2. \rf(\event_2,\event_1))
\end{align*}
Observe that neither formula contains second-order quantifiers, which implies that the set of $\rf$-graphs is, in fact, first-order definable.
\end{proof}

\smallskip
\lemfreerfgraphsdefinable*
\begin{proof}
Let $\ConcProg=\tuple{\Prog_{t_1},\dots, \Prog_{t_{|\ThreadDom|}}}$ be the concurrent program, where $\Prog_{t}=$

$\tuple{\States_{t}, \Alphabet_{t}, \Trans_{t}, \InitState_{t}}$ is each sequential program of $\ConcProg$.
Let $\Trans = \bigcup_{t\in \ThreadDom}\Trans_t$. 
For $\delta = (p,\alpha,q) \in \Trans$, we denote $p(\delta)=p$,  $\alpha(\delta)=\alpha$ and $q(\delta)=q$. 
The following formula, conjuncted with $\msosb$, expresses that $\tuple{E, \seqbef} \in \Lang{\ConcProg}$.
\begin{align*}
\msoprogsb \triangleq& \exists (X_\delta)_{\delta \in \Delta}.\\
\Biggl[ \forall \event_1. &\left(\left(\bigvee_{\delta\in \Delta} \event_1 \in X_\delta \land \alpha(\delta) = \llab(\event_1) \right) \land \left(\bigwedge_{\delta_1,\delta_2\in \Delta} \event_1 \in X_{\delta_1} \implies \lnot e_1 \in X_{\delta_2}\right)\right)\\
\land \forall \event_1,\event_2. &\left(\bigwedge_{\delta_1,\delta_2\in \Delta} \left(\event_1\in X_{\delta_1} \land \event_2\in X_{\delta_2} \land  \seqbef(\event_1,\event_2)\right) \implies q(\delta_1) = p(\delta_2)\right)\\
\land \forall \event_1. &\left(\bigwedge_{\delta \in \Delta} \left(\event_1\in X_{\delta} \land \lnot\exists \event_2. \seqbef(\event_2,\event_1)\right) \implies p(\delta)=\InitState_{\tid(\event_1)}\right)\Biggl]
\end{align*}
\end{proof}

\subsection{Proof of \cref{thm:mso_definable_mms}}\label{SUBSEC:APP_MSO_DEF}

We treat each memory model individually. 
An overview of the case of $\ramm$, and its MSO formulas, are presented in the main text.
\smallskip
\lemramso*
\begin{proof}
First, assume that $G\models \msora$ and we argue that $G\models \ramm$ by constructing a total modification order $\mo_x$, for each register $x$, such that $\tuple{\Events, \seqbef, \rf, \mo}\models \ramm$, where $\mo=\bigcup_{x\in \LocationDom} \mo_x$.
Observe that since $G\models \msohbirr$, the \ref{eq:hbirr} axiom holds independently of this $\mo$.
Moreover, since $G\models \msowat$, the write/RMW events of $G$ are partitioned into RMW chains.
For each register $x$, we construct the desired total $\mo_x$ by
\begin{enumerate*}[label=(\roman*)]
\item obtaining a total order on all write events on $x$ such that if $\msopmora(\wt_1, \wt_2)$ then $\wt_1\LTo{\mo_x}\wt_2$, and
\item appending right after each write $\wt$ the events of RMW chain of $\wt$ (in the same order).
\end{enumerate*}
The acyclicity of $\msopmora$ (\cref{eq:mso_rc_wc_ra}) guarantees that this process is well-defined, and that the resulting $\mo$ satisfies the \ref{eq:at} axiom.
Moreover, since $\msopmora(\event_1,\event_2)$ whenever $\event_1\LTo{\hb}\event_2$, we have that the \ref{eq:wc} axiom is satisfied.
Finally, we argue that \ref{eq:rc} also holds.
Consider any conflicting triplet $\tuple{\event_3, \event_4, \event_5}$ such that $\event_3\LTo{\hb}\event_5$,
and let $\event_1$ and $\event_2$ be the heads of of the RMW chains of $\event_3$ and $\event_4$.
Note that  $\event_4\not \LTo{\rf^+}\event_3$, as in this case $\event_1=\event_2$ and we would have $\msopmora(\event_1, \event_2)$ due to \cref{eq:mso_pmora}, contradicting the fact that $G\models \msorcwcra$.
If $\event_3\LTo{\rf^+}\event_4$, we have $\event_1=\event_2$ and thus $\event_3\LTo{\mo_x}\event_4$ since the events in each RMW chain are $\mo_x$ ordered according to $\rf^+$.
The only remaining case is that $\event_1\neq \event_2$.
Observe that in this case we have $\msopmora(\event_1,\event_2)$, and by construction, all events of the RMW chain of $\event_1$ are $\mo_x$-ordered before all events of the RMW chain of $\event_2$, therefore $\event_3\LTo{\mo_x}\event_4$.

For the opposite direction, assume that $G\models \ramm$ and we argue that $G\models \msora$.
It is straightforward to verify that $G$ satisfies \ref{eq:hbirr} and \ref{eq:wat} and thus $G\models \msohbirr$ and $G\models \msowat$.
Consider any $\mo$ that witnesses the consistency of $G$.
It is straightforward to verify that for any $\event_1$, $\event_2$ such that $\msopmora(\event_1, \event_2)$, we have $\event_1\LTo{\mo_x}\event_2$, due to \ref{eq:wc}, \ref{eq:rc} and the totality of $\mo_x$.
The acyclicity of $\mo_x$ implies that $G\models \msorcwcra$, thus $G\models \msora$.

The desired result follows.
\end{proof}

\Paragraph{The case of $\rlxmm$.}
To define $\RFGraphsMM{\rlxmm}$ in MSO, we must express the \ref{eq:scloc} axiom.
To do this in a direct way, we first recall that, for a single register, $\rlxmm$ and $\ramm$ coincide.
It thus suffices to tailor the formulas for $\ramm$ to a single register.
In particular, given a register $x$, we let
\begin{enumerate*}[label=(\roman*)]
\item  $\msohbirrx$ be identical to $\msohbirr$, and
\item $\msorcwcrlx$ be identical to $\msorcwcra$,
\end{enumerate*}
except that, in both cases, $\hb$ is replaced by $(\po_x\cup \rf_x)^+$.
Then
\[
\msorlx \triangleq  \bigwedge_{x \in \LocationDom} \msohbirrx \land \msowat \land \msorcwcrlx
\numberthis\label{eq:mso_rlx}
\]
expresses \ref{eq:scloc}, as stated in the following lemma.

\smallskip
\begin{restatable}{lemma}{lemrlxmso}\label{lem:rlx_mso}
For any $\rf$-graph $G=\tuple{\Events, \seqbef, \rf}$, we have $G\models \msorlx$ iff $G\models \rlxmm$.
\end{restatable}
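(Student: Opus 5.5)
The plan is to reduce \cref{lem:rlx_mso} to \cref{lem:ra_mso}, applied one register at a time. Since $\rlxmm$ consists only of \ref{eq:scloc}, and both the axiom and any witnessing $\mo=\bigcup_x \mo_x$ split per register, we have $G\models\rlxmm$ iff for every $x\in\LocationDom$ there is a total order $\mo_x$ on $\locx{\WriteDom}\cup\locx{\RMWDom}$ with $\acyclic(\po_x\cup\rf_x\cup\mo_x\cup\fr_x)$. The first step is to check that, for a fixed $x$, this acyclicity is equivalent to the $\ramm$ axioms \ref{eq:hbirr}, \ref{eq:wc}, \ref{eq:rc}, \ref{eq:at}, with $\hb$ replaced by $\hb_x\triangleq(\po_x\cup\rf_x)^+$. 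This is the standard fact that $\mathsf{SC}$ and $\ramm$ coincide for a single location. For ($\Rightarrow$), each of the four axioms asserts that some composition of $\hb_x$, $\mo_x$, $\rf_x^{-1}$ is irreflexive, and such a cycle directly yields a $(\po_x\cup\rf_x\cup\mo_x\cup\fr_x)$-cycle; for example, a reflexive pair of $\mo_x;\hb_x;\rf_x^{-1}$ gives $\fr_x;\hb_x$ cyclic. For ($\Leftarrow$), I would take a minimal cycle and use the totality of $\mo_x$ to collapse consecutive $\mo_x/\fr_x$ steps, reducing it to one of the four forbidden shapes.

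The second step is to replay the proof of \cref{lem:ra_mso} with $\hb$ replaced by $\hb_x$. For a single register the events of $\msohbirrx$ and $\msorcwcrlx$ all lie on $x$: $\msotriplet$ enforces equal locations, and the $\rf$-chains stay on one location. I would therefore define $\msorcwcrlx$ restricted to writes on $x$, and argue both directions exactly as in \cref{lem:ra_mso}. In the soundness direction, I would construct $\mo_x$ by linearising the acyclic relation $\msopmora$ restricted to $x$ (with $\hb_x$), and then insert each RMW chain immediately after its head write. Atomicity follows from this chain insertion, which relies on $\msowat$ to guarantee that the chains are disjoint. Write- and read-coherence follow from the case analysis on conflicting triplets used there. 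In the completeness direction, every inferred $\msopmora$ edge is forced in any witnessing $\mo_x$ by \ref{eq:wc} and \ref{eq:rc} together with totality, so acyclicity of $\mo_x$ gives $\msorcwcrlx$.

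The last step is to take the conjunction over $x\in\LocationDom$ together with $\msowat$. This is sound because the per-register witnesses $\mo_x$ are independent and their union is a legal $\mo$. One point to check here is that \ref{eq:wat} is implied by \ref{eq:scloc}, so that including $\msowat$ does not exclude any $\rlxmm$-consistent graph. It is implied: if two RMWs $\rmw_2,\rmw_3$ both read from $\event_1$ with $\rmw_2\LTo{\mo_x}\rmw_3$, then $\rmw_3\LTo{\fr_x}\rmw_2\LTo{\mo_x}\rmw_3$ is a cycle.

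I expect the main obstacle to be the ($\Leftarrow$) direction of the first step, namely showing that the four irreflexivity axioms, with $\hb$ localised to $\hb_x$, exclude every $(\po_x\cup\rf_x\cup\mo_x\cup\fr_x)$-cycle. I would handle it by induction on the number of $\mo_x/\fr_x$ edges in a shortest cycle. A cycle with no such edges contradicts \ref{eq:hbirr}. Otherwise, each maximal $\hb_x$-segment sits between two $\mo_x$-comparable writes (using $\fr_x=\rf_x^{-1};\mo_x$), and by totality plus \ref{eq:wc} and \ref{eq:rc} these writes are ordered consistently, so the cycle shortens; what remains at the end is excluded by \ref{eq:at} or irreflexivity of $\mo_x$.
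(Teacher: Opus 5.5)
Your proposal is correct and follows essentially the same route as the paper, whose proof simply derives the lemma from \cref{lem:ra_mso} together with the fact that $\rlxmm$ and $\ramm$ coincide on a single register. You also spell out what the paper leaves implicit: the per-register equivalence of \ref{eq:scloc} with the localized $\ramm$ axioms, the independence of the per-register witnesses $\mo_x$, and the observation that \ref{eq:scloc} already implies \ref{eq:wat}, so conjoining $\msowat$ excludes no $\rlxmm$-consistent graph.
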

\begin{proof}
The proof follows easily from \cref{lem:ra_mso} and the fact that, on a single register, $\rlxmm$ and $\ramm$ coincide.
\end{proof}

\Paragraph{The case of $\rmwfree{\sramm}$.}
The RMW-free fragment of $\sramm$ requires \ref{eq:swc} and \ref{eq:rc}.
We begin with the following formula $\msopmosra$, which captures $\mo$ orderings between write events due to \ref{eq:rc}.
This is similar, in spirit, to $\msopmora$ for $\ramm$, with two main differences:
\begin{enumerate*}[label=(\roman*)]
\item we do not need to infer orderings of the form $\wt_1\LTo{\mo}\wt_2$ when $\wt_1\LTo{\hb}\wt_2$, as \ref{eq:swc} mixes $\hb$ and $\mo$ and this allows to defer such $\mo$ orderings when we check for \ref{eq:swc} later, and
\item in the absence of RMW-chains, orderings of the form $\wt_1\LTo{\mo}\wt_2$ only occur when there is a conflicting tripplet $\tuple{\wt_1, \wt_2, \rd}$ with $\wt_1\LTo{\hb}\rd$.
\end{enumerate*}
\[
\msopmosra(\event_1,\event_2) \triangleq \event_1 \neq \event_2 \land \op(\event_1) = \wt \land \lloc(\event_1) = \lloc(\event_2) \land \exists \event_3. \left(\varphi_\rf(\event_2, \event_3) \land \msohb(\event_1,\event_3)\right)\numberthis
\]
The corresponding coherence formula $\msorcswcra$ for $\rmwfree{\sramm}$ expresses the requirement that $\msopmosra$ mixed with $\hb$ is acyclic, using the transitive closure formula $\varphi_{R^+}$ from \cref{eq:mso_tc} with $R=\msopmosra \lor \msohb$.
This is also the final formula $\msosranormw$ expressing consistency in $\rmwfree{\sramm}$.
\[
\msosranormw  \triangleq \msorcswcra \triangleq \lnot \exists \event. \varphi_{R^+}(\event,\event)
\numberthis\label{eq:mso_rc_swc_ra}
\]

\smallskip
\begin{restatable}{lemma}{lemsramso}\label{lem:sra_mso}
For any $\rf$-graph $G=\tuple{\Events, \seqbef, \rf}$, we have $G\models \msosranormw$ iff $G\models \rmwfree{\sramm}$.
\end{restatable}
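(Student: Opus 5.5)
We prove the two directions separately, in the style of the proof of \cref{lem:ra_mso}. Recall that in the RMW-free setting every write/RMW is a plain write, $\mo$ relates only writes on the same register, and $\rmwfree{\sramm}$ requires \ref{eq:swc} and \ref{eq:rc}. (The \ref{eq:at} axiom is vacuous without RMWs.) Write $P$ for the relation defined by $\msopmosra$: $\wt_1 \LTo{P} \wt_2$ iff $\wt_1\neq \wt_2$ are writes on the same register and there is a read $\rd$ with $\wt_2\LTo{\rf}\rd$ and $\wt_1\LTo{\hb}\rd$. Since $\rf$ only originates in writes, $\op(\wt_2)=\wt$ holds implicitly.

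\emph{Soundness ($G\models\rmwfree{\sramm}\Rightarrow G\models\msosranormw$).}
Fix a witnessing $\mo$. We show $P\subseteq\mo$. Suppose $\wt_1\LTo{P}\wt_2$ with witness $\rd$. By totality of $\mo_x$, either $\wt_1\LTo{\mo}\wt_2$ or $\wt_2\LTo{\mo}\wt_1$. The latter gives the cycle $\wt_2\LTo{\mo}\wt_1\LTo{\hb}\rd\LTo{\rf^{-1}}\wt_2$, which violates \ref{eq:rc}. Hence $P\cup\hb\subseteq\mo\cup\hb$. A cycle of $P\cup\hb$ would therefore be a cycle of $(\hb\cup\mo)^+$, contradicting \ref{eq:swc}. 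So $\msorcswcra$ holds.

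\emph{Completeness ($G\models\msosranormw\Rightarrow G\models\rmwfree{\sramm}$).}
By assumption $R \triangleq P\cup\hb$ is acyclic, so $R^+$ is a strict partial order. Let $\tau$ be any total extension of $R^+$ on $\Events$. For each register $x$, define $\mo_x$ as $\tau$ restricted to $\locx{\WriteDom}$, and set $\mo=\bigcup_x\mo_x$; this is total on each $\locx{\WriteDom}$ by construction.
\begin{compactitem}
\item For \ref{eq:swc}, note that $\hb\subseteq\tau$ and $\mo\subseteq\tau$, so $(\hb\cup\mo)^+\subseteq\tau$, which is irreflexive.
\item For \ref{eq:rc}, suppose $\wt_2\LTo{\mo}\wt_1\LTo{\hb}\rd\LTo{\rf^{-1}}\wt_2$. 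If $\wt_1=\wt_2$, then $\mo$ is reflexive, which is impossible. Otherwise, the read $\rd$ witnesses $\wt_1\LTo{P}\wt_2$, so $\wt_1<_\tau\wt_2$. This contradicts $\wt_2\LTo{\mo}\wt_1$, i.e.\ $\wt_2<_\tau\wt_1$.
\end{compactitem}
Thus $\tuple{\Events,\seqbef,\rf,\mo}\models\rmwfree{\sramm}$.

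\emph{Main obstacle.} The delicate point is the completeness direction: we must check that the only $\mo$ constraints \ref{eq:rc} imposes are exactly the $P$-edges. We must also check that \ref{eq:hbirr} is not needed separately, since it is implied by the acyclicity of $R\supseteq\hb$, and that ordering all events (not just writes) by $\tau$ is harmless. Both checks go through, because \ref{eq:swc} is phrased over $\hb\cup\mo$, which embeds into a single linear extension. A minor technical point is that the proof uses $\hb$ exactly as defined by $\msohb$; this is justified by the correctness of the transitive-closure formula \cref{eq:mso_tc}.
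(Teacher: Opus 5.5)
Your proof is correct and follows essentially the same argument as the paper. Soundness shows $P\subseteq\mo$ via totality of $\mo_x$ and \ref{eq:rc}, then concludes with \ref{eq:swc}; completeness linearly extends $P\cup\hb$, reads off each $\mo_x$, and checks \ref{eq:rc} against the $P$-edges. The only difference is cosmetic: you extend over all events rather than only over writes, which makes the \ref{eq:swc} check immediate, whereas the paper's write-only order implicitly relies on $\hb$-segments between writes collapsing by transitivity.
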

\begin{proof}
First, assume that $G\models \msosranormw$ and we argue that $G\models \rmwfree{\sramm}$ by constructing a total modification order $\mo_x$, for each register $x$, such that $\tuple{\Events, \seqbef, \rf, \mo}\models \rmwfree{\sramm}$, where $\mo=\bigcup_{x\in \LocationDom} \mo_x$.
We construct a total order $\sigma$ on all write events (across all registers) such that if $\msopmosra(\wt_1, \wt_2)\lor \msohb(\wt_1, \wt_2)$ then $\wt_1\LTo{\mo_x}\wt_2$.
The acyclicity of this relation (\cref{eq:mso_rc_swc_ra}) guarantees that this process is well-defined.
The existence of $\sigma$ implies that $\mo$ satisfies \ref{eq:swc}, where $\mo_x$ orders all write events on register $x$ according to $\sigma$.
Finally, \ref{eq:rc} follows by a simplified version of the proof of this axiom for \cref{lem:ra_mso}.

For the opposite direction, assume that $G\models \rmwfree{\sramm}$ and we argue that $G\models \msosranormw$.
Consider any $\mo$ that witnesses the consistency of $G$.
It is straightforward to verify that for any $\event_1$, $\event_2$ such that $\msopmosra(\event_1, \event_2)$, we have $\event_1\LTo{\mo_x}\event_2$, due to \ref{eq:rc} and the totality of $\mo_x$.
Then, \ref{eq:swc} implies that $G\models \msorcswcra$, thus $G\models \msosranormw$.

The desired result follows.
\end{proof}

\Paragraph{The case of $\wramm$.}
Recall that we have already defined the MSO formulas $\msohbirr$ (\cref{eq:mso_hbirr}) and $\msowat$ (\cref{eq:mso_wat}), respectively expressing \ref{eq:hbirr} and \ref{eq:wat}.
The following formula expresses \ref{eq:wrc}.
\begin{align}
\msowrc \triangleq & \ \lnot \exists \event_1,\event_2,\event_3. (\event_1\neq \event_2 \land \event_2\neq \event_3 \land \event_1\neq \event_3\\&\land \lloc(\event_1) = \lloc(\event_2) \land \lloc(\event_2) = \lloc(\event_3) \nonumber\\
&\land \op(\event_2)\in \{\wt,\rmw\} \land \rf(\event_1,\event_3) \land \msohb(\event_1,\event_2) \land \msohb(\event_2,\event_3))\nonumber
\end{align}
and the final formula for $\RFGraphsMM{\wramm}$ is
\[
\msowra \triangleq \msohbirr \land \msowrc \land \msowat
\numberthis\label{eq:mso_wra}
\]

\smallskip
\begin{restatable}{lemma}{lemwramso}\label{lem:wra_mso}
For any $\rf$-graph $G=\tuple{\Events, \seqbef, \rf}$, we have $G\models \msowra$ iff $G\models \wramm$.
\end{restatable}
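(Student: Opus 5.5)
The plan is to show both directions by relating $\msowra$ to the three axioms of $\wramm$, namely \ref{eq:hbirr}, \ref{eq:wrc} and \ref{eq:wat}. None of these mentions $\mo$. Hence, for an $\rf$-graph $G$, $G\models\wramm$ holds iff $\tuple{\Events,\seqbef,\rf}$ satisfies the three axioms, with $\mo$ arbitrary, say any total order per register. So the lemma reduces to showing that each conjunct of $\msowra$ expresses exactly its axiom.

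For \ref{eq:hbirr} and \ref{eq:wat}, this is immediate from the definitions. First, $\msohb(\event_1,\event_2)$ holds iff $\event_1\LTo{(\po\cup\rf)^+}\event_2$. This follows from the correctness of the transitive-closure formula (\cref{eq:mso_tc}) with $R=\seqbef\cup\rf$, together with $\po=\seqbef^+$, which gives $(\seqbef\cup\rf)^+=(\po\cup\rf)^+=\hb$. Hence $\msohbirr$ is exactly $\irreflexive(\hb)$. Second, $\msowat$ literally states that no two distinct RMWs read from the same event. Since $\rf^{-1}$ is functional on an $\rf$-graph, this is equivalent to $\identity{\RMWDom};\rf^{-1};\rf;\identity{\RMWDom}=\identity{\RMWDom}$.

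The main step is \ref{eq:wrc}, that is, $\irreflexive(\hb_l;[\WriteDom];\hb;\rf^{-1})$. I read $\hb_l$ as $\hb$ restricted to events on a common location, and $[\WriteDom]$ as covering writes and RMWs. A violation is then a triple $\event_1\LTo{\rf}\event_3$, $\event_1\LTo{\hb}\event_2\LTo{\hb}\event_3$, with $\event_2$ a write or RMW on the same location as $\event_1$. Such a triple is exactly a witness to the existential inside $\msowrc$. The distinctness conjuncts are harmless. Indeed, $\event_1\neq\event_2$ and $\event_2\neq\event_3$ are forced whenever $\hb$ is irreflexive. Moreover, $\event_1\neq\event_3$ is forced because $\event_1\LTo{\hb}\event_3$ would otherwise give an $\hb$-cycle. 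So under $\msohbirr$ these conjuncts neither add nor remove witnesses. The location constraints match, since $\rf$ relates same-location events.

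I expect the main obstacle to be purely definitional: fixing the precise meaning of $\hb_l$ and checking that the equalities $\lloc(\event_1)=\lloc(\event_2)=\lloc(\event_3)$ capture it without loss. Once that is settled, both implications follow by pairing each violating triple with a satisfying assignment of the negated existential. Conjoining the three equivalences then gives $G\models\msowra$ iff $G\models\wramm$.
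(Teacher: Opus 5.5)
Your proposal is correct and takes the same route as the paper, whose proof just observes that each axiom of $\wramm$ translates directly into the corresponding conjunct of $\msowra$; you spell out the details. Your reading of $[\WriteDom]$ in \ref{eq:wrc} as including RMWs is in fact needed, since it is what matches the conjunct $\op(\event_2)\in\{\wt,\rmw\}$ of $\msowrc$. Your point that the distinctness conjuncts are harmless only in the presence of $\msohbirr$ is the right way to close the argument.
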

\begin{proof}
The proof is straightforward as all axioms are directly translatable to MSO.
\end{proof}

\subsection{Proofs of \cref{SUBSEC:BOUNDED_TW_REACHABILITY}}

\lemboundedtwreachability*
\begin{proof}
Since the axioms of $\mm$ are MSO-definable, \cref{lem:rf_graphs_mso_definable} and \cref{lem:free_rf_graphs_definable} imply that the set $\RFGraphsOf{\ConcProg}{\mm}$ is also MSO-definable, via some formula  $\Phi$.
Hence, the problem of whether there exists an $\rf$-graph $G\in \RFGraphsOf{\ConcProg}{\mm}$ with $\tw(G)\leq k$ and such that $G\models \varphi$ is equivalent to deciding whether $\Phi\land \varphi$ is satisfiable over graphs of treewidth $\leq k$, 
which is decidable due to \cref{cor:mso_satisfiability}.
\end{proof}

\smallskip
\lemboundedtwdiff*
\begin{proof}
Let $\Phi_1$ and $\Phi_2$ be the MSO formulas for the sets $\RFGraphsOf{\ConcProg_1}{\mm_1}$ and $\RFGraphsOf{\ConcProg_2}{\mm_2}$.
We have $\RFGraphsOfTw{\ConcProg_1}{\mm_1}{k}\subseteq \RFGraphsOfTw{\ConcProg_2}{\mm_2}{k}$ iff the MSO formula $\Phi_1\implies \Phi_2$ is satisfiable over graphs of treewidth $\leq k$, which is decidable due to \cref{cor:mso_satisfiability}.
\end{proof}

\subsection{Proof of \cref{thm:non_mso_definability}}\label{SUBSEC:APP_NON_MSO_DEFINABLE_MM}

We treat each case individually. 

\Paragraph{The case of $\scmm$.}
The work of~\cite{Mathur2020} shows that there is a linear-time reduction from OV to checking the $\scmm$-consistency of RMW-free $\rf$-graphs consisting of $2$ threads and $7$ registers.
If $\RFGraphsMM{\scmm}$ was MSO definable via a formula $\varphi$, we would be able to solve OV in nearly linear time by handling an OV-instance $\OVInstance$ as follows:
\begin{enumerate*}[label*=(\roman*)]
\item use the reduction of~\cite{Mathur2020} to obtain an $\rf$-graph $G$,
\item use \cref{lem:sc-tw} to show that if $G\models \scmm$ then $\tw(G)\leq 2+7=9$, and 
\item use \cref{thm:courcelle} on $\varphi$, $G$ and $9$ to decide whether $G$ is such that $\tw(G)\leq 9$ and $G\models \varphi$.
\end{enumerate*}

\Paragraph{The case of $\sramm$.}
We now consider $\sramm$, and show that the presence of RMWs breaks the MSO definability of \cref{lem:sra_mso}.
Our proof strategy has two steps.
First, we observe that the reduction of \cite{Mathur2020} constructs $\rf$-graphs $G$ for which $G.\rf$ is \emph{functional}, i.e., every write is observed at most once.
The importance of this property, for our purposes, is highlighted in the following lemma.

\smallskip
\begin{restatable}{lemma}{lemsctosra}\label{lem:sc_to_sra}
Consider any RMW-free $\rf$-graph $G$ where $G.\rf$ is functional.
Let $G'$ be the $\rf$-graph obtained from $G$ by replacing each read event by an RMW event.
Then $G\in \RFGraphsMM{\scmm}$ iff $G'\in\RFGraphsMM{\sramm}$.
\end{restatable}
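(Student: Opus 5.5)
We prove the two directions separately. In $G'$, every former read $r$ becomes an RMW $u_r$ that reads from the same writer as $r$ and writes a fresh value. Since $G.\rf$ is functional, each write or RMW of $G'$ is read by at most one RMW, so \ref{eq:wat} holds in $G'$. The RMW events therefore organize into RMW chains. Each chain has the form $\wt \LTo{\rf} u_{r}$ of length at most $2$, because the original writes are not RMWs and a new RMW $u_r$ is read by nobody. This is the only place functionality is used, and it is what makes the chains trivial.

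\emph{($\Rightarrow$).} Assume $G\models\scmm$, witnessed by $\mo$, and let $\tau$ be a total extension of $\po\cup\rf\cup\mo\cup\fr$. By the standard characterization, in $\tau$ every read reads from the latest preceding write on its register. Define $\mo'$ on register $x$ as the order in $\tau$ of all writes and RMWs on $x$, where each $u_r$ sits at the position of $r$. Then $\hb'$ (on the same events and edges) is contained in $\tau$, and $\mo'$ follows $\tau$. Hence $(\hb'\cup\mo')^+$ is contained in $\tau$ and is irreflexive, giving \ref{eq:swc}.

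For \ref{eq:rc}, suppose $w\LTo{\mo'}w'\LTo{\hb'}u$ with $w\LTo{\rf}u$. Then $w<w'<u$ in $\tau$, so $w'$ is a later write before the reader, contradicting the latest-write property. For \ref{eq:at}, we need no $w'$ with $w\LTo{\mo'}w'\LTo{\mo'}u$ where $w\LTo{\rf}u$. This again follows because $u$ is located where $r$ was, and $w$ is the latest write before $r$.

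\emph{($\Leftarrow$).} Assume $G'\models\sramm$, witnessed by $\mo'$. Take a total extension $\tau'$ of $\hb'\cup\mo'$, which exists by \ref{eq:swc}. Replace each $u_r$ by $r$ to get a linearization $\tau$ of the events of $G$; it extends $\po\cup\rf$ since $\hb=\hb'$ under the renaming. The crux is to show that in $\tau$ every read $r$ reads from the latest preceding write on its register. Let $w\LTo{\rf}r$, and suppose some write $w'$ on $x$ lies strictly between $w$ and $r$ in $\tau$. Then $w'\in\tau'$ lies between $w$ and $u_r$.

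The plan is to show that $\mo'_x$ restricted to the events of $x$ coincides with the order in $\tau'$, which holds because $\tau'$ extends $\mo'$ and $\mo'_x$ is total. This gives $w\LTo{\mo'}w'\LTo{\mo'}u_r$ with $w\LTo{\rf}u_r$, contradicting \ref{eq:at}. Here the other RMWs $u_{r'}$ on $x$ also count as intermediate writes of $G'$ but not of $G$. Only original writes between $w$ and $r$ matter for $\scmm$, and \ref{eq:at} forbids any event of $x$ in $\mo'$ between $w$ and $u_r$, so the argument goes through. Thus $\tau$ is an $\scmm$ interleaving, and setting $\mo$ to be the $\tau$-order on writes witnesses $G\models\scmm$.

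The main obstacle I expect is the ($\Rightarrow$) atomicity check together with the well-definedness of $\mo'$ placement. We must confirm that no other RMW $u_{r'}$ on $x$ falls between $w$ and $u_r$ in $\tau$. Such a $u_{r'}$ is an $\mo'$-element, so placing it there would break \ref{eq:at}. Functionality does not immediately exclude this: two reads may read from distinct writes $w_1\neq w_2$ with interleaved positions, as in $w_1,w_2,r_2,r_1$, where $r_1$ would then not read from the latest write. The latest-write property excludes exactly this pattern, but two reads $r,r'$ of the same $w$ are excluded only by functionality, and this case must be checked carefully. The remaining care point is ensuring that $\tau$ respects $\fr$-free reasoning; I would handle this by working directly with interleavings rather than with the axioms.
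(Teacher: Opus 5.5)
Your argument is correct, but it is organized differently from the paper's. The paper factors through $G'\models\scmm$. It first shows $G\models\scmm$ iff $G'\models\scmm$: it builds $\mo'$ from $\mo$ by placing each RMW immediately after its writer, which is where it uses functionality, and compares $(\po\cup\rf\cup\mo\cup\fr)^+$ in the two graphs. It then observes that, since $G'$ has no plain reads, $\fr'\subseteq\mo'$, so \ref{eq:swc} already gives \ref{eq:sc}.

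You instead work directly with linearizations. For ($\Rightarrow$) you read $\mo'$ off an $\scmm$ linearization $\tau$; under functionality this is exactly the paper's $\mo'$, because each read ends up $\tau$-adjacent to its writer among $x$-events. For ($\Leftarrow$) you linearize $\hb'\cup\mo'$ and use totality of $\mo'_x$ together with \ref{eq:at} to recover the latest-write property.

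What each route buys:
\begin{itemize}
\item The paper's decomposition isolates a reusable fact: $\sramm$ collapses to $\scmm$ on graphs without plain reads. This is the conceptual reason RMWs break MSO-definability.
\item Your route never has to reason about $\fr$ on RMW events, which needs care: with $\fr\triangleq\rf^{-1};\mo$ as literally defined, it is reflexive on RMWs. It also handles the converse direction for an arbitrary witness $\mo'$ explicitly. Finally, it makes visible that ($\Leftarrow$) uses only \ref{eq:swc} and \ref{eq:at} and no functionality.
\end{itemize}

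One correction to your write-up. You claim functionality is used only to obtain \ref{eq:wat}. That is wrong, and the observation is not needed at all, since \ref{eq:wat} is not an axiom of $\sramm$. Functionality is needed in ($\Rightarrow$). There, your \ref{eq:rc} and \ref{eq:at} checks as written only rule out an \emph{original write} strictly between $w$ and $u_r$ in $\tau$. But the intermediate $\mo'$-element could be some $u_{r'}$.

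The two-case argument in your last paragraph closes this completely. Suppose $w<_\tau r'<_\tau r$ with $r'$ on the same register and $w_2\LTo{\rf}r'$.
\begin{itemize}
\item If $w_2\neq w$, then either $w_2$ lies strictly between $w$ and $r$, or $w$ lies strictly between $w_2$ and $r'$. Both contradict the latest-write property.
\item If $w_2=w$, this is excluded by functionality.
\end{itemize}
So no $x$-event lies strictly between $w$ and $r$ in $\tau$, which yields \ref{eq:rc} and \ref{eq:at} at once. Move this argument into the ($\Rightarrow$) direction. Functionality is genuinely needed exactly here: two reads of the same write give an $\scmm$-consistent $G$ whose $G'$ violates \ref{eq:at}.
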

\begin{proof}
We first prove that $G\models \scmm$ iff $G'\models \scmm$ when every event $\rd_i$ in $G$ is replaced by $\rmw_i$ in $G'$. 
Each such $\rmw_i$ writes some arbitrary value.
Let $G=\tuple{E, \seqbef, \rf}$ and $G'=\tuple{E', \seqbef', \rf'}$.
Since the transformation only relabels each read $\rd_i$ as an RMW $\rmw_i$ on the same location and value, $E$ and $E'$ are in bijection and we identify each event with its image.
In the scope of this proof, we use unprimed relations to refer to relations defined over $G$, and primed relations to refer to relations defined over $G'$ (e.g., $\po\triangleq\seqbef^+$ and $\po'\triangleq (\seqbef')^+$, respectively).
By construction, $\po'= \po$ and $\rf' = \rf$.

Let $\mo$ be the modification order that witnesses $G\models \scmm$.
We define $\mo'$ to be such that $\wt_i \LTo{\mo'} \wt_j$ iff $\wt_i \LTo{\mo} \wt_j$, and for every $\wt_i \LTo{\rf'} \rmw_i$, $\rmw_i$ immediately succeeds $\wt_i$ in $\mo'$. 
This is only possible because $\rf$ is functional. 
Let $R\triangleq\po\cup\rf\cup\mo\cup\fr$ and $R'\triangleq\po'\cup\rf'\cup\mo'\cup\fr'$.
We show $R^+=(R')^+$; consequently, $G\models\scmm$ iff $G'\models\scmm$.

First, we have $R\subseteq R'$, since $\mo\subseteq\mo'$ and $\fr\subseteq\fr'$, and thus $R^+\subseteq (R')^+$.
For the opposite direction, we argue that $R'\subseteq R^+$.
Indeed, this follows from the fact that $\fr' \subseteq \mo$ and $\mo' \subseteq \mo \cup (\mo;\rf) \cup \fr \cup (\fr;\rf)$. 
These represent $\mo'$ connection among event pairs of type $(\wt,\wt)$, $(\wt, \rmw)$, $(\rmw, \wt)$ and $(\rmw, \rmw)$, respectively. 

It remains to prove that $G'\models \scmm$ iff $G'\models \sramm$.
Since $ \scmm \mmorder \sramm$, it suffices to show that $G'\models \sramm \implies G'\models \scmm$. 
Since there are no read events in $G'$, $\fr'\subseteq \mo'$. 
So the satisfiability of \ref{eq:swc} implies \ref{eq:sc}, which means $G'\models \scmm$.
\end{proof}

\input{figures/g_to_gp_pso}

We can now prove \cref{thm:non_mso_definability} for $\sramm$. 
If $\RFGraphsMM{\sramm}$ was MSO definable via a formula $\varphi$, we would be able to solve OV in nearly linear time by
handling an OV-instance $\OVInstance$ as follows.
First, we use the reduction of~\cite{Mathur2020} to obtain an $\rf$-graph $G$ and transform it to $G'$ as above.
Second we argue that if $G'\models \sramm$ then $\tw(G')\leq 9$.
By \cref{lem:sc_to_sra}, we have $G\models \scmm$, and thus \cref{lem:sc-tw} yields  $\tw(G)\leq 2+7=9$.
However, $G$ and $G'$ are isomorphic, thus $\tw(G')=\tw(G)\leq 9$.
Finally, we use \cref{thm:courcelle} on $\varphi$, $G'$ and $9$ to decide whether $G'$ is such that $\tw(G')\leq 9$ and $G'\models \varphi$.  

\Paragraph{The case of $\tsomm$ and $\psomm$.} 
Finally, we deal with the case of $\tsomm$ and $\psomm$.
Given an $\rf$-graph $G$ obtained by the reduction of~\cite{Mathur2020}, we construct a graph $G'$ over $4$ threads and $9$ registers such that $G\models \scmm$ iff $G'\models \scmm, \tsomm, \psomm$.
To do so, we pair every write event $\wt$ of $G$ with a gadget in $G'$ that makes $\wt$ visible to all threads before the next event in its own thread is executed.
In particular, consider any RMW-free $\rf$-graph $G=\tuple{\Events,\seqbef,\rf}$, and we construct $G'=\tuple{\Events',\seqbef',\rf'}$ as follows.

\begin{compactenum}
\item For each thread $t$ of $G$, we introduce an additional thread $t'$ in $G'$ and an additional register $v_{t}$.
\item We scan the events of $t$ in $\seqbef$-order, and for every write $\wt(t, x)$, we insert a gadget by
\begin{enumerate*}[label*=(\roman*)]
\item appending a read event $\rd(t', x)$ followed by a write event $\wt(t', v_{t})$ at the end of $t'$, and
\item inserting a read event $\rd(t,v_{t})$ exactly after $\wt(t,x)$.
\end{enumerate*}
We further set $\wt(t,x) \LTo{\rf} \rd(t', x)$ and $\wt(t',v_{t}) \LTo{\rf} \rd(t, v_{t})$.
The values of the new events are not important, as long as they are consistent with $\rf$.
\end{compactenum}
This transformation guarantees that if $\wt(t,x)\LTo{\seqbef^+} e$ in $G$, for some event $\event$, then
$\wt(t,x)\LTo{\rfe'} \rd(t', x)\LTo{\pppo'} \wt(t',v_{t}) \LTo{\rfe'}\rd(t, v_{t}) \LTo{\pppo'} \event$ in $G'$ (where $\rfe'$ and $\pppo'$ are defined over $\seqbef'$ and $\rf'$).
We have the following lemma.

\smallskip
\begin{restatable}{lemma}{lemsctotso}\label{lem:sc_to_tso}
Consider any RMW-free $\rf$-graph $G$. 
Let $G'$ be the graph constructed from $G$ by the gadgetizing process. 
Then $G \in \RFGraphsMM{\scmm}$ iff  $G' \in \RFGraphsMM{\scmm}$ iff $G' \in \RFGraphsMM{\tsomm}$ iff $G' \in \RFGraphsMM{\psomm}$.
\end{restatable}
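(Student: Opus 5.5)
We prove the chain of equivalences in the order $G\models\scmm \Rightarrow G'\models\scmm \Rightarrow G'\models\tsomm \Rightarrow G'\models\psomm \Rightarrow G\models\scmm$. The middle two implications are free, since $\scmm\mmorder\tsomm\mmorder\psomm$. Indeed, every $\scmm$-consistent execution graph satisfies \ref{eq:scloc}, \ref{eq:sce} and \ref{eq:psce}, because the relations in those axioms are contained in $\po\cup\rf\cup\mo\cup\fr$; and \ref{eq:sce} implies \ref{eq:psce} because $\pppo\subseteq\ppo$. So the work lies in the first and last implications.

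\emph{From $G$ to $G'$.} Take an $\mo$ witnessing $G\models\scmm$, and a total extension $\tau$ of $\po\cup\rf\cup\mo\cup\fr$; in $\tau$ every read reads the latest write on its register. We build a linearization $\tau'$ of $G'$ by inserting, immediately after each $\wt(t,x)$ in $\tau$, the three gadget events $\rd(t',x)$, $\wt(t',v_t)$, $\rd(t,v_t)$ in this order. Then:
\begin{compactitem}
\item $\tau'$ respects $\seqbef'$. In thread $t$, $\rd(t,v_t)$ comes right after $\wt(t,x)$. In thread $t'$, the gadgets appear in the same order as the writes of $t$ in $\tau$.
\item Each read still reads the most recent write on its register. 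The register $v_t$ is private to the pair $t,t'$, and no write on $x$ intervenes between $\wt(t,x)$ and $\rd(t',x)$.
\end{compactitem}
Setting $\mo'$ to be the order of writes in $\tau'$ makes $\tau'$ a total extension of $\po'\cup\rf'\cup\mo'\cup\fr'$, so $G'\models\scmm$.

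\emph{From $G'$ to $G$.} Assume $G'\models\psomm$, witnessed by some $\mo'$, and restrict it to the original registers to get $\mo$. Suppose, for contradiction, that $R=\po\cup\rf\cup\mo\cup\fr$ has a cycle in $G$. We map each edge of this cycle into $(\pppo'\cup\rfe'\cup\moe'\cup\fre')^+$, or into a same-location chain.
\begin{compactitem}
\item External $\rf$, $\mo$ and $\fr$ edges of $G$ stay external in $G'$, since the gadget threads never write the original registers. (The extra reads $\rd(t',x)$ do add $\fr'$ edges, but $\fr'$ restricted to original events equals $\fr$.)
\item A $\po$ edge $e_1\LTo{\po}e_2$ that is not a $\WriteDom\times\ReadDom$ or $\WriteDom_x\times\WriteDom_y$ pair lies in $\pppo'$ directly.
\item A $\po$ edge that starts at a write $\wt(t,x)$ is replaced by the gadget path $\wt(t,x)\LTo{\rfe'}\rd(t',x)\LTo{\pppo'}\wt(t',v_t)\LTo{\rfe'}\rd(t,v_t)\LTo{\pppo'}e_2$. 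The first $\pppo'$ step is read-to-write; the second starts from a read.
\item Internal $\rf$, $\mo$ and $\fr$ edges (same thread) relate same-location events. If the whole cycle is same-location, \ref{eq:scloc} forbids it. Otherwise, an internal $\mo$ or $\fr$ edge that agrees with $\po$ can be rewritten as a $\po$ edge, handled as above. One that contradicts $\po$ closes a same-location cycle with that $\po$ edge, violating \ref{eq:scloc}.
\end{compactitem}
This yields a cycle in $\pppo'\cup\rfe'\cup\moe'\cup\fre'$, contradicting \ref{eq:psce}. Hence $G\models\scmm$.

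The main obstacle is this last step: making the case analysis of internal edges rigorous, so that every edge of an $\scmm$ cycle either translates into the external relation or yields an \ref{eq:scloc} violation. I would first show that \ref{eq:scloc} gives $\mo_x\cap(\po^{-1})=\emptyset$ and $\fr_x\cap(\po^{-1})=\emptyset$, so internal edges coincide with $\po$ edges and can be folded into the $\po$ case.
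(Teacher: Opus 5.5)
Your proposal is correct and takes essentially the same approach as the paper. For the forward direction you build an $\scmm$ linearization of $G'$ by inserting each write's gadget right after it. For the converse you take the restriction of $\mo'$, map $\po$-edges out of writes to the gadget path in $(\pppo'\cup\rfe'\cup\moe'\cup\fre')^+$, and fold same-thread $\rf/\mo/\fr$ edges into $\po$ via \ref{eq:scloc}; your explicit case analysis omits same-thread $\rf$ edges, which need the same folding. Your insertion order $\rd(t',x),\wt(t',v_t),\rd(t,v_t)$ is the right one, whereas the paper's parenthetical, read literally, places $\rd(t,v_t)$ before the write $\wt(t',v_t)$ it reads from.
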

\begin{proof}
Let $G=\tuple{E,\seqbef,\rf}$ and $G'=\tuple{E',\seqbef',\rf'}$.
In the scope of this proof, we use unprimed relations to refer to relations defined over $G$, and primed relations to refer to relations defined over $G'$ (e.g., $\po\triangleq\seqbef^+$ and $\po'\triangleq (\seqbef')^+$, respectively).

First, assume that $G\models \scmm$, and we argue that $G'\models \scmm$ (which also implies that $G'\models \tsomm, \psomm$).
Since $G\models \scmm$, there exists a modification order $\mo$ and a corresponding $\fr$ such that $R=(\seqbef\cup \rf \cup \mo \cup \fr)$ is acyclic.
Let $\sigma$ be any total extension of $R^+$, and construct a sequence of events $\sigma'$ by inserting after every write event $\wt$ the gadget sequence of $\wt$ in $G'$ (i.e., the $\seqbef$-successor of $\wt$ in $G'$, followed by the corresponding read and write event in the primed thread of $\wt$).
It is easy to verify that $\sigma'$ witnesses the $\scmm$-consistency of $G'$, via an $\mo'$ that orders all write events to the same register according to $R^+$.

For the opposite direction, assume that $G'\models \psomm$ (which is also implied by $G'\models \tsomm$ or $G'\models \scmm$), and we argue that $G\models \scmm$.
Since $G'\models \psomm$, there exists a modification order $\mo'$ and a corresponding $\fr'$ such that 
$R'=(\pppo'\cup \rfe' \cup \moe' \cup \fre')$ is acyclic.
Consider any two events $\event_1, \event_2$ of $G$ such that $\event_1\LTo{\seqbef}\event_2$.
If $\event_1$ is a read event, we have $\event_1\LTo{\pppo'} \event_2$, while if $\event_1$ is a write, we have $\event_1\LTo{\rfe';\pppo';\rfe';\pppo'} \event_2$.
In both cases, we have that $\event_1\LTo{(R')^+}\event_2$.
This implies that the relation $\identity{E};( \seqbef \cup \rfe' \cup \moe' \cup \fre'); \identity{E}$ is acyclic.
Since $G'$ also satisfies \ref{eq:scloc}, we have that $\identity{E};\rf';\identity{E}\subseteq \identity{E};(\rfe'\cup \seqbef);\identity{E}$, and similarly for $\moe'$ and $\fre'$.
This implies that  $\identity{E};( \seqbef \cup \rf' \cup \mo' \cup \fr'); \identity{E}$ is acyclic.
The $\scmm$-consistency of $G$ then follows easily, witnessed by $\mo=\identity{E};\mo'; \identity{E}$.
\end{proof}

We can now prove \cref{thm:non_mso_definability} for $\tsomm$ and $\psomm$. 
For any $\mm\in \{ \tsomm, \psomm \}$,
if $\RFGraphsMM{\mm}$ was MSO definable via a formula $\varphi$, we would be able to solve OV in nearly linear time by
handling an OV-instance $\OVInstance$ as follows.
First, we use the reduction of~\cite{Mathur2020} to obtain an $\rf$-graph $G$, and the above (linear-time) transformation of $G$ to an $\rf$- graph $G'$.
Second, we argue that if $G'\models \mm$ for $\mm \in \{\tsomm, \psomm\}$ then $\tw(G')\leq 13$.
This is because, by \cref{lem:sc_to_tso}, we have $G'\models \scmm$, and thus \cref{lem:sc-tw} yields  $\tw(G')\leq 4+9=13$.
Finally, we use \cref{thm:courcelle} on $\varphi$, $G'$ and $13$ to decide whether $G'$ is such that $\tw(G')\leq 13$ and $G'\models \varphi$.  

%% file: figures/g_to_gp_pso.tex
\begin{figure}
\begin{subfigure}[t]{.17\textwidth}
\centering
\begin{tikzpicture}[ line width=1pt, node distance = 1cm] 

            \node (wy1) {\textcolor{colorVARb}{\footnotesize$\wt_1(x)$}};
            \node[below of = wy1] (rx2){\footnotesize\textcolor{colorVARa}{$\rd_2(x)$}};

            \node[right = 0.4cm of wy1] (wx3) {\footnotesize\textcolor{colorVARa}{$\wt_3(x)$}};
            \node[below of = wx3] (wx4) {\footnotesize\textcolor{colorVARa}{$\wt_4(x)$}};
            \node[below of = wx4] (ry5) {\footnotesize\textcolor{colorVARb}{$\rd_5(y)$}}; 

            \node[below of = ry5, white] (aaa) {}; 

            \node[draw=none, left = 0.1cm of wx3] (h1) {};
            \node[draw=none, left = 0.1cm of ry5] (h2) {};

            \node[above = 0.1cm of wy1] (t1){$t_1$};
            \node[above = 0.1cm of wx3] (t2){$t_2$};

            \draw [seqbef] (wy1) to (rx2);
            \draw [seqbef] (wx3) to (wx4);
            \draw [seqbef] (wx4) to (ry5);
            \draw [rf] (wx3) to (rx2);
            \draw [rf] (wy1) -- (h1.center)  -- (h2.center) -- (ry5); 
            
\end{tikzpicture}
\caption{A graph $G$.}
\label{subfig:g_to_gp_pso_g}
\end{subfigure}%
\hspace{5em}
\begin{subfigure}[t]{.7\textwidth}
\centering
\begin{tikzpicture}[ line width=1pt, node distance = 1cm] 

            \node[fill=gray!20, left = 0.4cm of wy1] (ry1) {\footnotesize\textcolor{colorVARb}{$\rd_{6}(y)$}};
            \node[fill=gray!20, below of = ry1] (wt11){\footnotesize{$\wt_{7}(\var_{t_1})$}};
            
            \node (wy1) {\textcolor{colorVARb}{\footnotesize$\wt_{1}(y)$}};
            \node[fill=gray!20, below of = wy1] (rt11){\footnotesize{$\rd_{8}(\var_{t_1})$}};
            \node[below = 0.4cm of rt11] (rx2){\footnotesize\textcolor{colorVARa}{$\rd_{2}(x)$}};

            \node[right = 0.4cm of wy1] (wx3) {\footnotesize\textcolor{colorVARa}{$\wt_{3}(x)$}};
            \node[fill=gray!20,below of = wx3] (rt23) {\footnotesize{$\rd_{9}(\var_{t_2})$}};
            \node[below of = rt23] (wx4) {\footnotesize\textcolor{colorVARa}{$\wt_{4}(x)$}};
            \node[fill=gray!20,below of = wx4] (rt24) {\footnotesize{$\rd_{10}(\var_{t_2})$}};
            \node[below of = rt24] (ry5) {\footnotesize\textcolor{colorVARb}{$\rd_{5}(y)$}}; 

            \node[fill=gray!20, right = 0.4cm of wx3] (rx3) {\footnotesize\textcolor{colorVARa}{$\rd_{11}(x)$}};
            \node[fill=gray!20,below of = rx3] (wt23) {\footnotesize{$\wt_{12}(\var_{t_2})$}};
            \node[fill=gray!20, below of = wt23] (rx4) {\footnotesize\textcolor{colorVARa}{$\rd_{13}(x)$}};
            \node[fill=gray!20,below of = rx4] (wt24) {\footnotesize{$\wt_{14}(\var_{t_2})$}};

            \node[draw=none, left = 0.1cm of wx3] (h1) {};
            \node[draw=none, left = 0.1cm of ry5] (h2) {};

            \node[above = 0.1cm of ry1] (t1p){$t'_1$};
            \node[above = 0.1cm of wy1] (t1){$t_1$};
            \node[above = 0.1cm of wx3] (t2){$t_2$};
            \node[above = 0.1cm of rx3] (t2){$t'_2$};

            \draw [seqbef] (ry1) to (wt11);

            \draw [seqbef] (wy1) to (rt11);
            \draw [seqbef] (rt11) to (rx2);
            
            \draw [seqbef] (wx3) to (rt23);
            \draw [seqbef] (rt23) to (wx4);
            \draw [seqbef] (wx4) to (rt24);
            \draw [seqbef] (rt24) to (ry5);

            \draw [seqbef] (rx3) to (wt23);
            \draw [seqbef] (wt23) to (rx4);
            \draw [seqbef] (rx4) to (wt24);

            \draw [rf] (wy1) to (ry1);
            \draw [rf] (wt11) to (rt11);
            \draw [rf] (wx3) to (rx3);
            \draw [rf] (wt23) to (rt23);
            \draw [rf] (wx4) to (rx4);
            \draw [rf] (wt24) to (rt24);
            
            \draw [rf] (wx3) to (rx2);
            \draw [rf] (wy1) -- (h1.center)  -- (h2.center) -- (ry5); 
\end{tikzpicture}
\caption{
The gadgetized graph $G'$. 
Gadget events are highlighted in grey. 
}
\label{subfig:g_to_gp_pso_gp}
\end{subfigure}%
\caption{
Illustration of transformation used in \cref{lem:sc_to_tso}. 
}
\label{fig:g_to_gp_pso}
\end{figure}

%% file: app_robustness.tex
\section{Proofs of \cref{SEC:RFROB}}\label{SEC:APP_RFROB}

\subsection{Proofs of \cref{subsec:rf-robustness}}

\lemtwcontiguousmodels*
\begin{proof}
Consider any program $\ConcProg$ and some $\rf$-graph $G\in \RFGraphsOf{\ConcProg}{\mm}$, let $\event$ be a $(\seqbef\cup \rf)^+$ maximal event of $G$.
Let $G'$ be the $\rf$-graph obtained by removing $\event$ from $G$.
Note that $G'\in \RFGraphsOf{\ConcProg}{\mm}$, and $\tw(G')\geq \tw(G)-1$.
If $\tw(G')=\tw(G)-1$, we are done.
Otherwise, we repeat the process on $G'$.
Since every step reduces the size of the graph by $1$ event, and the treewidth of the graph is bounded by its size, we are guaranteed to reach a graph $G''$ such that $\tw(G'')=\tw(G)-1$.
The desired result follows.
\end{proof}

\smallskip
\lemcourcelletheoremexacttw*
\begin{proof}
We make use of MSO transductions, as used in \cite{Mikolaj2016,Mikolaj2017}, which are MSO sentences transforming input structures of one type to output structures of another.
In our case, the input structures are $\rf$-graphs, while the output structures are (an encoding of) tree decompositions.
As stated in~\cite{Mikolaj2016}, if $\Phi$ is an MSO transduction and $\phi$ is an MSO sentence over its output vocabulary, then the set of $u \in U$ for which at least one element of $\Phi(u)$ satisfies $\phi$ is MSO-definable.

As shown in~\cite{Mikolaj2016,Mikolaj2017} that there exists an MSO transduction $\Phi_k$ from graphs to tree decompositions such that
\begin{enumerate*}[label=(\roman*)]
\item if $\tw(G)\leq k$ then there exists least one output tree decomposition, and
\item every output tree decomposition has width $\tw(G)$.
\end{enumerate*}
Moreover, checking whether a tree decomposition has width exactly $k$ is easily expressible in MSO.
This implies that the set of graphs $G$ such that a decomposition in $\Phi_k(G)$ has width exactly $k$ is MSO-definable, via some MSO formula $\varphi_{\tw=k}$.
We can thus decide whether there exists a graph $G$ with $\tw(G)= k$ and $G\models\varphi$ by applying \cref{cor:mso_satisfiability} on the MSO formula $\varphi\land \varphi_{\tw=k}$.
\end{proof}

\subsection{Proofs of \cref{SUBSEC:AUG_RF}}\label{SUBSEC:APP_AUG_RF}

\lemobsrobust*
\begin{proof}
Since $G$ reaches $\tm$, there exists some thread $t$ such that $\tm(t)\in G.\Events$.
Since $\ConcProgAug$ is observationally $\rf$-robust against $\mm$, there exists some $G' \in \RFGraphsOf{\ConcProgAug}{\scmm}$ such that $G\simeq G'$.
It follows that $\tm(t)\in G'.\Events$, thus $G'$ also reaches $\tm$.
\end{proof}

\smallskip
\thmobsrobustnessvsreachability*

Below, we detail the procedure used to prove the theorem.

\SubParagraph{1.~Reduced $\rf$-semantics.}
The treewidth divergence of observationally $\rf$-robust programs is purely due to $\rf$-edges of unused reads.
Our first insight is that the invariance under unused-reads property allows us to capture the behavior of $\ConcProgAug$ by syntactically omitting $\rf$-edges to unused reads.
In particular, we define a \emph{reduced} $\rf$-graph $H$ as an augmented $\rf$-graph where unused reads do not have $\rf$-edges, i.e., the reads-from relation of $H$ satisfies $\rf \subseteq (\bigcup_{x\in \LocationDom}(\locx{\WriteDom} \cup \locx{\RMWDom}) \times ((\locx{\ReadDom}\setminus \UnusedReadDomX(H)) \cup \locx{\RMWDom}))$. 
For a reduced $\rf$-graph $H$ and an augmented $\rf$-graph $G$, we write $G\Reduces H$ to denote that $H$ occurs from $G$ after removing the $\rf$-edges of unused reads, i.e.,
\begin{enumerate*}[label=(\roman*)]
\item $H.\Events = G.\Events$,
\item $H.\seqbef = G.\seqbef$, and
\item $H.\rf \subseteq G.\rf$.
\end{enumerate*}
The \emph{reduced $\rf$-semantics} of $\ConcProgAug$ is the set
\[
\RedGraphsOf{\ConcProgAug}{\mm}=\{ H \colon H \text{ is a reduced $\rf$-graph and there exists $G\in \RFGraphsOf{\ConcProgAug}{\mm}$ s.t. $G\Reduces H$} \}\ .
\]
Reduced $\rf$-graphs serve as an abstraction of program executions, that is nevertheless sound and complete wrt $\tm$ reachability, i.e., if there is some $H\in \RedGraphsOf{\ConcProg}{\mm}$ reaching $\tm$, then there is some $G\in \RFGraphsOf{\ConcProgAug}{\mm}$ also reaching $\tm$ (in particular, that $G$ for which $G\Reduces H$).
Moreover, the set $\RedGraphsOf{\ConcProgAug}{\scmm}$ has treewidth bounded by $|\ThreadDom|+|\LocationDom|$, much like $\RFGraphsOf{\ConcProg}{\scmm}$.

\smallskip
\begin{restatable}{lemma}{lemreducedsemanticsMSO}\label{lem:reduced_semantics_MSO}
For any augmented concurrent program $\ConcProgAug$ and memory model $\mm\in \{\ramm, \rmwfree{\sramm}, \rlxmm, \wramm \}$, the set $\RedGraphsOf{\ConcProgAug}{\mm}$ is MSO-definable.
\end{restatable}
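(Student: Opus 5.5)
The plan is to express "$H$ is a reduced $\rf$-graph and some $G\in\RFGraphsOf{\ConcProgAug}{\mm}$ has $G\Reduces H$" by guessing the missing part of $G$ with set quantifiers. The only thing $H$ lacks is, for each unused read $r$, the value it reads and its $\rf$-source. Since $\rf$ is a binary relation, it cannot be quantified directly in MSO. I will therefore avoid guessing these $\rf$-edges and argue that they can be chosen canonically, or eliminated altogether.

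First, unusedness is MSO-definable on $H$. A read $r$ on $x$ is unused iff for every $\ck(t,x)$ with $\varphi_{\seqbef^+}(r,\ck)$ there is a read/RMW $e$ on $x$ with $\varphi_{\seqbef^+}(r,e)\land\varphi_{\seqbef^+}(e,\ck)$. This gives a formula $\varphi_{\UnusedReadDom}(r)$. The well-formedness formula of \cref{lem:rf_graphs_mso_definable} is then adapted for reduced $\rf$-graphs: the existence-of-writer clause is required only for reads not satisfying $\varphi_{\UnusedReadDom}$, and no $\rf$-edge may enter an unused read.

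Second, by invariance under unused reads, the program constraint is insensitive to their values. So $\msoprogsb$ of \cref{lem:free_rf_graphs_definable} can be relaxed: an event in $\UnusedReadDom$ may be matched to any transition whose label is $\rd(t,x,v')$ for some $v'$. Because unusedness depends only on $\seqbef$ and labels other than values, changing values never changes which reads are unused. This makes the $\ConcProgAug$-part fine even though $G$ and $H$ may disagree on those values.

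Third, and this is the main obstacle, the memory-model part needs a witness $G$ that agrees with $H$ off the unused reads and is $\mm$-consistent. I would first try to show, for each $\mm\in\{\ramm,\rmwfree{\sramm},\rlxmm,\wramm\}$, that $\mm$-consistency of some such $G$ is equivalent to consistency of $H$ in which unused reads are simply deleted. The deleted reads are treated as plain non-reads with $\seqbef$-edges spliced around them, so that $\po$ is preserved.

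\emph{One direction.} An unused read is $\seqbef$-last-in-nothing and is not an $\rf$-source. Removing it from a consistent $G$ only removes $\hb$-paths through it and conflicting triplets ending at it. All axioms in \cref{tab:axioms} are monotone under removing such reads, since they are irreflexivity or acyclicity conditions over relations that only shrink.

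\emph{Converse.} Given a consistent $H$ without its unused reads, with witness $\mo$, each unused read $r$ on $x$ must be given a writer. My first choice is the $\mo_x$-maximal write among those $w$ with $w\LTo{\hb}r$, and the initial $\wt(t,x,\bot)$ if nothing else exists. Thanks to item (1) of the augmented programs, some write on $x$ always $\hb$-precedes $r$. This is the analogue of reading the latest visible write. I would check that this choice creates no $\hb$-cycle, since $r$ has no outgoing $\rf$. It creates no \ref{eq:rc} violation, by maximality of the chosen write. Finally, incoming $\rf$ to a read that nothing reads from leaves \ref{eq:wc} and \ref{eq:at} untouched. For $\rmwfree{\sramm}$, $\mo$ must also be compatible with \ref{eq:swc}, which holds because no $\hb$ is added out of $r$. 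For $\wramm$, \ref{eq:wrc} is avoided by the same maximality, taken with respect to $\hb$ instead of $\mo$.

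Once this is established, the defining formula is the conjunction of three parts. The first is the reduced well-formedness formula. The second is the relaxed $\msoprogsb$. The third is the MSO formula of \cref{thm:mso_definable_mms}, relativised to events outside $\varphi_{\UnusedReadDom}$, with $\seqbef$ replaced by the MSO-definable relation "$\seqbef$-path whose interior lies in $\UnusedReadDom$".
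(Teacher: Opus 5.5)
Your decomposition is the paper's. Its proof conjoins $\msoprogsb$, a reduced well-formedness formula $\msorfaug$ built from a definable predicate $\varphi_{\text{used}}$, and the unchanged consistency formula $\Phi_{\mm}$, and simply asserts that consistency checking ``works the same''. Two of your deviations are redundant but harmless:
\begin{itemize}
\item Relaxing $\msoprogsb$ changes nothing, because invariance under unused reads already closes $\Lang{\ConcProgAug}$ under changing unused-read values.
\item Relativising $\Phi_{\mm}$ with spliced $\seqbef$ is equivalent to the paper's direct evaluation on $H$. An $\rf$-less read contributes no $\rf$/$\fr$ edges, and $\hb$ through it is just $\po$.
\end{itemize}
Your deletion direction is fine. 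Your completion argument supplies exactly what the paper leaves implicit, and it works for $\ramm$, $\rmwfree{\sramm}$ and $\wramm$. However, the operative reason is that the chosen $w$ (a write or RMW on $x$) already satisfies $w\LTo{\hb}r$, so adding $w\LTo{\rf}r$ leaves $\hb$ literally unchanged. The fact that $r$ has no outgoing $\rf$ would not, by itself, rule out new $\hb$ edges from $w$ to $\po$-successors of $r$.

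There is also a mismatch with the definition: $G\Reduces H$ requires $G.\Events=H.\Events$. So $G$ and $H$ may \emph{not} differ on unused-read values, and your canonical writer generally carries a value different from $r$'s label in $H$. The paper's formula has the same feature: it accepts an unused $\rd(t,x,v)$ even when $H$ contains no write of $v$. So this is an imprecision in the definition of $\Reduces$, not a defect peculiar to your argument.

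The genuine gap is $\rlxmm$. You never treat it separately, and the generic choice fails there, because full $\hb$ carries no per-location information. Consider this execution, after the initial $\bot$-writes:
\begin{itemize}
\item Thread $t_2$ executes $w=\wt(t_2,x)$ and then $\wt(t_2,z)$.
\item Thread $t_1$ executes a used $\rd(t_1,z)$ reading from $\wt(t_2,z)$, then the unused $r=\rd(t_1,x)$, then $a=\wt(t_1,x)$.
\end{itemize}
With $r$ deleted, the graph is $\rlxmm$-consistent under a modification order that places $a$ before $w$, since no per-location edge relates them. The path $w\LTo{\po}\wt(t_2,z)\LTo{\rf}\rd(t_1,z)\LTo{\po}r$ makes $w$ the $\mo_x$-maximal $\hb$-predecessor of $r$ on $x$. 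Reading from it creates the cycle $w\LTo{\rf}r\LTo{\po}a\LTo{\mo}w$ in $\po_x\cup\rf_x\cup\mo_x\cup\fr_x$.

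The repair is to choose $w$ to be $\mo_x$-maximal among the writes/RMWs on $x$ that are $(\po_x\cup\rf_x)^+$-before $r$. This leaves $(\po_x\cup\rf_x)^+$ unchanged. Your $\ramm$ argument then goes through register by register, using the fact (which the paper also uses for its $\rlxmm$ formula) that on a single register $\rlxmm$ is $\ramm$ with $\hb$ replaced by $(\po_x\cup\rf_x)^+$.
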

\begin{proof}

This proof follows the same lines of \cref{lem:free_rf_graphs_definable}.  We may check whether  $\tuple{E, \seqbef} \in \Lang{\ConcProgAug}$ using $\msoprogsb$. Now we must verify whether some $\tuple{E, \seqbef, \rf}$ is a reduced $\rf$-graph (similarly to how $\msorf$ expresses that $\tuple{E, \seqbef, \rf}$ is an $\rf$-graph). 
We do so through $\msorfaug$, defined below.
First, $\varphi_\text{reach}^x(\event_1, \event_2)$ models $\seqbef$, but excluding incoming edges to $\rd$ or $\rmw$ events on $x$.
This is useful as we later want to exclude paths that contain such events.
\begin{align}
\varphi_\text{reach}^x(\event_1, \event_2) \triangleq \seqbef(\event_1,\event_2) \land \lnot \left(\lloc(\event_2)=x \land \op(\event_2)\in \{\rd,\rmw\}\right)
\end{align}
Next, $\varphi_\text{used}$ determines whether a read event $\event_1$ on $x$ is used, by verifying whether it can reach some event $\ck(x)$ through an $\seqbef$-path that does not contain a $\rd$ or $\rmw$ event on the same location.
Here, $\varphi_{\text{reach}^+}^x (\event_1, \event_2) = \varphi_{R^+}(\event_1, \event_2)$ with $R(\event_1, \event_2) = \varphi_\text{reach}^x(\event_1, \event_2)$, using \cref{eq:mso_tc}.
\begin{align}
\varphi_\text{used}(\event_1) \triangleq \ &\op(\event_1)=\rd \land \\ &\bigvee_{x\in \LocationDom} (\lloc(\event_1) = x \land \exists \event_2.
\left(\varphi_{\text{reach}^+}^x(\event_1,\event_2) \land \lloc(\event_2)=x \land \op(\event_2)=\ck\right))\nonumber
\end{align}
Finally, $\msorfaug$ is similar to $\msorf$, with additional checks that assure that every $\rf$ points to an used read event, and unused read events have no incoming $\rf$-edges. 
\begin{align*}
\msorfaug \triangleq \ (\forall \event_1, \event_2. (\rf(\event_1,\event_2)\implies & \op(\event_1)\in \{\wt,\rmw\} \land  \op(\event_2)\in \{\rd,\rmw\})\numberthis\\
\land \ & (\op(\event_2)= \rd \implies \varphi_\text{used}(\event_2))\\
\land \ &\lloc(\event_1)=\lloc(\event_2) \land \val_{\wt}(\event_1)=\val_{\rd}(\event_2)\\
\land \ & \lnot \exists \event_3. (\event_3\neq \event_1 \land \rf(\event_3,\event_2)))\\
\land \ & \forall \event_1. (\op(\event_1)=\rmw \implies \exists \event_2. \rf(\event_2,\event_1))\\
\land \ & \forall \event_1. ((\op(\event_1)=\rd \land \varphi_\text{used}(\event_1))\implies \exists \event_2. \rf(\event_2,\event_1))
\end{align*}

$\mm$-consistency checking works the same as in non-augmented $\rf$-graphs, therefore we may use the formulas $\Phi_{\mm}$ presented in \cref{SUBSEC:MSO_DEFINABLE_MM}. 
The set  $\RedGraphsOf{\ConcProg}{\mm}$ can thus be expressed by the conjunction of the provided formulas.
\end{proof}

\SubParagraph{2.~MSO-definability of reduced $\rf$-semantics.}
At this point, we may attempt to define the reduced $\rf$-semantics of an augmented program $\ConcProgAug$ in MSO, following our process in \cref{SEC:MSO} for the $\rf$-semantics.
This entails the new challenge of ensuring that the $\rf$ component of a reduced $\rf$-graph $H$ omits \emph{only the $\rf$-edges of unused reads}.
Our second insight is that the unused reads of $H$ can be captured in MSO which, in turn, allows us to express the well-formedness of $H.\rf$ in MSO.
The conditions that
\begin{enumerate*}[label=(\roman*)]
\item $H$ represents an execution of $\ConcProgAug$, and
\item $H\models \mm$, for some MSO-definable model $\mm$
\end{enumerate*}
remain essentially the same in MSO as in the case of $\rf$-graphs.

\SubParagraph{3.~Treewidth contiguity of reduced $\rf$-semantics.}
Finally, \cref{thm:robustness_vs_reachability} depends on the $1$-tw-contiguity of the corresponding memory models.
Extending this notion to reduced $\rf$-graphs, we call a memory model \emph{$\ell$-tw-contiguous wrt reduced graphs} if for every augmented program $\ConcProgAug$, for every reduced $\rf$-graph $H\in \RedGraphsOf{\ConcProgAug}{\mm}$ with $\ell\leq \tw(H)$, there exists some $H'\in \RedGraphsOf{\ConcProgAug}{\mm}$ with $\tw(H)-\ell\leq \tw(H')<\tw(H)$.
Our third insight is that $(\seqbef\cup\rf)$-contiguity is also a sufficient condition for $1$-tw-contiguity wrt reduced graphs,
although the proof is less straightforward.

\smallskip
\begin{restatable}{lemma}{lemreducedtwcontiguousmodels}\label{lem:reduced_tw_contiguous_models}
Every $(\seqbef\cup \rf)$-contiguous memory model (and thus all memory models in \cref{tab:main_models} except $\rlxmm$) is also $1$-tw-contiguous wrt reduced $\rf$-graphs.
\end{restatable}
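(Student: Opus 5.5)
The plan is to mimic the proof of \cref{lem:tw_contiguous_models}: repeatedly delete a maximal event and check that we remain inside the reduced semantics. Fix an augmented program $\ConcProgAug$ and $H\in \RedGraphsOf{\ConcProgAug}{\mm}$ with $\tw(H)\geq 1$, and pick a witness $G\in\RFGraphsOf{\ConcProgAug}{\mm}$ with $G\Reduces H$. By part (i) of $(\seqbef\cup\rf)$-contiguity, $G.\seqbef\cup G.\rf$ is acyclic, so $G$ has an event $\event$ that is maximal wrt $(G.\seqbef\cup G.\rf)^+$. Part (ii) gives $G_1\in\RFGraphsOf{\ConcProgAug}{\mm}$, namely $G$ with $\event$ and its incident edges removed.

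Next, let $H_1$ be $G_1$ with the $\rf$-edges into unused reads removed, so that $G_1\Reduces H_1$ and $H_1\in\RedGraphsOf{\ConcProgAug}{\mm}$. The crux is relating $H_1$ to $H$ minus $\event$. Because $\event$ is $\seqbef$-maximal in its thread, deleting it removes no $\seqbef$-path from any read to a check, except possibly a path ending at $\event$ itself when $\event=\ck(t,x)$. Hence every unused read of $G$ stays unused in $G_1$. Conversely, a read that was used in $G$ may become unused in $G_1$, but only when $\event$ is the check that made it used. So $H_1$ is a subgraph of $H\setminus\event$, where $H\setminus\event$ means $H$ with $\event$ deleted. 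More precisely, $H_1$ is obtained from $H\setminus\event$ by additionally deleting at most one $\rf$-edge: the one into the most recent read on $x$ of thread $t$ before $\event$.

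The step I expect to be the main obstacle is this possible extra deleted edge, since deleting an edge can also lower treewidth. Treewidth is minor-monotone, so deleting one vertex and at most one edge lowers it by at most $2$, not $1$. To recover the bound of $1$, I would handle the check case by choosing the maximal event more carefully. If $\event=\ck(t,x)$ and it is the only check using some read $r$, I would instead look for a different maximal event. If none exists, I would delete the events in a different order, removing the edge $w\LTo{\rf} r$ only together with a vertex. The cleanest way to do this is to note that $H\setminus\event$ minus the edge into $r$ equals the graph obtained by deleting $\event$ and then contracting nothing. Alternatively, I would observe that in $H\setminus\event$ the vertex $\event$ was adjacent to $r$'s thread predecessor only via $\seqbef$, and argue via a tree decomposition of $H_1$: adding $r$'s $\rf$-writer $w$ to every bag containing $r$ increases the width by at most $1$ relative to $H\setminus\event$. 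This yields the chain of inequalities $\tw(H_1)\geq\tw(H)-1$ in this case as well, possibly with some slack.

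Once we have $\tw(H)-1\leq\tw(H_1)\leq\tw(H)$, the argument finishes exactly as in \cref{lem:tw_contiguous_models}. If $\tw(H_1)<\tw(H)$, we are done. Otherwise we iterate from $(G_1,H_1)$. Each step removes one event, and treewidth is bounded by graph size, so some step must achieve $\tw(H_i)=\tw(H)-1$. This gives $1$-tw-contiguity with respect to reduced $\rf$-graphs.
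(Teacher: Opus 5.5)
Your skeleton matches the paper's proof: delete a maximal event, observe that $H_1$ is $H\setminus\event$ minus at most the single $\rf$-edge $w\LTo{\rf}r$ into the read that only $\event=\ck(t,x)$ made used, and iterate. Choosing $\event$ maximal in the witness $G$, so that part (ii) of $(\seqbef\cup\rf)$-contiguity applies directly, is the right choice.

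The gap is exactly the step you flag as the main obstacle: in the check case you never prove $\tw(H_1)\geq\tw(H)-1$.
\begin{itemize}
\item Looking for a different maximal event does not help. Every maximal event may be such a check, e.g.\ when every thread ends with a read followed by a check on the same register.
\item ``Deleting $\event$ and then contracting nothing'' is not an argument.
\item Your tree-decomposition surgery only compares $H_1$ with $H\setminus\event$. It says nothing about $\tw(H\setminus\event)$ versus $\tw(H)$, which is precisely the second unit of loss you were worried about.
\item As stated, that surgery is also flawed. Adding $w$ only to the bags containing $r$ can break the connectivity condition for $w$. You need to add $w$ to every bag, or also to the bags on the tree path between the subtrees of $w$ and $r$.
\end{itemize}

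The missing idea is that the check is a pendant vertex of $H$. Checks are never endpoints of $\rf$-edges, and $\event$ has no $\seqbef$-successor. So its only incident edge is the $\seqbef$-edge from its own predecessor $p$, not from ``$r$'s thread predecessor''. In the case where an edge is removed, $H\setminus\event$ still contains $w\LTo{\rf}r$, so it has treewidth at least $1$. Hanging a new bag $\{\event,p\}$ off any bag containing $p$ then gives $\tw(H)=\tw(H\setminus\event)$. Deleting the single edge $w\LTo{\rf}r$ lowers treewidth by at most $1$, so $\tw(H_1)\geq\tw(H)-1$. This is exactly how the paper handles the check case, and with it your iteration argument goes through unchanged.
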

\begin{proof}
Consider any augmented program $\ConcProgAug$ and some reduced $\rf$-graph $H\in \RedGraphsOf{\ConcProgAug}{\mm}$, let $\event$ be a $(\seqbef\cup \rf)^+$ maximal event of $H$.
\begin{compactitem}
\item If $\op(\event)\neq \ck$, let $H'$ be the reduced $\rf$-graph obtained by removing $\event$ from $H$.
Note that $H'\in \RedGraphsOf{\ConcProgAug}{\mm}$, and $\tw(H')\geq \tw(H)-1$.
\item Otherwise, let $\event=\ck(t,x)$, for some thread $t$ and shared register $x$.
We construct a reduced $\rf$-graph $H'$ by removing from $H$
\begin{enumerate*}[label=(\roman*)]
\item $\event$, and
\item the $\rf$-edge of $\seqbef$-maximal read event $\rd(t,x)$.
\end{enumerate*}
Removing $\event$ does not decrease the treewidth of the underlying graph, since $\event$ has exactly one edge.
Then, removing the $\rf$-edge can only decrease the treewidth by $1$, and thus $\tw(H')\geq \tw(H)-1$.
\end{compactitem}
If $\tw(H')=\tw(H)-1$, we are done.
Otherwise, we repeat the process on $H'$.
Since every step reduces the size of the graph by $1$ event, and the treewidth of the graph is bounded by its size, we are guaranteed to reach a graph $H''$ such that $\tw(H'')=\tw(H)-1$.
The desired result follows.
\end{proof}

We can now combine the above insights into an algorithm.
Given any memory model $\mm\in \{ \ramm, \rmwfree{\sramm}, \wramm \}$ and an augmented concurrent program $\ConcProgAug$,
we first use \cref{lem:reduced_semantics_MSO} and \cref{lem:courcelle_theorem_exact_tw} for $k=|\ThreadDom|+|\LocationDom|$ to check if $\RedGraphsOf{\ConcProgAug}{\mm}$ contains a reduced $\rf$-graph of treewidth exactly $k+1$.
If not, then \cref{lem:reduced_tw_contiguous_models} guarantees that $\RedGraphsOf{\ConcProgAug}{\mm}$ has treewidth $\leq k$,
at which point \cref{cor:mso_satisfiability} implies that $\tm$ reachability for $\ConcProgAug$ is decidable.
Otherwise, $\ConcProgAug$ is not observationally $\rf$-robust, since the set $\RedGraphsOf{\ConcProgAug}{\scmm}$ has treewidth $\leq k$.

%% file: app_rc20.tex
\section{Extensions to $\rcmm$}\label{SEC:APP_RC20}

In this section, we formally define the $\rcmm$ memory model~\cite{Margalit2021}, and show that it is 1-$\tw$-contiguous and MSO-definable. 
As a result, \cref{thm:mso_definable_mms,lem:bounded_tw_reachability,lem:bounded_tw_diff,thm:robustness_vs_reachability,thm:obs_robustness_vs_reachability} can be extended to also include this model.

\subsection{The $\rcmm$ memory model}\label{SUBSEC:APP_RC20_MM}

The $\rcmm$ memory model is a variant of the C11 model that mixes \emph{release/acquire} and \emph{relaxed} accesses and additionally synchronizes through memory fences.
We phrase it in the notation of \cref{SUBSEC:PRELIM_MM}; the only additions are access modes on events and a new event type for fences.

\Paragraph{Access modes and fences.}
Unlike the models of \cref{tab:main_models}, $\rcmm$ distinguishes events by an access mode.
The set of access modes is $\ModDom = \{\mrlx, \mrel, \macq, \mra\}$ (relaxed, release, acquire, and acquire-release), partially ordered by the relation $\sqsubseteq$, which is the smallest reflexive-transitive relation such that $\mrlx \sqsubseteq \macq \sqsubseteq \mra$ and $\mrlx \sqsubseteq \mrel \sqsubseteq \mra$.
Each event $e$ carries a mode $\acc(e) \in \ModDom$, constrained by its operation: reads use $\{\mrlx,\macq\}$, writes use $\{\mrlx,\mrel\}$, and RMWs use any mode.
We further add \emph{fence} events, given by a label $\fc(t,o)$ with $t \in \ThreadDom$ and $o \in \{\macq,\mrel,\mra\}$.
Fences carry neither a location nor a value; for such events $\op(e) = \fc$
and $\acc(e) = o$. 
We let $\FenceDom = \{e \in \Events : \op(e) = \fc\}$, and, for a mode $o$, we write
$\Events^{\sqsupseteq o} = \{e \in \Events : \acc(e) \sqsupseteq o\}$; in particular
$\Events^{\sqsupseteq\mrel}$ and $\Events^{\sqsupseteq\macq}$ represent the release and acquire events, respectively. 
Fences participate in neither $\rf$ nor $\mo$.

\Paragraph{Synchronization and happens-before.}
$\rcmm$ uses a \emph{synchronizes-with} relation $\sw$, defined as follows.
\[\sw \triangleq \identity{\Events^{\sqsupseteq\mrel}} ; (\identity{\FenceDom};\po)^? ; \rf^+ ; (\po;\identity{\FenceDom})^? ; \identity{\Events^{\sqsupseteq\macq}}\]
This relation is used to define the $\rcmm$-equivalent of the happens-before relation: $\hbrc \triangleq (\po \cup \sw)^+$.

\Paragraph{Consistency.}
The axiomatic definition of $\rcmm$ uses the from-reads relation $\fr \triangleq \rf^{-1};\mo$ of \cref{SUBSEC:PRELIM_MM}, and is defined by the following four axioms.

{
\footnotesize
\begin{center}
\begin{tabular}{ll|ll}
$\irreflexive(\mo;\rf^?;\hbrc^?)$ & \customlabel{eq:rcwc}{(rc-write-coherence)} & $\irreflexive(\fr;\rf^?;\hbrc)$ & \customlabel{eq:rcrc}{(rc-read-coherence)} \\
$\irreflexive(\mo;\mo;\rf^{-1})$ & \ref{eq:at} & $\irreflexive(\hb)$ & \ref{eq:hbirr} \\
\end{tabular}
\end{center}
}

Given an execution graph $G = \tuple{\Events, \seqbef, \rf, \mo}$, we say that $G$ is \emph{consistent in $\rcmm$}, denoted $G \models \rcmm$, if $G$ satisfies \ref{eq:rcwc}, \ref{eq:rcrc}, \ref{eq:at}, and \ref{eq:hbirr}. 
The release/acquire fragment of $\rcmm$ (in which every read is $\macq$, every write is $\mrel$, and there are no relaxed accesses or fences) recovers $\ramm$~\cite{Margalit2021}.

\Paragraph{1-$\tw$-contiguity.} 
Note, since $\rcmm$ requires \ref{eq:hbirr}, the model is also $(\seqbef \cup \rf)$-contiguous, and thus 1-$\tw$-contiguous by \cref{lem:tw_contiguous_models}.

\subsection{The MSO-definability of $\rcmm$}\label{SUBSEC:APP_RC20_MSO}

Here we show that, for any concurrent program $\ConcProg$, the set $\RFGraphsOf{\ConcProg}{\rcmm}$ is MSO-definable, therefore extending \cref{thm:mso_definable_mms} to $\rcmm$.

We follow the structure of the MSO formula $\msora$ for $\ramm$.
The formula $\msorc$ is a conjunction of three formulas: $\msohbirr$ (\cref{eq:mso_hbirr}), enforcing \ref{eq:hbirr}; $\msowat$ (\cref{eq:mso_wat}), enforcing \ref{eq:wat}; and $\msorcwcrc$, enforcing \ref{eq:rcwc}, \ref{eq:rcrc}, and \ref{eq:at} under the assumption that \ref{eq:wat} already holds.
The formulas $\msohbirr$ and $\msowat$ can be lifted verbatim from \cref{SUBSEC:MSO_DEFINABLE_MM}.

Compared to $\msora$, two changes are required to $\msorcwcrc$.
First, not every $\rf$-edge induces synchronization in $\rcmm$; instead, synchronization is captured by the relation $\sw$, which we express in MSO and use to define a happens-before formula $\msohbrc$ for $\hbrc$.
Second, the coherence formula must infer $\mo$-orderings from the weaker
premises $\rf^?;\hbrc$ of \ref{eq:rcwc} and \ref{eq:rcrc}, instead of $\hb$.

\SubParagraph{Expressing $\msohbrc$.}
We extend the vocabulary of MSO formulas in the natural way: events may carry
the additional operation $\fc$, and the label predicate $\acc(\event) = o$
exposes the access mode $o \in \ModDom$ of an event $\event$.
The relation $\sw$ is expressed by the following formula, where the existentially
quantified events $\event_3, \event_4$ capture the endpoints of the $\rf^+$ path,
which coincide with $\event_1$ and $\event_2$ unless the latter are fences.
\begin{align}
\msosw(\event_1,\event_2) \triangleq & \ \acc(\event_1) \sqsupseteq \mrel \land \acc(\event_2) \sqsupseteq \macq \ \land \label{eq:mso_sw}\\
& \exists \event_3,\event_4. \big( \nonumber\\
&\quad (\event_1 = \event_3 \lor (\op(\event_1)=\fc \land \varphi_{\seqbef^+}(\event_1,\event_3))) \ \land \nonumber\\
&\quad (\event_4 = \event_2 \lor (\op(\event_2)=\fc \land \varphi_{\seqbef^+}(\event_4,\event_2))) \ \land \nonumber\\
&\quad \varphi_{\rf^+}(\event_3,\event_4)\big)\nonumber
\end{align}
We then express that $\event_1 \LTo{\hbrc} \event_2$ in the formula
$\msohbrc = \varphi_{R^+}$, where $\varphi_{R^+}$ is the transitive-closure
formula (\cref{eq:mso_tc}) and $R(\event_1, \event_2) = \msosw(\event_1,\event_2) \lor \seqbef(\event_1,\event_2)$.

\SubParagraph{Expressing $\msorcwcrc$.}
Similarly to $\ramm$, the coherence and atomicity axioms of $\rcmm$ involve the $\mo$ relation, which is not present in an $\rf$-graph and thus has to be guessed.
We use the same approach as \cref{SUBSEC:MSO_DEFINABLE_MM}: assuming \ref{eq:wat},
the write and RMW events decompose into disjoint RMW chains, and it suffices to
infer $\mo$-orderings between the writes heading these chains and require their
acyclicity.

The difference is that in $\rcmm$, the axioms \ref{eq:rcwc} and \ref{eq:rcrc} order writes already when the conflicting triplet $\tuple{\event_3, \event_4, \event_5}$ of $\msotriplet$ satisfies the weaker condition $\event_3 \LTo{\rf^?;\hbrc} \event_5$, rather than $\event_3 \LTo{\hb} \event_5$.
Given the MSO-definability of $\hbrc$, this condition is clearly expressible
in MSO.
\[
\varphi_{(\rf^?;\hbrc)}(\event_1,\event_2) \triangleq \msohbrc(\event_1,\event_2)\lor \exists \event_3.\left(\rf(\event_1, \event_3) \land \msohbrc(\event_3,\event_2)\right)
\numberthis\label{eq:mso_rf_hbrc}
\]
The formula inferring $\mo$-orderings between write events follows $\msopmora$
(\cref{eq:mso_pmora}), but adapted to the $\rf^?;\hbrc$ conditions of \ref{eq:rcwc} and \ref{eq:rcrc}.
In particular, since $\rf \not\subseteq \hbrc$, an ordering between the heads $\event_1, \event_2$ may now also arise from a constraint of the form $\event_6 \LTo{\rf^?;\hbrc} \event_2$, where $\event_6$ is an event in the RMW-chain headed by $\event_1$.
\begin{align}
\msopmorc(\event_1,&\event_2) \triangleq  \ \op(\event_1) = \wt \land \op(\event_2) = \wt \land \lloc(\event_1)=\lloc(\event_2) \ \land \nonumber\\
& \big[\exists \event_6. \big(\varphi_{\rf^*}(\event_1, \event_6) \land \varphi_{(\rf^?;\hbrc)}(\event_6,\event_2)\big)
\lor \exists \event_3, \event_4, \event_5. \big(\msotriplet(\event_3,\event_4,\event_5) \land \nonumber\\
& \varphi_{(\rf^?;\hbrc)}(\event_3,\event_5)  \land \lnot \varphi_{\rf^+}(\event_3,\event_4) \land \varphi_{\rf^*}(\event_1, \event_3) \land \varphi_{\rf^*}(\event_2, \event_4) \big) \big]\numberthis\label{eq:mso_pmorc}
\end{align}

Finally, we express coherence and atomicity as the requirement that $\msopmorc$
is acyclic, using the transitive-closure formula $\varphi_{R^+}$ from
\cref{eq:mso_tc} with $R=\msopmorc$,
\[
\msorcwcrc \triangleq \lnot \exists \event. \varphi_{R^+}(\event,\event)
\numberthis\label{eq:mso_rc_wc_rc}
\]
and define
\[
\msorc \triangleq \msohbirr \land \msowat \land \msorcwcrc
\numberthis\label{eq:mso_rc}
\]

\smallskip
\begin{restatable}{lemma}{lemrcmso}\label{lem:rc_mso}
For any $\rf$-graph $G=\tuple{\Events, \seqbef, \rf}$, we have $G\models \msorc$ iff $G\models \rcmm$.
\end{restatable}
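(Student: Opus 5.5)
The plan mirrors the proof of \cref{lem:ra_mso}, with $\hb$ replaced by $\hbrc$ and the premise $\hb$ of the $\mo$-inference replaced by $\rf^?;\hbrc$. Two auxiliary facts come first. The formula $\msosw$ of \cref{eq:mso_sw} defines exactly $\sw$: unfolding $(\identity{\FenceDom};\po)^?$ gives either $\event_3=\event_1$ or $\event_1$ is a fence with $\event_1\LTo{\po}\event_3$, and symmetrically at the acquire end. Hence $\msohbrc$ defines $\hbrc$ by the transitive-closure formula \cref{eq:mso_tc}, and $\varphi_{(\rf^?;\hbrc)}$ defines $\rf^?;\hbrc$. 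As in the $\ramm$ case, $\msohbirr$ and $\msowat$ capture \ref{eq:hbirr} and \ref{eq:wat} independently of $\mo$. Under \ref{eq:wat}, the writes and RMWs split into RMW chains whose heads are characterized by $\varphi_{\rf^*}$. I will take \ref{eq:wat} to be part of the intended $\rcmm$ semantics, since the proof needs it and \ref{eq:at} together with totality of $\mo$ forces it.

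\emph{Soundness} ($G\models\rcmm\Rightarrow G\models\msorc$). Fix a witnessing $\mo$ and show that $\msopmorc(\event_1,\event_2)$ implies $\event_1\LTo{\mo}\event_2$; acyclicity then follows from $\mo$ being a strict order.
\begin{compactitem}
\item \emph{First disjunct.} Suppose $\event_6$ lies in the chain of $\event_1$ and $\event_6\LTo{\rf^?;\hbrc}\event_2$. If $\event_2\LTo{\mo}\event_1$ held, then $\event_2\LTo{\mo}\event_6$, since chains are $\mo$-contiguous and ordered by $\rf^+$ by \ref{eq:at}. This closes a cycle contradicting \ref{eq:rcwc}. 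Totality of $\mo_x$ then gives $\event_1\LTo{\mo}\event_2$.
\item \emph{Second disjunct.} Take a conflicting triplet with $\event_3\LTo{\rf^?;\hbrc}\event_5$ and $\event_4\LTo{\rf}\event_5$. If $\event_4\LTo{\mo}\event_3$, then $\event_5\LTo{\fr}\event_3$, which violates \ref{eq:rcrc}, or violates \ref{eq:rcwc}/\ref{eq:at} when $\event_3=\event_5$ is an RMW. So $\event_3\LTo{\mo}\event_4$. Since $\event_3$ does not precede $\event_4$ via $\rf^+$, the two events lie in different chains or in reverse order within one chain. The latter case contradicts chain $\mo$-contiguity, so the heads satisfy $\event_1\LTo{\mo}\event_2$.
\end{compactitem}

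\emph{Completeness} ($G\models\msorc\Rightarrow G\models\rcmm$). For each $x$, linearize the heads consistently with the acyclic $\msopmorc$ and insert each chain right after its head in $\rf^+$ order. Then \ref{eq:at} holds by construction. For \ref{eq:rcwc}, take $a\LTo{\mo}b$ and $b\LTo{\rf^?;\hbrc^?}a$ with $a,b$ writes or RMWs. If $a$ and $b$ lie in the same chain, this contradicts \ref{eq:hbirr}, because within a chain $\mo$ follows $\rf^+\subseteq\hb$. Otherwise $b\LTo{\rf^?;\hbrc^?}a$ yields $\msopmorc(\mathrm{head}(b),\mathrm{head}(a))$ by the first disjunct, and this contradicts the chosen linearization. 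Here I must check that the $\rf^?$ starting at $b$ and the $\hbrc$ landing at the non-head $a$ are covered: I would factor through $\mathrm{head}(a)$ or $a$'s chain predecessor using $\rf\subseteq$ the chain. For \ref{eq:rcrc}, take $r\LTo{\fr}w$ and $w\LTo{\rf^?;\hbrc}r$, and let $w'\LTo{\rf}r$. The triplet $\tuple{w,w',r}$ then forces $w$ before $w'$ in $\mo$ through the second disjunct, or through chain order. This contradicts $w'\LTo{\mo}w$.

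\emph{Main obstacle.} The delicate step is the case analysis in the \ref{eq:rcwc} direction of completeness. $\msopmorc$ only speaks about heads, while the violating pair may sit anywhere inside chains, and $\rf\not\subseteq\hbrc$ means the first disjunct must capture every chain-internal entry point. I would handle this by proving a small lemma: for chains $C_1\neq C_2$, if some $c_1\in C_1$ and $c_2\in C_2$ satisfy $c_1\LTo{\rf^?;\hbrc^?}c_2$, then $\msopmorc(\mathrm{head}(C_1),\mathrm{head}(C_2))$. The proof takes $\event_6=c_1$ when $c_2$ is a head. Otherwise it replaces $c_2$ by the triplet $\tuple{c_1,\mathrm{pred}(c_2),c_2}$ and applies the second disjunct.
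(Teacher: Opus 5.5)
Your proof is correct and follows the paper's route. The paper only says the argument goes along the lines of \cref{lem:ra_mso}, with $\hb$ replaced by $\hbrc$ and the triplet premise weakened to $\rf^?;\hbrc$; your soundness/completeness split, chain-by-chain construction of $\mo$, and auxiliary lemma for non-head targets (which the paper leaves implicit) spell that argument out. One small correction: deriving \ref{eq:wat} and the $\mo$-contiguity of RMW chains needs more than \ref{eq:at} and totality. You also need the $\irreflexive(\mo;\rf)$ instance of \ref{eq:rcwc}, which rules out an RMW being $\mo$-before its own writer and is available among the $\rcmm$ axioms.
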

\begin{proof}
The proof follows the lines of \cref{lem:ra_mso}, replacing $\hb$ by $\hbrc$
and the premise $\event_3 \LTo{\hb} \event_5$ of each conflicting triplet by
$\event_3 \LTo{\rf^?;\hbrc} \event_5$, as per \ref{eq:rcwc} and \ref{eq:rcrc}.
\end{proof}